\documentclass[a4paper,onecolumn,11pt]{quantumarticle}
\pdfoutput=1
\usepackage[utf8]{inputenc}
\usepackage[english]{babel}
\usepackage[T1]{fontenc}
\usepackage{amsmath}
\usepackage[dvipsnames,svgnames,x11names,hyperref]{xcolor}
\usepackage{amsthm}
\usepackage{soul}
\usepackage{comment}
\usepackage{thm-restate}
\usepackage[margin=1in]{geometry}
\usepackage{thmtools}
\usepackage[colorlinks]{hyperref}
\hypersetup{breaklinks=true,colorlinks={true}}
\usepackage{amssymb,amsfonts,amsmath,amsthm,amscd,dsfont,mathrsfs}
\usepackage{complexity}
\usepackage{tikz}
\usetikzlibrary{decorations.pathreplacing,backgrounds,calc}
\usepackage{tikz-cd}
\usepackage{multirow}
\usepackage{braket}
\usepackage{cleveref} 
\usepackage{xcolor}
\usepackage{caption}

\usepackage{booktabs, multirow}
\usepackage[utf8]{inputenc}
\usepackage[english]{babel}
\usepackage[T1]{fontenc}
\usepackage{amsmath}
\usepackage{hyperref}

\usepackage[style=alphabetic,minalphanames=3,maxalphanames=4,maxnames=99,backref=true]{biblatex}

\usepackage{url}
\usepackage{bbm}
\bibliography{references}

\usepackage{xurl}
\usepackage{etoolbox}

\appto{\bibsetup}{\sloppy}

  \theoremstyle{plain}
  \newtheorem{theorem}{Theorem}[section]
  \newtheorem{lemma}[theorem]{Lemma}

  \newtheorem{corollary}[theorem]{Corollary}
  \newtheorem{proposition}{Proposition}
  \newtheorem{definition}[theorem]{Definition}

    \newtheorem{fact}[theorem]{Fact}

 \newtheorem*{theorem*}{Theorem}

\definecolor{DarkGreen}{RGB}{0,150,0}

\newcommand{\DarkGreen}[1]{\textcolor{DarkGreen}{#1}}

\newcommand{\blue}[1]{{\color{blue}#1}}
\newcommand{\red}[1]{{\color{red}#1}}

\newcommand{\gap}{\mathrm{gap}}
\renewcommand{\deg}{\mathrm{deg}}

\newcommand{\proj}[1]{\ket{#1}\!\bra{#1}}

\renewcommand{\C}{\mathbb{C}}

\newcommand{\eps}{\varepsilon}
\newcommand{\kLH}{$\gamma$-$2$-$\mathrm{LH}$}

\renewcommand{\cP}{\mathcal{P}}

\newcommand{\snp}{{(\mathrm{snap})}}

\begin{document}

\title{Rounding Almost Commuting Hamiltonians}

\author{Islam Faisal}
\email{islam@bu.edu}
\affiliation{Boston University, Boston, MA, USA}
\author{Anand Natarajan}
\email{anandn@mit.edu}
\thanks{partially supported by NSF CAREER Award 2339948.}
\affiliation{Massachusetts Institute of Technology, Cambridge, MA, USA}
\author{Alexander Poremba}
\email{poremba@bu.edu}
\affiliation{Boston University, Boston, MA, USA}
\maketitle

\begin{abstract}
Commuting Hamiltonians lie at the boundary between classical constraint satisfaction and quantum many-body physics, exhibiting rich quantum structure while remaining more tractable than general noncommuting models.  In contrast, physical Hamiltonians are rarely exactly commuting,
which motivates the study of \emph{almost commuting} Hamiltonians. Despite their relevance, the implications of approximate commutation are only poorly understood.

In this work, we show how to efficiently approximate any almost commuting $2$-local qubit Hamiltonian by a commuting one: we give a new locality-preserving \emph{algorithmic rounding technique} that maps any $2$-local Hamiltonian $H=\sum_{i=1}^m h_i$ with $\|[h_i,h_j]\| \leq \eps$ to a nearby Hamiltonian
$\hat{H}$ whose terms pair-wise commute, and which is within overall distance
$\|H-\hat{H}\| = O(m\,\eps^{1/6})$.  

As a consequence, we show that $\delta$-approximations to the ground energy for
$\eps$-almost commuting $2$-local qubit Hamiltonians
lie in $\mathsf{NP}$ when $\delta \gg m\eps^{1/6}$, extending the classical containment well
beyond the commuting setting.   Finally, we present two applications of our rounding framework: Gibbs sampling and fast Hamiltonian simulation for almost commuting systems.
\end{abstract}

\section{Introduction}
\label{sec:introduction}
Commuting local Hamiltonians have been studied ~\cite{bravyi2004commutativeversionklocalhamiltonian,aharonov2011complexity,schuch2011complexitycommutinghamiltonianssquare,AKV,irani2023commutinglocalhamiltonianproblem,bostanci2025commutinglocalhamiltonians2d} across condensed matter physics, quantum information theory, and computational complexity. From the perspective of constraint satisfaction, commuting local Hamiltonians sit at a boundary between classical and quantum problems.  When all local terms commute, the ground-state problem reduces to the simultaneous satisfaction of a family of local constraints~\cite{bravyi2004commutativeversionklocalhamiltonian}, analogous to a classical constraint satisfaction problem (CSP). Indeed, many commuting Hamiltonians admit ground states that can be described efficiently in \emph{classical} terms, such as product states, stabilizer states, or tensor network states of bounded bond dimension~\cite{Gharibian_2015}. The computational complexity of commuting Hamiltonians is thus significantly more tractable than that of general noncommuting models, with many cases lying in the classical complexity class $\mathsf{NP}$~\cite{bravyi2004commutativeversionklocalhamiltonian,irani2023commutinglocalhamiltonianproblem,bostanci2025commutinglocalhamiltonians2d}.
In condensed matter physics, commuting Hamiltonians serve as exactly solvable models that capture the essential structure of entire phases of matter.  Prominent examples include stabilizer Hamiltonians for quantum error-correcting codes (e.g., the toric code~\cite{Kitaev_2003}), commuting-projector models for topological order~\cite{Levin_2005}, and fracton models with constrained dynamics~\cite{Vijay_2016}. 
Formally, the $k$-local commuting Hamiltonian problem consists of Hamiltonians of the form
\begin{align}
H=\sum_{i=1}^m h_i \quad \text{ such that } \quad [h_i,h_j]=0, \quad\quad  \forall i,j=1,\dots,m \,
\end{align}
where all of the terms pair-wise commute\footnote{The \emph{commutator} of two operators $A$ and $B$ is defined as $[A,B] = AB - BA$.} and each term is of unit norm $\|h_i\|\leq 1$ acting on at most $k$ qubits.
Commuting Hamiltonians retain intrinsically quantum features that sharply distinguish them from classical CSPs: although all of the Hamiltonian terms $\{h_i\}_{i=1}^m$ can be simultaneously diagonalized, the common eigenbasis may not be a product basis, and the ground states may exhibit long-range entanglement and topological order, such as the toric code~\cite{Kitaev_2003}. From a complexity-theoretic perspective, this places commuting Hamiltonians in an intermediate regime between classical satisfiability and fully quantum local Hamiltonians, providing a test ground for difficult problems such as the quantum PCP conjecture~\cite{aharonov2011complexity,Hastings2012}, the NLTS conjecture~\cite{Anshu_2023}, Gibbs state preparation~\cite{kastoryano2016quantumgibbssamplerscommuting} and fast thermalization~\cite{chen2023fastthermalizationeigenstatethermalization}.

\paragraph{Almost commuting Hamiltonians.}
Despite their utility, commuting Hamiltonians represent a highly nongeneric limit.  Physical systems are rarely exactly commuting, and even infinitesimal noncommuting perturbations can produce dramatic qualitative changes. 
 This motivates the study of \emph{almost commuting} Hamiltonians, in 
 which local terms fail to commute but only do so weakly; for example, when the Hamiltonian terms pair-wise "$\eps$-almost commute" in the sense that
\begin{align}
\|[h_i,h_j]\| \leq \eps, \quad\quad \forall i,j=1,\dots,m.   
\end{align}
This class of Hamiltonians captures a much richer range of physical phenomena, including 
entanglement generation, slow dynamics, and emergent conservation laws~\cite{Sachdev2011,Mbeng_2024}. 
A simple and illustrative example is the transverse-field Ising model 
in the weak-field regime, described by the $2$-local qubit Hamiltonian
\begin{align}\label{eq:ising}
H = -\sum_{\langle i,j \rangle} Z_i Z_j - h \sum_i X_i \, ,
\end{align}
with $h \ll 1$.  At $h=0$, the Hamiltonian consists of exactly 
commuting terms and reduces to a classical Ising model with 
product-state ground states.  Turning on a \emph{weak} transverse field 
introduces local noncommuting terms whose pairwise commutator norms 
$\|[Z_i Z_j,hX_k]\|$
scale with $h$ for $i=j$ or $j=k$, resulting in an almost commuting Hamiltonian. Generally speaking, even
mild noncommutativity can cause the underlying physics to 
change qualitatively: the ground state becomes entangled, quantum fluctuations lift degeneracies, and the system supports nontrivial correlations which are absent in the commuting case~\cite{Mbeng_2024}.  This illustrates how almost commuting Hamiltonians interpolate between classical constraint satisfaction and genuinely quantum many-body behavior.

The weak transverse-field Ising model might suggest that any almost commuting Hamiltonian can easily be identified as a simple perturbation of an underlying commuting model. However, this intuition can be misleading in general. While the perturbation theory~\cite{Kato:1966:PTL} perspective is useful in identifying approximately commuting structure in very special or integrable models, this relies on strong assumptions---such as a known commuting parent Hamiltonian or small, easily identifiable model parameters---which may not be present in generic local Hamiltonians. 

From a computational standpoint, almost commuting Hamiltonians occupy a compelling intermediate regime.  At first sight, they appear no longer classically reducible; and yet they are far from fully quantum Hamiltonians.  Understanding whether their ground states and low-energy properties retain some of the tractability of commuting models—or instead exhibit the full hardness of general quantum constraint satisfaction problems—remains an intriguing open question.

\paragraph{Making almost commuting matrices commute.}

A naive idea for handling almost commuting systems, in general, is to simply discard Hamiltonian terms until the remainder commute; for example, one could consider the anti-commutation graph whose vertices correspond to Hamiltonian terms and whose edges connect pairs that fail to commute, and then try to find a \emph{maximum independent set} (or some efficient approximation)~\cite{10.1007/BF01994876}. The resulting Hamiltonian, obtained by retaining only the corresponding subset of terms, is exactly commuting by construction. However, this approach does not exploit the \emph{magnitude} of the noncommutativity: its error is determined solely by how many terms are removed. In the worst case, even when all $\varepsilon$-almost commute for a small $\varepsilon >0$, the procedure may discard a constant fraction of the terms, leading to a vacous approximation error that scales with $m$ (the number of Hamiltonian terms) with no improvement as $\varepsilon \to 0$. 
In contrast, a meaningful rounding procedure for almost commuting Hamiltonians should be \emph{$\varepsilon$-sensitive}, with an error that decreases as the commutators become smaller. This suggests that an $\varepsilon$-sensitive rounding procedure must do more than just \emph{pruning} a noncommuting interaction graph: it must exploit the algebraic structure of the Hamiltonian terms themselves.

A more sophisticated angle of attack for dealing with a set of \emph{almost commuting} matrices is to explicitly find nearby \emph{commuting} matrices; for example, starting with a pair
\begin{align}
A,B \in \mathbb{C}^{d \times d}\quad \text{ with } \quad \|[A,B]\| \leq \eps    
\end{align}
one might try to "round" to $\hat{A},\hat{B}$ such that $[\hat A,\hat B]=0$ with $\|\hat A-A\| \leq \delta(\eps)$ and
$\|\hat B-B\| \leq \delta(\eps)$, for some $\delta(\eps)$ such that $\delta(\eps) \rightarrow 0$ as $\eps \rightarrow 0$.
The mathematical problem of rounding almost commuting matrices has a long and subtle history.  For pairs of 
Hermitian matrices, Lin's theorem~\cite{Lin1997} guarantees that rounding is indeed possible, and
quantitative bounds on $\delta(\eps)$ were later given by Hastings~\cite{Hastings_2009,herrera2022hastingsapproachlinstheorem}.  
Surprisingly, the same is not possible for \emph{triplets} of Hermitian matrices, as shown by Davidson~\cite{Davidson1985} and also by 
Hastings and Loring~\cite{Hastings_2010} who gave further topological obstructions towards rounding in terms of the \emph{Bott index}. These results suggest that large collections of almost commuting Hermitian matrices can, in general, not be rounded to nearby commuting matrices unless further structure is enforced, hindering a direct application to almost commuting Hamiltonians. Nevertheless, many physically relevant non-commuting local Hamiltonians are semi-classical, and may very well be close to a commuting Hamiltionian, as the weak transverse-field Ising model in \Cref{eq:ising} suggests. This raises the question:
\begin{center}
\emph{Is it possible to round generic almost commuting Hamiltonians
to nearby commuting Hamiltonians?}
\end{center}

Progress on this question could shed new light on the extent to which weak noncommutativity is essential for the observed physics, versus merely a perturbation of an underlying commuting structure. Moreover, from a complexity perspective, it would allow us to better understand the boundary between classical satisfiability and inherently quantum optimization problems.

\subsection{Our contributions}

In this work, we overcome many prior limitations on rounding 
approximately commuting Hermitian matrices which have so far hindered 
applications to many-body Hamiltonians. We develop the first 
\emph{locality-preserving algorithmic rounding technique} that allows one to 
simultaneously
round large collections of almost commuting Hermitian matrices---avoiding the regimes where 
general no-go results apply---by exploiting the special structure of physical many-body Hamiltonians:
\begin{itemize}
\item (\emph{Locality:}) Interactions in physical many-body Hamiltonians tend 
to be \emph{local} in the sense that they involve a constant number of 
particles at a time; consequently, each term in the Hamiltonian is 
actually a 
highly non-generic Hermitian operator: it only acts non-trivially on a 
few sub-systems at a time and is equal to the identity everywhere else.

\item (\emph{Dimensionality:}) Many-body Hamiltonians tend to 
involve particles whose Hilbert space carries a fixed local dimension. 
In the case of two-level qubit systems (or "spin systems"), such constrained dimensionality can give rise to very useful properties, such as the fact that commutation between Hermitian single-qubit operators becomes 
\emph{transitive}: if $A,B$ commute and $B,C$ 
commute, then $A$ and $C$ must also commute, provided that $B$ is not 
a multiple of the identity.\footnote{We prove this statement in \Cref{lem:transitivity}.} In higher dimensions, this property can fail due to \emph{degenerate eigenspaces}.\footnote{For example, suppose that the matrices $A$ and $C$ act differently inside the same degenerate subspace of the matrix $B$. Then, both matrices can commute with $B$ but they need not commute with each other.}
\end{itemize}

Our rounding technique exploits these physical constraints, 
and allows us to
efficiently approximate any almost commuting $2$-local 
Hamiltonian by a nearby commuting one, while preserving locality.
This is captured by our main 
result, which we informally state below.

\begin{restatable}[Formally,~\Cref{thm:transforming-hamiltonian}]{informaltheorem}{InformalTheoremOne}
Let $H=\sum_{i=1}^m h_i$ be an $\eps$-almost commuting $2$-local 
Hamiltonian on $n$ qubits. Then, there exists a deterministic classical algorithm that, given as input a description of $H$, runs 
in linear time and outputs a nearby commuting 2-local Hamiltonian $\hat{H}=\sum_{i=1}^m 
\hat{h}_i$ such that $\|h_i - \hat{h}_i\| \leq O(\eps^{1/6})$, for all 
$i \in [m]$, and which is within overall distance 
$\|\hat{H} - H\| \leq O(m \,\eps^{1/6})$.
\end{restatable}
To the best of our knowledge, this yields the first \emph{algorithmic} rounding technique for almost commuting $2$-local qubit Hamiltonians which produces an explicit, quantitative error bound that vanishes as $\varepsilon \rightarrow 0$, unlike naive methods that fail to take the magnitude of the noncommutativity into account
\footnote{For a detailed comparison with the \emph{non}-constructive rounding in~\cite{arad2011notepartialnogotheorem}, see Section~\ref{sec:relatedwork}.}.
As a direct consequence of our rounding result, we show the following:

\begin{restatable}[Formally, ~\Cref{corollary:NP-Containment}]{informaltheorem}{InformalContainment}
    The $\eps$-almost commuting $2$-local qubit Hamiltonian problem with $m$ terms and energy promise gap\footnote{Here we mean the difference in energy thresholds for the YES and NO cases of the local Hamiltonian problem. Note that in the formal satement of \Cref{corollary:NP-Containment} we use the \emph{relative} promise gap $\gamma$, which is equal to the energy promise gap divided by $m$.} $\delta$ is contained in $\NP$ when $\delta \gg m \eps^{1/6}$.
\end{restatable}
This extends the classical $\mathsf{NP}$ containment result of Bravyi and Vyalyi~\cite{bravyi2004commutativeversionklocalhamiltonian} well
beyond the commuting setting.
In addition, the above theorem also implies a new no-go result in the context of quantum PCPs~\cite{aharonov2011complexity,Hastings2012} akin to~\cite{arad2011notepartialnogotheorem}. Specifically, it suggests that any qPCP-hard family of $2$-local qubit Hamiltonians must exhibit some moderate amount of noncommutativity: assuming $\mathsf{NP} \neq \mathsf{QMA}$, any
family of $2$-local qubit Hamiltonians for which the $\gamma$-LH problem is $\QMA$-hard for \emph{constant} $\gamma$ must allow for pairwise commutator norms of $\omega(\frac{1}{m^6})$.

More broadly, this places almost commuting Hamiltonians in an intermediate regime between classical constraint satisfaction and fully quantum optimization. It shows that, at least for $2$-local qubit systems, genuine $\mathsf{QMA}$-hardness requires a significant departure from the structure of commuting models, thereby linking a complexity-theoretic transition to a quantitivative, operator-algebraic measure of noncommutativity.

\begin{figure}[t]
\centering
\begin{tikzpicture}[
    scale=1.1,
    node distance=1.2cm,
    spin/.style={circle, draw=black, fill=white, inner sep=2pt},
    term/.style={rounded corners, draw=black, very thick},
    approx arrow/.style={->, very thick},
    bg lattice/.style={gray!40, thick},
    fg lattice/.style={black, very thick}
]

\begin{scope}[shift={(-3,0)}]

\node at (1.5,3.8) {$\|[h_i,h_j]\|\le \varepsilon$};

\foreach \x in {0,1,2,3} {
  \foreach \y in {0,1,2,3} {
    \node[spin] (s\x\y) at (\x,\y) {};
  }
}

\foreach \x in {0,1,2} {
  \foreach \y in {0,1,2,3} {
    \draw[fg lattice] (s\x\y) -- (s\the\numexpr\x+1\relax\y);
  }
}
\foreach \x in {0,1,2,3} {
  \foreach \y in {0,1,2} {
    \draw[fg lattice] (s\x\y) -- (s\x\the\numexpr\y+1\relax);
  }
}

\node[term, draw=blue!70, minimum width=1.6cm, minimum height=0.9cm]
  at (1.5,1.8) {};

\node[term, draw=red!70,
      minimum width=1.6cm, minimum height=1.1cm,
      rotate=90]
  at (2.0,1.45) {};

\node at (0.55,1.55) {$h_i$};
\node at (2.7,0.7) {$h_j$};

\node at (1.45,-0.4) {\small Almost commuting Hamiltonian};

\end{scope}


\draw[approx arrow] (1,1.8) -- (2,1.8);
\node at (1.5,2.3) {\small Rounding};

\begin{scope}[shift={(3,0)}]

\node at (1.5,3.8) {$[\hat h_i,\hat h_j]=0$};

\foreach \x in {0,1,2,3} {
  \foreach \y in {0,1,2,3} {
    \node[spin, fill=gray!20] (t\x\y) at (\x+0.15,\y) {};
  }
}

\foreach \x in {0,1,2} {
  \foreach \y in {0,1,2,3} {
    \draw[bg lattice] (t\x\y) -- (t\the\numexpr\x+1\relax\y);
  }
}
\foreach \x in {0,1,2,3} {
  \foreach \y in {0,1,2} {
    \draw[bg lattice] (t\x\y) -- (t\x\the\numexpr\y+1\relax);
  }
}

\node[term, draw=blue!70, fill=blue!5, fill opacity=0.2,
      minimum width=1.6cm, minimum height=0.9cm]
  at (1.65,1.8) {};

\node[term, draw=red!70, fill=red!5, fill opacity=0.2,
      minimum width=2.1cm, minimum height=1.1cm,
      rotate=90]
  at (2.15,1.55) {};

\node at (0.7,1.55) {$\hat h_i$};
\node at (2.9,0.7) {$\hat h_j$};

\node at (1.67,-0.4) {\small Nearby commuting Hamiltonian};

\end{scope}

\end{tikzpicture}
\caption{Illustration of our locality-preserving rounding transformation. 
Left: an almost commuting Hamiltonian $H=\sum_{i=1}^m h_i$ on a 2D lattice. 
Right: a nearby commuting Hamiltonian $\hat{H}=\sum_{i=1}^m \hat{h}_i$ of exactly the same locality. We remark that our rounding map does not require geometric locality.}
\end{figure}

\subsection{Applications}

As an application of our rounding technique for almost commuting quantum many-body systems, we present two algorithmic applications; the first one in the context of quantum Gibbs sampling and the second in the context of Hamiltonian simulation.

\paragraph{Quantum Gibbs sampling.}

The Gibbs ensemble connects microscopic descriptions of physical systems with their macroscopic thermodynamic behavior, allowing one to compute quantities such as free energies, effective partition functions, and equilibrium expectation values~\cite{PathriaBeale2011,Chandler1987,Callen1985,LandauLifshitz1980,Sachdev2011,AltlandSimons2010,Mermin1965,Tuckerman2010,LandauBinder2014}. Given a local Hamiltonian $H$ and an inverse temperature $\beta > 0$, the Gibbs state is given by
\begin{align}\label{eq:gibbs-ex}
\rho_\beta(H) = \frac{e^{-\beta H}}{\mathrm{Tr}\left[e^{-\beta 
H}\right]}.
\end{align}
A recent line of work~\cite{Temme_2011,GilyenThermal23a,chen2023efficient,BCL24,rajakumar2024gibbs} has proposed \textit{quantum} algorithms for preparing such thermal states. Although these methods can offer theoretical speedups~\cite{BCL24,rajakumar2024gibbs}, their performance typically hinges on either mixing-related characteristics or temperature-dependent quantities that are challenging to estimate or control in realistic settings.

A prominent category of Gibbs sampling tasks---one which typically admits polynomial-time algorithms with provable guarantees~\cite{hwang2025gibbsstatepreparationcommuting,schmidhuber2025hamiltoniandecodedquantuminterferometry}---is the \emph{commuting case}, where $H$ is a \emph{commuting} Hamiltonian.
By leveraging the \emph{Structure 
Lemma}~\cite{bravyi2004commutativeversionklocalhamiltonian}, recent 
work~\cite{hwang2025gibbsstatepreparationcommuting} showed that Gibbs 
sampling for large classes of commuting Hamiltonians even reduces to 
\emph{Gibbs sampling} for certain \emph{classical} Hamiltonians; these 
in turn can admit powerful classical methods such as Swendsen-Wang 
dynamics~\cite{PhysRevLett.58.86,feng2022swendsenwangdynamicsferromagneticising}. While the commuting case has been widely studied, only little is known about the non-commuting case, let alone the \emph{almost commuting} case.\footnote{
We remark that Schmidhuber et 
al.~\cite{schmidhuber2025hamiltoniandecodedquantuminterferometry} 
recently studied another form of "almost commuting" Hamiltonians;
namely, signed Pauli Hamiltonians with a \emph{sparse} anti-commutation 
graph. In their work, they provide evidence that Gibbs sampling and 
optimization for such almost commuting 
systems is classically intractible, hinting at potential quantum 
advantage of the HDQI algorithm.
However, their notion of almost commuting Hamiltonians is quite different from 
ours; in particular, our rounding algorithm does not seem to apply in their setting.
}
Currently, all existing quantum Gibbs sampling methods fail to take 
such almost commuting structure into account; in particular, the 
techniques for the commuting case are known to break down in the presence of (even mild) non-commuting interactions. This raises the question: 
\emph{is it possible to extend quantum Gibbs sampling techniques for commuting Hamiltonians towards almost-commuting Hamiltonians?}

In \Cref{sec:Gibbs}, we give an affirmative answer to this question: we 
use our rounding technique to reduce Gibbs sampling for certain almost 
commuting $2$-local Hamiltonians to Gibbs sampling to a nearby 
commuting Hamiltonian, thereby allowing one to once again make use of 
the full toolbox for the commuting case~\cite{hwang2025gibbsstatepreparationcommuting,schmidhuber2025hamiltoniandecodedquantuminterferometry}.
Moreover, we show that a standard continuity bound on the stability of the Gibbs state suffices to ensure 
that the reduction is sound, and indeed approximately samples from the 
target state in \Cref{eq:gibbs-ex}.

\paragraph{Faster Hamiltonian simulation.}
Simulating quantum-mechanical systems is one of the central tasks in quantum algorithms.
It is well known that Trotter-based Hamiltonian simulation methods can perform better in the presence of commutativity in the Hamiltonian. In fact, for an exactly commuting Hamiltonian, the runtime of Hamiltonian simulation can be logarithmic in the simulation time $t$ (as explained in \Cref{lem:com-ham-fast}), unlike the linear or worse scaling in the general case.
This makes commutativity a useful algorithmic resource for real-time dynamics, not only a simplifying structural assumption for ground-state problems.
For Hamiltonians that do not exactly commute, prior investigations~\cite{childs2021theory} have shown that commutator bounds can lead to faster Trotter-based simulation algorithms.  Our setting is complementary: rather than analyzing the Trotter error directly, we first round the Hamiltonian to a nearby commuting one and then simulate the remaining discrepancy separately.

In \Cref{sec:hamiltonian-simulation}, we observe that our rounding, together with the "interaction picture" method of Low and Wiebe~\cite{low2018hamiltonian}, gives us a way to directly leverage the first-order commutators alone to speed up Hamiltonian simulation. In a nutshell, we apply the algorithm of~\cite{low2018hamiltonian} with the \emph{fast} part of the Hamiltonian being the commuting rounding, and the \emph{slow} part being the error term incurred in the rounding. 
Intuitively, the system is evolved mostly under a nearby exactly commuting Hamiltonian, while the genuinely noncommuting remainder is treated as a perturbation in the interaction picture.
Importantly, our simulation algorithm scales only with the norm of the latter term, as well as the commutator error,
rather than with the total norm of the Hamiltonian.
In this way, the simulation cost is governed by how far the Hamiltonian is from the commuting regime, rather than by the full norm of the Hamiltonian itself.

\subsection{Related work}
\label{sec:relatedwork}

The work of~\cite{overlapping} studied projections that pairwise nearly commute and used a predecessor (see Theorem 3.2~\cite{overlapping}) of the pinching technique we extend here. However, their techniques do not preserve the locality of the operators, and hence do not apply in our setting.

The specific question of rounding approximately commuting Hamiltonians to commuting Hamiltonians was first studied by Arad~\cite{arad2011notepartialnogotheorem}. In this work, it is shown that for two-local Hamiltonians on $d$-dimensional qudits, whose terms are projectors, an "asymptotically good" rounding exists: that is, in the limit of the commutator error $\eps$ going to $0$, the rounding error also goes to $0$.  However, the rounding is \emph{non-constructive} and no explicit bound on the rounding error was established. Arad's techniques also do not easily lend themselves to algorithmic rounding, since they are based on a nonconstructive compactnesss argument inspired by a work of Halmos~\cite{halmos1976some}.
The rounding results in~\cite{arad2011notepartialnogotheorem} are incomparable to ours: they are weaker in that they only apply to Hamiltonians whose terms are projectors; at the same time, they are also more general in that they apply to local dimensions greater than $2$ as well.
We remark that Arad's proof also makes use of local decompositions of the Hamiltonian terms, similar to the Pauli decompositions that we exploit. Comparing our results to his, we consider our main contribution to be the locality-preserving algorithmic rounding theorem with an explicit polynomial bound on the rounding error given by our analysis---to our knowledge, the first constructive rounding technique with an explicit error bound in terms of $\epsilon$ that goes to zero as $\epsilon \rightarrow 0$.
In fact, coming up with such a bound was listed by Arad as an open question, with the suggestion that it might follow easily from the techniques of~\cite{pearcy1979almost}. This latter paper belongs to the rich literature on rounding approximately commuting high-dimensional matrices. We do not believe this literature is directly useful or relevant to our problem because we are interested specifically in roundings that preserve the \emph{local structure} of the Hamiltonian.

Turning to complexity theory, our work connects to the extensive literature
on commuting local Hamiltonians.  Beginning with the result of Bravyi and
Vyalyi~\cite{bravyi2004commutativeversionklocalhamiltonian} that commuting
$2$-local qubit Hamiltonians lie in $\NP$, subsequent works have extended
$\NP$ containment results to broader commuting settings, including
higher-locality qubit systems, certain qutrit systems, two-dimensional
geometries, and more general interaction complexes~\cite{aharonov2011complexity,schuch2011complexitycommutinghamiltonianssquare,AKV,irani2023commutinglocalhamiltonianproblem,bostanci2025commutinglocalhamiltonians2d}.
The central contribution of this work is to extend this viewpoint beyond
exact commutativity.  We show that sufficiently almost commuting $2$-local
qubit Hamiltonians can be rounded to nearby commuting Hamiltonians, thus
extending the classical containment well beyond the commuting setting.

\subsection{Technical Overview}
Let us now proceed with a technical overview of the result in this paper. We assume familiarity with the basic formalism of quantum information, and refer the reader to Section~\ref{sec:background} for a refresher.

To illustrate our rounding technique, we consider the simple (although already non-trivial\footnote{We remark that prior no-go results on rounding \emph{general} Hermitian matrices already apply to triplets~\cite{Davidson1985}.}) example of an $\eps$-almost-commuting $2$-local Hamiltonian on a three-qubit triangle,
\begin{align}
H = h_{1,2} + h_{1,3} + h_{2,3} \, , 
\end{align}
where the $i,j$ terms pair-wise $\epsilon$-almost commute such that 
\begin{align}
\| [h_{1,2},h_{1,3}]\| \leq \eps \, , \quad \| [h_{1,3},h_{2,3}]\| \leq \eps \quad\,\,\, \text{ and } \quad \| [h_{1,2},h_{2,3}]\| \leq \eps
\end{align}
for some $\eps \in (0,1)$. The goal is to identify an \emph{exactly commuting} Hamiltonian $\hat{H} = \hat h_{1,2} + \hat h_{13} + \hat h_{1,3}$ with $[\hat h_{1,2},\hat h_{1,3}] = [\hat h_{1,3},\hat h_{2,3}] = [\hat h_{1,2},\hat h_{2,3}] = 0$, which inherits the locality of the original Hamiltonian, and which is \emph{nearby} in the sense that
\begin{align}
\| h_{1,2} - \hat h_{1,2}\| \leq \delta(\eps) \, , \quad\| h_{1,3} - \hat h_{1,3}\| \leq \delta(\eps), \quad\,\,\, \text{ and } \quad \| h_{2,3} - \hat h_{2,3}\| \leq \delta(\eps)
\end{align}
for some function $\delta(\eps)$ with the property that $\delta(\eps) \rightarrow 0$ as $\eps \rightarrow 0$.

Let us first consider the Hamiltonian terms $h_{1,2}$ and $h_{1,3}$. 
Using a local expansion in terms of the Pauli basis $\{\sigma^{(i)}\}_{i=0}^3$ given in \Cref{algebra-fact-pauli-decomposition}, we can expand the Hamiltonian terms in $H$ as follows:\footnote{Here, the subscript indicates the qubit and we omit single-qubit identity operators for ease of notation.}
\begin{align}
h_{1,2} &= \sum_{\alpha=0}^3 A^{(\alpha)}_1 \otimes \sigma_2^{(\alpha)}  \label{eq:pauli-decomp-h12}\\
h_{1,3} &= \sum_{\beta=0}^3 B^{(\beta)}_1 \otimes \sigma_3^{(\beta)}  \label{eq:pauli-decomp-h13} 
\end{align}
for some collections of Hermitian matrices $\{A_1^{(\alpha)}\}_\alpha$ and $\{B_1^{(\beta)}\}_\beta$ in $\mathbb{C}^{2 \times 2}$.
Since we know that $\|[h_{1,2},h_{1,3}]\| \leq \eps$, it is tempting to argue that the almost-commuting property must also "propagate" down to the collections of matrices $\{A^{(\alpha)}\}_\alpha$ and $\{B^{(\beta)}\}_\beta$. In \Cref{thm:comm-propagates-down}, we show that this is indeed the case, leveraging the fact that $\{\sigma^{(i)}\}_{i=0}^3$ form an orthogonal basis of the space of $2\times 2$ complex matrices under the Hilbert-Schmidt inner product. Concretely, we can argue that
\begin{align}\label{eq:almost-commuting-A-B}
\|[A_1^{(\alpha)}, B_1^{(\beta)}]\| \leq \eps, \quad \forall 
\alpha,\beta=0,1,2,3.
\end{align}
This observation allows us to effectively reduce the task of rounding a $2$-qubit operator to that of rounding a collection of single-qubit operators; in other words, the goal now is to find \emph{nearby} matrices $\{\hat A_1^{(\alpha)}\}_\alpha$ and $\{\hat B_1^{(\beta)}\}_\beta$ which perfectly pairwise commute i.e. $[\hat A_1^{(\alpha)}, \hat B_1^{(\beta)}]=0$, for all $\alpha,\beta$. 

\paragraph{Pinching.} How can we make two almost-commuting $2 \times 2$ Hermitian matrices commute? Here, we take inspiration from the well-known \emph{pinching technique} for projectors (e.g.~\cite{overlapping}). Suppose that $A,B \in \mathbb{C}^{2 \times 2}$ are arbitrary Hermitian matrices such that $\|[A,B]\| \leq \eps$; for example, as in \eqref{eq:almost-commuting-A-B}. First, we observe that the spectral theorem allows us to expand the matrix $A$ as 
\begin{align}
A = \lambda_{\min}(A) \, \Pi + \lambda_{\max}(A) \, (I -\Pi),
\end{align}
where $\lambda_{\min},\lambda_{\max} \in \mathbb{R}$ are eigenvalues and $\Pi$ and $(I -\Pi)$ are orthogonal projectors onto the eigenspaces. To \emph{pinch} $B$ by $A$ (concretely, by the eigenspace projectors associated with $A$), we map
\begin{align}\label{eq:ex-pinchining}
B \quad \mapsto \quad \hat{B} = \Pi \, B \, \Pi + (I-\Pi) B (I-\Pi).     
\end{align}
Because $\hat{B}$ is now diagonal in the eigenbasis of $A$, we can see that $[\hat{B},A]=0$, and hence the two matrices perfectly commute. Moreover, a simple argument (formally proven in \Cref{thm:pinch}) reveals that the distance between $\hat B$ and $B$ is at most
\begin{align}\label{eq:pinching-for-B}
\|\hat{B} - B\| \, \leq \, \frac{\|[A,B]\|}{\Delta(A)} \, \leq \, \frac{\eps}{\Delta(A)} \, ,
\end{align}
where $\Delta(A) = |\lambda_{\max}(A) - \lambda_{\min}(A)|$ is the spectral gap of $A$. This immediately presents a problem: if the eigenvalues of $A$ are extremely close to one another, the spectral gap $\Delta(A)$ can be vanishingly small, thereby causing the distance between $\hat{B}$ and $B$ to blow up. In general, we may not have any control over the spectral properties that arise from the Pauli decompositions in \eqref{eq:pauli-decomp-h12} and \eqref{eq:pauli-decomp-h13}.

\paragraph{Gap or snap?} In order to explain how we get around the spectral gap issue, let us again revisit the example of two Hermitian matrices $A,B \in \mathbb{C}^{2 \times 2}$ with $\|[A,B]\| \leq \eps$. We introduce a \emph{threshold parameter} $\eta \in (0,1)$ (to be determined later) and distinguish between the following cases:
\begin{itemize}
    \item (the \emph{gapped} case) $\Delta(A) \geq \eta$: in this case, the spectrum of $A$ is sufficiently gapped, and the pinching bound in \eqref{eq:pinching-for-B} translates into a stable upper bound of $\|\hat{B} - B\| \leq  \, \frac{\eps}{\eta}$.

    \item (the \emph{nearly-degenerate} case) $\Delta(A) < \eta$: in this case, the pinching bound in \eqref{eq:pinching-for-B} may blow up; fortunately, however, a small gap $\Delta(A)$ necessarily means that $A$ must also be \emph{close} to a multiple of the identity. Instead of pinching $B$, we can simply \emph{snap} $A$ towards $\hat{A} = \lambda_{\max}(A) I$. Importantly, this makes $\hat{A}$ trivially commute with not just $B$ but in fact any other matrix as well---an important fact we will crucially exploit later on.
    A simple calculation (formally shown in \Cref{lemma:snapping-hermitian-qubit-hermitian-operator}) reveals that $\|\hat{A} - A\|$ is at most $\Delta(A)$, and thus $\|\hat{A} - A\| < \eta$.
\end{itemize}
For example, letting $\eta = \sqrt{\eps}$, we can always find a \emph{nearby} commuting matrix within distance $\sqrt{\eps}$; either by pinching $B$ to get $\hat B$ such that $[\hat{B},A]=0$ and $\|\hat{B} - B\| \leq \frac{\eps}{\eta} = \sqrt{\eps}$, or by snapping $A$ to get $\hat{A}$ such that $[B,\hat{A}]=0$ and $\|\hat{A} - A\| < \eta = \sqrt{\eps}$.
Note that the $2 \times 2$ qubit structure in the "gap or snap" approach for $A,B \in \mathbb{C}^{2 \times 2}$ is crucial; indeed, in the qudit case, it is entirely possible for operators to have no spectral gap at all, and yet at the same time it is also not possible to simply "snap" them towards the identity.

\paragraph{Locality-preserving rounding.} 

Equipped with the "gap or snap" paradigm, we explain a slightly simplified version of our full \emph{locality-preserving rounding technique} on the $\eps$-almost commuting Hamiltonian $H = h_{1,2} + h_{1,3} + h_{2,3}$ from before. (We will indicate the step that was simplified when we reach it in our explanation.) 
To this end, we once again use local expansions in terms of the Pauli basis $\{\sigma^{(i)}\}_{i=0}^3$. While our earlier expansions in \eqref{eq:pauli-decomp-h12} and \eqref{eq:pauli-decomp-h13} only involved one of the qubit systems, our full rounding technique makes use of Pauli expansions on both of the qubits; that is, for each of the terms $h_{12}, h_{13}$ and $h_{23}$, we consider both possible expansions as follows:
\begin{align}
h_{1,2} &= \sum_{\alpha=0}^3 A^{(\alpha)}_1 \otimes \sigma_2^{(\alpha)} = \sum_{\alpha=0}^3 \sigma^{(\alpha)}_1 \otimes A_2^{(\alpha)}  \quad\quad\quad \text { for } \quad \{A_1^{(\alpha)}\}_{\alpha=0}^3\, , \quad \{A_2^{(\alpha)}\}_{\alpha=0}^3\quad\label{eq:pauli-decomp-h12-full}\\
h_{1,3} &= \sum_{\beta=0}^3 B^{(\beta)}_1 \otimes \sigma_3^{(\beta)} = \sum_{\beta=0}^3 \sigma^{(\beta)}_1 \otimes B_3^{(\beta)}   \quad\quad\quad \text { for } \quad \{B_1^{(\beta)}\}_{\beta=0}^3\, , \quad \{B_3^{(\beta)}\}_{\beta=0}^3\quad\label{eq:pauli-decomp-h13-full} 
\\h_{2,3} &= \sum_{\gamma=0}^3 C^{(\gamma)}_2 \otimes \sigma_3^{(\gamma)} = \sum_{\gamma=0}^3 \sigma^{(\gamma)}_2 \otimes C_3^{(\gamma)}   \quad\quad\,
\,\,\,\,\,\,\,\text { for } \quad \{C_2^{(\gamma)}\}_{\gamma=0}^3\, , \quad \{C_3^{(\gamma)}\}_{\gamma=0}^3 \quad\label{eq:pauli-decomp-h23-full}
\end{align}
For convenience' sake, we refer to the left decomposition in \Cref{eq:pauli-decomp-h12-full} as the decomposition "about qubit $1$", and the right decomposition in the same equation as the decomposition "about qubit $2$"---and similarly for the others. Note that each expansion introduces a collection of four Hermitian matrices in $\mathbb{C}^{2 \times 2}$: in the decomposition "about qubit $i$", these four matrices act on qubit $i$. In a slight abuse of notation, we use the same variable and rely on the subscript to indicate which qubit the matrices act on, as well as which of the two sets of matrices we are referring to. Thanks to our previous insight that the $\eps$-almost-commuting property of $h_{1,2}, h_{1,3}$ and $h_{2,3}$ "propagates" down to the collections of matrices acting on the same qubit (formally, \Cref{thm:comm-propagates-down}), we get
\begin{align}
\|[A_1^{(\alpha)}, B_1^{(\beta)}]\| &\leq \eps, \quad\quad \forall 
\alpha,\beta=0,1,2,3 \label{eq:eps-almost-comm-1}\\
\|[A_2^{(\alpha)}, C_2^{(\gamma)}]\| &\leq \eps, \quad\quad \forall 
\alpha,\gamma=0,1,2,3 \label{eq:eps-almost-comm-2}\\
\|[B_3^{(\beta)}, C_3^{(\gamma)}]\| &\leq \eps, \quad\hspace{3.7mm} \forall 
\beta,\gamma=0,1,2,3\label{eq:eps-almost-comm-3}
\end{align}
We may visualize the commutation structure of the triangle Hamiltonian as a graph $G=(V,E)$; here the vertex set $V$ consists of the $24$ matrices in \eqref{eq:pauli-decomp-h12-full}, \eqref{eq:pauli-decomp-h13-full} \eqref{eq:pauli-decomp-h23-full}, and $E$ contains edges between vertex pairs which are $\eps$-almost commuting according to~\eqref{eq:eps-almost-comm-1}, \eqref{eq:eps-almost-comm-2} and \eqref{eq:eps-almost-comm-3}. We can group vertices belonging to the same Hamiltonian term together, and we can use a top and bottom vertex layer within each cluster to represent the two possible Pauli expansions (e.g., see~\Cref{fig:triangle-Hamiltinonian-rounding}).

\begin{figure}[t]
    \centering
\begin{tikzpicture}[
    vertex/.style={circle, draw, fill=white, minimum size=8pt, inner sep=0pt},
    vgreen/.style={vertex, fill=ForestGreen!35},
    vred/.style={vertex, fill=BrickRed!60},
    shadow/.style={circle, draw, fill=white, minimum size=8pt, inner sep=0pt},
    edge/.style={draw, gray!80, line width=0.5pt},
    scale=1.2
]

\usetikzlibrary{backgrounds}
\pgfdeclarelayer{bg}
\pgfsetlayers{bg,main}

\def\sx{0.06}
\def\sy{-0.06}

\def\R{2.3}

\def\d{0.6}

\def\A{90}
\def\B{210}
\def\C{330}

\def\At{0}
\def\Bt{120}
\def\Ct{240}

\foreach \i [count=\k from 1] in {-1.5,-0.5,0.5,1.5} {

    \begin{pgfonlayer}{bg}
        \node[shadow] (As\k)
            at ({\R*cos(\A) + \i*\d*cos(\At) + \sx},
                {\R*sin(\A) + \i*\d*sin(\At) + \sy}) {};
    \end{pgfonlayer}

    \node[vgreen] (A\k)
        at ({\R*cos(\A) + \i*\d*cos(\At)},
            {\R*sin(\A) + \i*\d*sin(\At)}) {};
}

\foreach \i [count=\k from 1] in {-1.5,-0.5,0.5} {

    \begin{pgfonlayer}{bg}
        \node[shadow] (Bs\k)
            at ({\R*cos(\B) + \i*\d*cos(\Bt) + \sx},
                {\R*sin(\B) + \i*\d*sin(\Bt) + \sy}) {};
    \end{pgfonlayer}

    \node[vgreen] (B\k)
        at ({\R*cos(\B) + \i*\d*cos(\Bt)},
            {\R*sin(\B) + \i*\d*sin(\Bt)}) {};
}

\begin{pgfonlayer}{bg}
    \node[shadow] (Bs4)
        at ({\R*cos(\B) + 1.5*\d*cos(\Bt) + \sx},
            {\R*sin(\B) + 1.5*\d*sin(\Bt) + \sy}) {};
\end{pgfonlayer}

\node[vred] (B4)
    at ({\R*cos(\B) + 1.5*\d*cos(\Bt)},
        {\R*sin(\B) + 1.5*\d*sin(\Bt)}) {};

\foreach \i [count=\k from 1] in {-1.5,-0.5,0.5,1.5} {

    \begin{pgfonlayer}{bg}
        \node[shadow] (Cs\k)
            at ({\R*cos(\C) + \i*\d*cos(\Ct) + \sx},
                {\R*sin(\C) + \i*\d*sin(\Ct) + \sy}) {};
    \end{pgfonlayer}

    \node[vertex] (C\k)
        at ({\R*cos(\C) + \i*\d*cos(\Ct)},
            {\R*sin(\C) + \i*\d*sin(\Ct)}) {};
}

\foreach \i in {1,...,4} {
    \foreach \j in {1,...,3} {
        \draw[draw=ForestGreen!90, line width=0.5pt] (A\i) -- (B\j);
    }
    \foreach \j in {1,...,4} {
        \draw[edge] (A\i) -- (C\j);
        \draw[edge] (B\i) -- (C\j);
    }
}

\foreach \i in {1,...,4} {
    \draw[draw=ForestGreen!90, line width=0.5pt] (A\i) -- (B4);
}

\node at ({(\R+0.9)*cos(\A)}, {(\R+0.9)*sin(\A)}) {\small $h_{1,3}$};
\node at ({(\R+0.9)*cos(\B)}, {(\R+0.9)*sin(\B)}) {\small $h_{1,2}$};
\node at ({(\R+0.9)*cos(\C)}, {(\R+0.9)*sin(\C)}) {\small $h_{2,3}$};

\end{tikzpicture}

    \caption{\textbf{The commutation graph of the $3$-qubit triangle Hamiltonian.} 
    The vertices represent matrices from the local Pauli expansion of each Hamiltonian term;
the top and bottom layers indicate the two sets of possible expansions 
(depending on the qubit), and the edges represent near-commutation 
across Hamiltonian terms, as in~\eqref{eq:eps-almost-comm-1}, 
\eqref{eq:eps-almost-comm-2} and \eqref{eq:eps-almost-comm-3}. For each 
qubit, our rounding procedure selects a \emph{pivot} with a large 
spectral gap (e.g., the red vertex) and uses it to \emph{pinch} all of 
the relevant vertices belonging to the same qubit (e.g. the green 
vertices). While vertices belonging to the same cluster are not 
required to 
commute, we also choose to \emph{pinch} within clusters to exploit 
\emph{transitivity}: if two green vertices each commute with the gapped 
pivot, they must also commute between each other. Pinching on qubit 
shared between $h_{1,2}$ and $h_{1,3}$, the \emph{pinching operator} 
$\mathcal{P}_1$ ensures that the pinched vertices all mutually commute 
(now, green edges); this implies that 
 pinched Hamiltonian terms now exactly commute as $[\mathcal{P}_1(h_{1,2}),\mathcal{P}_1(h_{1,3})]=0$.
 }\label{fig:triangle-Hamiltinonian-rounding}
\end{figure}

Choose $\eta =\eps^{1/3}$. Then, our locality-preserving rounding takes place as follows:
\begin{enumerate}
    \item (Partitioning:) Split the almost-commuting Hamiltonian $H$ into a \emph{gapped} and \emph{nearly-degenerate} part (depending on $\eta$) such that $H= H_{\gap} + H_{\deg}$, where
    \begin{itemize}
        \item $H_{\gap}$ features Hamiltonian terms for which both sets of the two possible Pauli expansions contain a matrix of spectral gap of at least $\eta$ (these particular matrices will enable us to round to nearby commuting matrices via \emph{pinching}), and

        \item $H_{\deg}$ features Hamiltonian terms that do not satisfy the gap condition: morally, these terms act "almost trivially" on at least one of the qubits they touch. 
    \end{itemize}
    \item (Snap:) For each term in $h_{i,j} \in H_{\deg}$, we use snapping to round it to a term that is \emph{exactly} trivial on at least on qubit. Concretely, suppose $h_{1,2}$ is in $H_{\deg}$ because its decomposition
    \[ h_{1,2} = \sum_{\alpha=0}^{3} A_1^\alpha \otimes \sigma_2^{(\alpha)}\]
    is not gapped---that is, each matrix $A_1^\alpha$ has spectral gap smaller than $\eta$. Then snapping simply consists of replacing each $A_1^\alpha$ in the decomposition with the appropriate multiple of identity, yielding a new operator $h'_{1,2}$ that is effectively 1-local, and may in fact be simply a multiple of the identity. As a \textbf{simplifying assumption}, we will suppose that all the post-snapping operators $h'_{i,j}$ are in fact multiples of identity: the general case including 1-local operators is more cumbersome, requiring an additional round of snapping with a larger gap parameter, but conceptually no more difficult than this special case. Let $H'$ be $H$ after snapping has been performed (with all the terms in $H_{\gap}$ unchanged). This step incurs an error of $O(\eta)$.

    \item (Gap:) We are now in the position that for every Hamiltonian term that acts nontrivially on a qubit, the operators in its Pauli decomposition for that qubit are guaranteed to have a spectral gap. Intuitively, the spectral gap means these operators have a "strong opinion" about the correct basis in which to pinch this qubit, and our algorithm will listen to them. More precisely, for any qubit $i$ that is nontrivially acted on by at least two terms of $H'$, pick \emph{any} gapped operator in the decomposition "about qubit $i$" of the terms acting on $i$, and set it to be the \emph{pivot} $R_i$ associated with $i$. (If a qubit does not have two nontrivial terms, set the pivot to be identity.) Once a pivot has been identified for each qubit, \emph{pinch} each Hamiltonian term $h'_{i,j}$ simultaneously by the pivots $R_i$ and $R_j$: mathematically,
    \[ h'_{i,j} \mapsto \hat{h}_{i,j} = (\cP_i \otimes \cP_j)(h'_{i,j}),\]
    where $\cP_i$ is the \emph{pinching superoperator} given by
 \begin{align}\label{eq:pinching-superoperator-overview}
    \mathcal{P}_i(X) = \Pi_i \, X \, \Pi_i + (I-\Pi_i) X (I-\Pi_i)\, 
    \end{align}
and where $\Pi_i$ and $I-\Pi_i$ are the eigenspace projectors of the $i$-th pivot element. The Hamiltonian $\hat{H} = \sum_{(i,j)} \hat{h}_{i,j}$ is the final product of our rounding procedure.

After the pinching process is complete, it is easy to see that all the terms $\hat{h}_{i,j}$ are forced to commute with each other by construction. What is harder to see is that this does not incur a large error. This is essentially for two reasons:
\begin{itemize}
    \item \textbf{Propagation:} If we imagine pinching $h'_{i,j}$ by a pivot arising from a \emph{different} Hamiltonian term $h'_{i,k}$, then this pivot is guaranteed to almost commute---up to error $\eps$---with all terms in the decomposition of $h'_{i,j}$ by the fact that $h'_{i,j}$ and $h'_{i,k}$ approximately commute, together with \Cref{thm:comm-propagates-down}.
    \item \textbf{Transitivity:} If we imagine pinching $h'_{i,j}$ by a pivot arising from the \emph{same} Hamiltonian term $h'_{i,j}$, then this pivot can be shown to almost commute with all other terms in the decomposition of $h'_{i,j}$ by using the \emph{transitivity} of commutation for gapped matrices, by passing through another term $h'_{i,k}$. This is illustrated in \Cref{fig:triangle-Hamiltinonian-rounding}. The transitivity uses the gapped property of $h'_{i,k}$, and yields a slightly worse commutator error of $\eps/\eta$.
\end{itemize}
Thus, the pivot by with we pinch a given Hamiltonian term commutes with that term up to error at most $\kappa = \eps/\eta$, and it can be shown that the error induced in pinching is at most $O(\kappa / \eta) = O(\eps/\eta^2)$. (We refer to \Cref{lemma:double-extra-cheese} for a formal analysis of the pinching operation.) Our choice of $\eta = \eps^{1/3}$ balances the errors from the gap and snap steps, giving us an overall error of $O(\eps^{1/3})$. In the more general case, where our simplifying assumption in the snapping step is removed, it turns out a second round of snapping is needed: this introduces a further error, so our final bound in the general case turns out to be $O(\eps^{1/6})$.
\end{enumerate}

Notice that the Hamiltonian $\hat H$ inherits the same locality from the original Hamiltonian $H$; this is due to the \emph{local} nature of our pinching and snapping techniques.
In \Cref{thm:transforming-hamiltonian}, we show that our rounding technique---when  applied to the triangle Hamiltonian---yields an \emph{exactly commuting} Hamiltonian $\hat{H} = \hat h_{1,2} + \hat h_{2,3} + \hat h_{1,3}$ with $[\hat h_{1,2},\hat h_{1,3}] = [\hat h_{1,3},\hat h_{2,3}] = [\hat h_{1,2},\hat h_{2,3}] = 0$ and
$$
\| H - \hat{H} \| \, \leq \, O(\eps^{1/6}).
$$
Our rounding technique thus rests on two important structural facts: (i) approximate commutation of
Hamiltonian terms propagates to the single-qubit operators in their Pauli
decompositions, and (ii) \emph{pinching} against gapped local pivots enforces exact
commutation while preserving locality.  In this sense, our procedure is a
consistent local change of basis across the interaction graph, with \emph{snapping}
handling terms which are too close to the identity to select a stable basis. 

\paragraph{Concrete $3$-qubit example.}

We now illustrate the rounding procedure on a concrete triangle Hamiltonian
on three qubits, shown in~\Cref{fig:triangle2}. 

The terms $h_{1,2}$ and $h_{2,3}$ commute exactly: both are controlled on
the shared qubit, qubit $2$, in the standard basis.  On the other hand,
$h_{1,3}$ acts in the $X$ basis on qubits $1$ and $3$, and therefore
commutes only approximately with $h_{1,2}$ and $h_{2,3}$; the obstruction
comes from the $Z$-basis components of $h_{1,2}$ and $h_{2,3}$, whose
coefficients are suppressed by the factor $1/100$.

This example is useful because it rules out a na\"{i}ve sequential
approach to rounding.  One might try first to round the already commuting
pair $h_{1,2},h_{2,3}$ to an exactly commuting pair
$\hat h_{1,2},\hat h_{2,3}$, without taking $h_{1,3}$ into account, and only
afterwards try to round $h_{1,3}$ to some $\hat h_{1,3}$ that commutes with
both of them.  However, for the particular pair $h_{1,2},h_{2,3}$ above,
there is no operator on qubits $1$ and $3$ that acts nontrivially on both
qubits and commutes exactly with both $h_{1,2}$ and $h_{2,3}$.  Thus any
successful rounding must modify at least one of $h_{1,2}$ or $h_{2,3}$,
despite the fact that these two terms commute perfectly with each other.
This illustrates why the rounding procedure must be global in nature: the
choice of how to round one pair of terms cannot be made independently of
the remaining interactions.



\begin{figure}[t]
\centering
\begin{tikzpicture}[
    scale=1.25,
    interaction/.style={draw=black!75, line width=1.1pt},
    qubit/.style={
        circle,
        draw=black!70,
        fill=gray!10,
        line width=0.7pt,
        minimum size=4.8mm,
        inner sep=0pt,
        font=\scriptsize
    },
    edgelabel/.style={
        font=\small,
        inner sep=1pt
    },
    equations/.style={
        align=left,
        font=\small
    }
]

\coordinate (P1) at (0,0);
\coordinate (P2) at (2.2,0);
\coordinate (P3) at (1.1,1.9);

\draw[interaction] (P1) -- (P2) -- (P3) -- cycle;

\node[qubit] at (P1) {$1$};
\node[qubit] at (P2) {$2$};
\node[qubit] at (P3) {$3$};

\node[edgelabel] at ($(P1)!0.5!(P2) + (0,-0.22)$) {$h_{1,2}$};
\node[edgelabel] at ($(P2)!0.5!(P3) + (0.25,0.13)$) {$h_{2,3}$};
\node[edgelabel] at ($(P3)!0.5!(P1) + (-0.25,0.13)$) {$h_{1,3}$};

\node[equations, anchor=west] at (3.35,0.95) {%
$\displaystyle
\begin{aligned}
h_{1,2} &= X \otimes \proj{0} \otimes I
  + \frac{1}{100}\, Z \otimes \proj{1} \otimes I, \\[1.5mm]
h_{2,3} &= I \otimes \proj{0} \otimes X
  + \frac{1}{100}\, I \otimes \proj{1} \otimes Z, \\[1.5mm]
h_{1,3} &= X \otimes I \otimes X .
\end{aligned}
$
};

\end{tikzpicture}
\caption{A simple triangle Hamiltonian on three qubits.}
\label{fig:triangle2}
\end{figure}

Let us now see how the one-shot rounding procedure handles this example.
All three terms are gapped in the relevant sense in both of their possible
Pauli decompositions, so the partitioning and snapping steps are trivial.
The only nontrivial part of the procedure is the gapped rounding step, in
which we choose a pivot operator $A_i$ for each qubit $i$ and then pinch
the Hamiltonian terms with respect to these pivots.

We begin with qubit $2$.  The terms acting on qubit $2$ are already
written in the appropriate form, with the Pauli operators placed on the
non-shared qubits $1$ and $3$:
\begin{align*}
    h_{1,\blue{2}}
        &= \blue{(\proj{0})_2} \otimes X_1
        + \frac{1}{100} \blue{(\proj{1})_2} \otimes Z_1, \\
    h_{\blue{2},3}
        &= \blue{(\proj{0})_2} \otimes X_3
        + \frac{1}{100} \blue{(\proj{1})_2} \otimes Z_3.
\end{align*}
Thus we may choose either $\proj{0}$ or $\proj{1}$ as the pivot on qubit
$2$; these choices define the same eigenbasis.  We take
\[
    A_2 = (\proj{0})_2 .
\]

Next consider qubit $1$.  The term $h_{1,3}$ is already expressed in the
desired form, while $h_{1,2}$ must be rewritten so that the Pauli operators
act on the non-shared qubit $2$:
\begin{align*}
    h_{\red{1},2}
        &= \frac{1}{2} \red{\left(X + \frac{1}{100} Z\right)_1}
            \otimes I_2
        + \frac{1}{2} \red{\left(X - \frac{1}{100} Z\right)_1}
            \otimes Z_2, \\
    h_{\red{1},3}
        &= \red{X_1} \otimes X_3 .
\end{align*}
There are several possible gapped choices for the pivot.  For simplicity,
we take
\[
    A_1 = X_1 .
\]

Finally, for qubit $3$, we proceed in the same way:
\begin{align*}
    h_{2,\DarkGreen{3}}
        &= \frac{1}{2} I_2 \otimes
            \DarkGreen{\left(X + \frac{1}{100} Z\right)_3}
        + \frac{1}{2} Z_2 \otimes
            \DarkGreen{\left(X - \frac{1}{100} Z\right)_3}, \\
    h_{1,\DarkGreen{3}}
        &= X_1 \otimes \DarkGreen{X_3}.
\end{align*}
Again, we choose the convenient pivot
\[
    A_3 = X_3 .
\]

With these choices of pivots, the pinching step is especially transparent:
it forces the Hamiltonian to be diagonal in the tensor-product basis
defined by the $X$ basis on qubit $1$, the standard basis on qubit $2$,
and the $X$ basis on qubit $3$.  The rounded Hamiltonian
\[
    \hat H = \hat h_{1,2} + \hat h_{1,3} + \hat h_{2,3}
\]
is therefore given by
\begin{align*}
    \hat h_{1,2} &= X_1 \otimes \proj{0}_2, \\
    \hat h_{1,3} &= X_1 \otimes X_3, \\
    \hat h_{2,3} &= \proj{0}_2 \otimes X_3.
\end{align*}
These three terms now commute exactly.

\paragraph{The role of transitivity.}

This simple example also illustrates the role played by \emph{transitivity} of
commutation for $2 \times 2$ matrices.  As discussed above, transitivity
is what allows the rounding procedure to control commutation relations
among different operators appearing in the Pauli decomposition of a
single Hamiltonian term.  In the present example, this control is reflected
in the fact that the small $Z$-components in $h_{1,2}$ and $h_{2,3}$ are
compatible with the choice of $X$-basis pivots on qubits $1$ and $3$.

One way to see this is to ask what would happen if the coefficient
$1/100$ in $h_{1,2}$ were replaced by $1$.  Then $h_{1,2}$ and $h_{2,3}$
would still commute exactly, since they would remain diagonal in the same
basis on the shared qubit $2$.  However, the commutator between $h_{1,2}$
and $h_{1,3}$ would become large.  In fact, $h_{1,2}$ would no longer
commute with any operator on qubits $1$ and $3$ that acts nontrivially on
both qubits.  Thus the smallness of the noncommuting component is
essential: it is precisely what permits the global choice of pivots to
round all three terms simultaneously.

\paragraph{Generalizing to $m$ Hamiltonian terms.}

The same mechanism extends from this triangle example to general
$2$-local qubit Hamiltonians with $m$ terms.  Rather than choosing a
rounding separately for each pair of Hamiltonian terms, the algorithm
chooses pivots at the level of individual qubits.  Each Hamiltonian term
is then pinched with respect to the pivots on the qubits it touches.  This
one-shot choice is what preserves locality while enforcing exact
commutation globally: once all of the terms are diagonal in the local bases
specified by the pivots, every pair of rounded terms now commutes exactly.

The example above captures the main reason such a global procedure is
needed.  Even when two terms commute exactly, keeping them fixed may be
incompatible with rounding the rest of the Hamiltonian.  The rounding
algorithm therefore treats the Hamiltonian as a whole, using the
approximately commuting structure to identify local bases that are
simultaneously compatible across all interactions.

\subsection{Discussion and Future Work}
The most natural immediate next questions to investigate are whether our rounding can be extended to systems of higher locality or local dimension. We suspect that going beyond qubits may be challenging, since the transitivity property that is crucial to our proof is no longer true for higher dimensions (even for dimension 3). 

Another direction for generalization is to study looser notions of approximate commutation. For instance, what can we say if we only have a bound on the \emph{average} norm of the commutators? Similar regimes were studied algorithmically in~\cite{schmidhuber2025hamiltoniandecodedquantuminterferometry}. 

In the context of complexity theory, one interesting direction is to see what our result implies about the quantum PCP conjecture. Recall that our rounding result already implies that any qPCP-hard family of 2-local qubit Hamiltonians must exhibit some moderate amount of noncommutativity: assuming $\mathsf{NP} \neq \mathsf{QMA}$, any
family of $2$-local qubit Hamiltonians for which the $\gamma$-LH problem is $\QMA$-hard for \emph{constant} $\gamma$ must allow for pairwise commutator norms of $\omega(\frac{1}{m^6})$. It would be interesting to see whether this bound can be improved, for instance by finding a better rounding technique with an improved dependence on $\epsilon$.

Taking an alternate perspective, the complexity of the \emph{general} $k$-local commuting Hamiltonian problem on qudits is wide open, and it is possible that some version of this problem is even qPCP-hard. Moreover, it is possible that gap-amplifying transformations for commuting Hamiltonians will be easier to design than for general Hamiltonians. Could some rounding scheme along the lines of ours show that gap amplification for commuting Hamiltonians directly implies the quantum PCP conjecture?

\paragraph{Acknowledgements.}
Special thanks to Thomas Vidick for an inspirational email conversation at the start of this project. We also thank Christopher Laumann, Jiaqing Jiang, and Thomas Vidick for useful conversations on the general status for commuting and almost commuting Hamiltonians. We thank ChatGPT for the support with our literature review and the design of our figures. AN was partially supported by NSF CAREER Award 2339948.

\section{Preliminaries}
\label{sec:background}

The $d$-dimensional complex vector space is denoted by $\mathbb{C}^d$. We use $\mathrm{L}(V,W)$ to denote the set of linear operators mapping between finite-dimensional complex vector spaces $V$ and $W$, and we use
$\mathrm{L}(\mathbb{C}^d)$ to denote the set of linear operators over $\mathbb{C}^d$. The \emph{operator norm} of a linear operator $A \in \mathrm{L}(V,W)$, for some finite-dimensional complex vector spaces $V \neq \{0\}$ and $W$ is given by
\begin{align}
\|A \| = \sup_{\|x\|_2 =1} \|A x\|_2 \, ,    
\end{align}
where $\|x \|_2 = \sqrt{\langle x,x \rangle}$ is the standard Euclidean norm; we occasionally omit the subscript for convenience. The operator norm satisfies the triangle inequality and is sub-multiplicative with $\|A \cdot B\| \leq \|A\| \cdot \|B\|$, for all $A,B \in \mathrm{L}(V,W)$.
A linear operator $A \in \mathrm{L}(\mathbb{C}^d)$ is called \emph{positive semi-definite} (PSD), denoted by $A \succeq 0$, whenever $x^\dag A x \geq 0$, for all $x \in \mathbb{C}^d$, and where $\dag$ is the \emph{adjoint}.
A \emph{unitary} $U: \mathrm{L} (\mathbb{C}^d) \to \mathrm{L}(\mathbb{C}^d)$ is a linear operator such that $U^\dagger U = U U^\dagger = I$, where $I$ is the identity matrix on $\mathbb{C}^d$. 
A linear operator $H \in \mathrm{L}(\mathbb{C}^d)$ is called \emph{Hermitian} if $H^\dag = H$. We use $\mathrm{Herm}(C^{d})$ to denote the set of $d$-dimensional Hermitian matrices. 
A Hermitian operator $\Pi \in \mathrm{Herm}(\mathbb{C}^{d})$ is called a \emph{projector} if $\Pi^2=\Pi$. Any $A \in \mathrm{Herm}(\mathbb{C}^{2})$ admits a spectral decomposition of the form 
\begin{align}
 A = \lambda_1 \Pi + \lambda_2 (I -\Pi)   
\end{align}
where $\lambda_1,\lambda_2 \in \mathbb{R}$ are the eigenvalues of $A$, and where $\Pi$ and $(I-\Pi)$ are the projectors onto the respective eigenspaces of $A$; we call the eigenvalue difference
\begin{align}
\Delta(A) := |\lambda_1 - \lambda_2|   
\end{align}
the \emph{spectral gap} of $A$, and we say that $A$ is
$\eta$-gapped if $\Delta(N) \geq \eta$.

\paragraph{Commutator.} Let $A,B \in \mathbb{C}^{d \times d}$. Then, the \emph{commutator} between $A$ and $B$ is defined as
\begin{equation}
            [A, B] := AB - BA.
        \end{equation}
We say that $A$ \emph{commutes} with $B$ (equivalently, $B$ commutes with $A$) if $[A, B] = 0$ or $AB = BA$. We say that $A$ and $B$ $\eps$-\emph{almost commute} if there exists $\eps >0 $ such that
\begin{equation}
\|[A,B]\| \leq \eps \, ,
        \end{equation}
where $\|\cdot\|$ is the operator norm.
We use the following elementary fact about the commutator.
\begin{restatable}{fact}{TransitivityLeadingFact}\label{lem:transitivity-of-almost-commutatitivty}
Let $A,B,C \in \mathbb{C}^{d \times d}$ be arbitrary matrices. Then,
\begin{equation}
\| [A,C] \| \leq 2 \cdot \|A-B\| \cdot \|C\|+ \| [B,C]||.    
\end{equation}
\end{restatable}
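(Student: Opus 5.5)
The plan is to write the difference of the two commutators as a single commutator and then bound it directly with the triangle inequality and submultiplicativity. By bilinearity of the commutator,
\[
[A,C] = [A-B,C] + [B,C],
\]
since $[A,C]-[B,C] = AC - CA - BC + CB = (A-B)C - C(A-B)$. The triangle inequality then gives $\|[A,C]\| \leq \|[A-B,C]\| + \|[B,C]\|$.

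For the first term, I expand $[A-B,C] = (A-B)C - C(A-B)$. Applying the triangle inequality and submultiplicativity of the operator norm to each product gives
\[
\|[A-B,C]\| \leq \|A-B\|\,\|C\| + \|C\|\,\|A-B\| = 2\,\|A-B\|\,\|C\|.
\]
Combining this with the previous display yields the claimed inequality.

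There is no real obstacle here; the only step needing any care is the algebraic identity $[A,C]-[B,C]=[A-B,C]$. Everything else uses only the two norm properties already recorded in the preliminaries, so the argument holds for arbitrary (not necessarily Hermitian) $A,B,C$ and any dimension $d$.
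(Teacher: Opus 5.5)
Your proposal is correct and follows essentially the same route as the paper's proof: decompose $[A,C] = [A-B,C] + [B,C]$, apply the triangle inequality, and bound $\|(A-B)C - C(A-B)\|$ by $2\|A-B\|\,\|C\|$ via submultiplicativity. Your explicit check of the identity $[A,C]-[B,C]=[A-B,C]$ is a small addition the paper leaves implicit.
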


\begin{proof}
First, notice that the commutator is linear in the sense that
\begin{align}
[A,C] = [A-B,C] + [B,C].    
\end{align}
Using the triangle inequality and the sub-multiplicativity of the operator norm, we get
\begin{align}
\| [A,C] \| &= \| [A-B,C] + [B,C] \| \\    
&\leq \| [A-B,C]\| + \|[B,C] \| \\
&= \| (A-B)C - C(A-B)\| + \|[B,C] \| \\
&\leq \| (A-B)C\| + \|C(A-B)\| + \|[B,C] \|\\
&\leq 2 \cdot \|A-B\| \cdot \|C\| + \| [B,C]||.
\end{align}
\end{proof}
\vspace{-0.2cm}We use the following well-known fact about \emph{commuting} Hermitian matrices.

\begin{fact}[Simultaneous Diagonalization]\label{fact:sim-diag}
    Let $A,B \in \mathrm{Herm}(\mathbb{C}^{2})$ be exactly commuting Hermitian matrices. Then, $A$ and $B$ can be \emph{simultaneously diagonalized} in the following sense: there exists a spectral decomposition via common eigenspace \emph{projectors} $\Pi$ and $(I-\Pi)$ such that
    \begin{equation}
        A = a_1 \Pi + a_2 (I -\Pi) \quad\quad \text{and} \quad\quad B = b_1 \Pi + b_2 (I - \Pi) \, 
    \end{equation}
    where $a_1,a_2 \in \mathbb{R}$ and $b_1,b_2 \in \mathbb{R}$ are the eigenvalues of $A$ and $B$, respectfully.
\end{fact}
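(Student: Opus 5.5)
This is a standard fact, and I will prove it by an explicit construction rather than by citing the general spectral theorem for commuting normal matrices. I split into two cases according to whether $A$ is a multiple of the identity.

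\emph{Case 1: $A = aI$ for some $a \in \mathbb{R}$.} Here I take the spectral decomposition of $B$ itself. Since $B$ is Hermitian on $\mathbb{C}^2$, the spectral theorem gives a rank-one projector $\Pi$ (or an arbitrary one if $B$ is itself scalar) with $B = b_1 \Pi + b_2(I-\Pi)$. Then $A = a\Pi + a(I-\Pi)$, so the claim holds with $a_1=a_2=a$.

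\emph{Case 2: $A$ is not a scalar multiple of the identity.} Then $A$ has two distinct eigenvalues $a_1 \neq a_2$, with one-dimensional eigenspaces spanned by unit vectors $v_1, v_2$. Write $\Pi = \proj{v_1}$, so that $I - \Pi = \proj{v_2}$. Commutation gives $A(Bv_1) = BAv_1 = a_1 Bv_1$, so $Bv_1$ lies in the $a_1$-eigenspace of $A$, which is $\mathrm{span}(v_1)$. Hence $Bv_1 = b_1 v_1$ for some scalar $b_1$, which is real because $B$ is Hermitian. The same argument gives $Bv_2 = b_2 v_2$. Since $\{v_1,v_2\}$ is an orthonormal basis of $\mathbb{C}^2$, it follows that $B = b_1\Pi + b_2(I-\Pi)$.

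An equivalent way to finish Case 2 is to note that $[B,\Pi]=0$, because $\Pi$ is a polynomial in $A$, namely $\Pi = (A - a_2 I)/(a_1-a_2)$. Then $B = \Pi B\Pi + (I-\Pi)B(I-\Pi)$, and each rank-one block is a real scalar times its projector.

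The only real obstacle is the degenerate case, where the eigenbasis of $A$ is not unique. It is handled by borrowing the eigenbasis of $B$, as in Case 1. Nothing beyond the spectral theorem for $2\times 2$ Hermitian matrices is needed.
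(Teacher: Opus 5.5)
Your proof is correct and complete. The paper gives no proof to compare against: it records this as a well-known fact and uses it without proof. Your two-case construction supplies a self-contained argument that needs only the spectral theorem for $2\times 2$ Hermitian matrices. The polynomial formula $\Pi = (A-a_2I)/(a_1-a_2)$ in your alternative ending is a clean way to see that $[B,\Pi]=0$.

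Your Case~2 actually proves a bit more than the Fact states. If one of the two matrices is nondegenerate, the common projector is forced to be its eigenprojector. This uniqueness is the form in which the paper later uses the fact, in Lemma~\ref{lem:transitivity}, where $A$, $B$, $C$ are all diagonalized in the unique eigenbasis of the gapped matrix $B$. The footnote there explains why the degenerate situation, your Case~1, cannot be used that way. In Case~1 the common basis is borrowed from the other matrix, so two different pairings need not agree on it.
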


\paragraph{Spectral stability.} We use Weyl's inequality (e.g., see~\cite{tao2010eigenvalues}) which states that the eigenvalue spectra of Hermitian matrices remains stable under perturbations.

\begin{lemma}[Weyl's inequality]\label{lem:spectral-stab}
Let $A,B \in \mathrm{Herm}(\mathbb{C}^{d})$. Then, for all $i \in [d]$,
$$
\left|\lambda_i(A+B) - \lambda_i(A) \right| \, \leq \, \|B\| \, ,
$$
where $\lambda_i(A)$ and $\lambda_i(A+B)$ denote the $i$-th eigenvalues of $A$ and $A+B$, respectfully.
\end{lemma}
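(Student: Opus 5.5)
The plan is to derive the inequality from the Courant--Fischer min-max characterization of eigenvalues. Order the eigenvalues of a Hermitian matrix $M \in \mathrm{Herm}(\mathbb{C}^d)$ nonincreasingly, $\lambda_1(M) \geq \dots \geq \lambda_d(M)$; the same argument works for the increasing order. For each $i \in [d]$ we use
\begin{align}
\lambda_i(M) = \max_{\substack{V \subseteq \mathbb{C}^d \\ \dim V = i}} \; \min_{\substack{x \in V \\ \|x\|_2 = 1}} x^\dagger M x ,
\end{align}
which we either take as standard or justify quickly. For ``$\geq$'', choose $V$ spanned by the top $i$ eigenvectors. For ``$\leq$'', note that any $i$-dimensional $V$ meets the span of the bottom $d-i+1$ eigenvectors nontrivially by dimension counting, and on that span the Rayleigh quotient is at most $\lambda_i(M)$.

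Next we bound the Rayleigh quotients pointwise. For any unit vector $x$, the definition of the operator norm together with Cauchy--Schwarz gives $|x^\dagger B x| \leq \|Bx\|_2 \leq \|B\|$. Since $B$ is Hermitian this quantity is real, so
\begin{align}
x^\dagger A x - \|B\| \;\leq\; x^\dagger (A+B) x \;\leq\; x^\dagger A x + \|B\| .
\end{align}
Taking the minimum over unit $x \in V$ and then the maximum over $i$-dimensional $V$ preserves these inequalities, because both operations are monotone and commute with adding a constant. Hence $\lambda_i(A) - \|B\| \leq \lambda_i(A+B) \leq \lambda_i(A) + \|B\|$, which is exactly the claim.

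The only step with real content is the min-max characterization, specifically its dimension-counting half. Everything else is a monotonicity argument. If we prefer not to re-prove Courant--Fischer, we simply cite it, as the paper does with~\cite{tao2010eigenvalues}. In the $d=2$ setting used throughout the paper, the characterization reduces to $\lambda_{\max}(M) = \max_x x^\dagger M x$ and $\lambda_{\min}(M) = \min_x x^\dagger M x$ over unit vectors, which is immediate from the spectral decomposition $M = \lambda_1 \Pi + \lambda_2 (I-\Pi)$.
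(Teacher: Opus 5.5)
Your proof is correct: it is the standard Courant--Fischer min-max argument, which is how the source the paper cites (\cite{tao2010eigenvalues}) derives this inequality; the paper itself gives no proof beyond that citation. One small correction to your closing aside: the paper does not use this only for $d=2$, but applies it via \Cref{fact:weyl-corollary} to $\lambda_{\min}$ of the full $n$-qubit Hamiltonians ($d=2^n$) in \Cref{thm:aclh-to-clh-reduction}, though even there only the variational formula $\lambda_{\min}(M)=\min_{\|x\|_2=1}x^\dagger M x$ is needed.
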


The following is a direct consequence of Weyl's inequality in \Cref{lem:spectral-stab}.
\begin{fact}\label{fact:weyl-corollary}
Let $A,B \in \mathrm{Herm}(\mathbb{C}^{d})$ and $\eps \geq 0$. If $\|A - B\| \leq \eps$, then $|\lambda_{\mathrm{min}}(A) -\lambda_{\mathrm{min}}(B)| \leq \eps$.
\end{fact}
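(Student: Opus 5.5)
The statement is Fact~\ref{fact:weyl-corollary}. The plan is to derive it directly from Weyl's inequality (\Cref{lem:spectral-stab}) by writing $A$ as a perturbation of $B$. Set $E := A - B$. Since $A$ and $B$ are both Hermitian, so is $E$, and $\|E\| \leq \eps$ by hypothesis. Then $A = B + E$ is exactly the configuration of \Cref{lem:spectral-stab}, with $B$ as the unperturbed matrix and $E$ as the perturbation.

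Order the eigenvalues consistently for both matrices, for concreteness in nondecreasing order, $\lambda_1 \leq \dots \leq \lambda_d$. Then $\lambda_{\min}$ is $\lambda_1$ for each matrix. Applying \Cref{lem:spectral-stab} with index $i=1$ gives
\begin{align}
|\lambda_{\min}(A) - \lambda_{\min}(B)| = |\lambda_1(B+E) - \lambda_1(B)| \leq \|E\| = \|A-B\| \leq \eps ,
\end{align}
which is the claim.

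The only point needing care is that Weyl's inequality is stated for the $i$-th eigenvalue under a fixed ordering convention. I just need to apply that same convention to $A$ and to $B$ so that index $1$ picks out the minimum in both. If the paper's convention were nonincreasing instead, I would use $i=d$; the argument is otherwise unchanged.

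If one preferred not to rely on the ordering convention, an alternative is the variational characterization $\lambda_{\min}(X) = \min_{\|x\|_2=1} x^\dagger X x$. For every unit vector $x$,
\begin{align}
|x^\dagger A x - x^\dagger B x| = |x^\dagger (A-B) x| \leq \|A-B\| \leq \eps .
\end{align}
Taking the minimum over unit $x$ on each side and using symmetry in $A$ and $B$ then yields the same bound. I do not expect any real obstacle here.
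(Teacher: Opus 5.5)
Your proof is correct and follows the paper's route: the paper records this fact simply as a direct consequence of Weyl's inequality (\Cref{lem:spectral-stab}), which is your main argument with $E = A - B$ and the index chosen to select the smallest eigenvalue. The variational alternative you sketch is also valid, though not needed.
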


\paragraph{Quantum information.} 
A finite-dimensional complex Hilbert space is denoted by $\mathcal H$, and we use subscripts to distinguish between different systems (or registers); for example, we let $\mathcal H_{1}$ be the Hilbert space corresponding to the system labeled as $1$. 
The tensor product of two Hilbert spaces $\mathcal H_{1}$ and $\mathcal H_{2}$ is another Hilbert space which we denote by $\mathcal H_{1,2} = \mathcal H_{1} \otimes \mathcal H_{2}$.  

A quantum system over the $2$-dimensional Hilbert space $\mathcal H \cong \mathbb{C}^2$ is called a \emph{qubit}; in general, when $\mathcal H \cong \mathbb{C}^d$ for $d \geq 2$, we call it a \emph{qudit}. For $n \in \mathbb{N}$, we refer to quantum registers over the Hilbert space $\mathcal H \cong \big(\mathbb{C}^2\big)^{\otimes n}$ as $n$-qubit states. We use the word \emph{quantum state} to refer to both pure states (unit vectors $\ket{\psi} \in \mathcal H$), as well as density matrices $\rho \in \mathrm{D}(\mathcal H)$ with
\begin{align}
\mathrm{D}(\mathcal H) = \{ \rho \in \mathrm{L}(\mathcal{H}) \, : \, \rho \text{ is Hermitian, $\rho \geq 0$ and  } \mathrm{Tr}[\rho] =1\}.    
\end{align}
Given a bipartite operator $\rho_{1,2} \in \mathrm{D}(\mathcal H_{1,2})$, we denote the partial trace over a system, say the second system, by $\mathrm{Tr}_2[\cdot]$, and we call $\rho_1=\mathrm{Tr}_2[\rho_{1,2}]$ the reduced state. We use the following simple fact:
\begin{restatable}{fact}{FactOPNormPTrace} \label{fact:op-norm-ptrace}
    Let $\mathcal H_{1,2} = \mathcal H_{1} \otimes \mathcal H_{2}$ be a bipartite Hilbert space, for two finite-dimensional complex Hilbert spaces $\mathcal H_{1}$ and $\mathcal H_{2}$ of dimension $d_1$ and $d_2$, respectfully. Then, it holds that
    \begin{align}
        \|\rho_{1,2}\| \geq \frac{1}{d_2} \, \|\mathrm{Tr}_2(\rho_{1,2})\|, \quad \quad \forall \rho_{1,2} \in \mathrm{D}(\mathcal H_{1,2}).
        \end{align}
\end{restatable}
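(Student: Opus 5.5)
The plan is to write the partial trace as an explicit sum of $d_2$ compressions of $\rho_{1,2}$, and then bound each compression by the operator norm of $\rho_{1,2}$. Nothing about $\rho_{1,2}$ being a density matrix will be needed. The argument works verbatim for any $X \in \mathrm{L}(\mathcal H_{1,2})$, so I will prove the inequality $\|\mathrm{Tr}_2(X)\| \leq d_2 \|X\|$ in that generality.

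Fix an orthonormal basis $\{\ket{j}\}_{j=1}^{d_2}$ of $\mathcal H_2$. For each $j$, let $V_j := I_1 \otimes \ket{j} \in \mathrm{L}(\mathcal H_1, \mathcal H_{1,2})$ be the isometric embedding $\ket{\psi}\mapsto \ket{\psi}\otimes\ket{j}$. The first step is the standard identity
\begin{align}
\mathrm{Tr}_2(X) = \sum_{j=1}^{d_2} V_j^\dagger \, X \, V_j .
\end{align}
I will check it on product operators $X = A\otimes B$, where both sides equal $A \cdot \sum_j \bra{j}B\ket{j} = A\,\mathrm{Tr}(B)$. It then extends to all $X$ by linearity, since product operators span $\mathrm{L}(\mathcal H_{1,2})$.

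The second step is to bound each summand. Since $V_j^\dagger V_j = I_1$, we have $\|V_j\| = \|V_j^\dagger\| = 1$. Sub-multiplicativity of the operator norm then gives $\|V_j^\dagger X V_j\| \leq \|X\|$. The triangle inequality over the $d_2$ summands yields $\|\mathrm{Tr}_2(X)\| \leq d_2\|X\|$. Specializing to $X = \rho_{1,2}$ and dividing by $d_2$ gives exactly the claimed $\|\rho_{1,2}\| \geq \frac{1}{d_2}\|\mathrm{Tr}_2(\rho_{1,2})\|$.

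There is no real obstacle here. The only step requiring care is justifying the Kraus-type formula for the partial trace, which is immediate from the product-operator check. As an alternative for PSD $\rho_{1,2}$, one could note that $\mathrm{Tr}_2(\rho_{1,2})$ is PSD with $\|\mathrm{Tr}_2\rho_{1,2}\| = \max_{\|\psi\|=1}\sum_j \bra{\psi,j}\rho_{1,2}\ket{\psi,j}$, and each term is at most $\|\rho_{1,2}\|$. The isometry argument is cleaner and more general, so I will use it.
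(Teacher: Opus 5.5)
Your proof is correct, but it takes a different route from the paper. The paper argues by duality. For a density matrix, $\|\rho_{1,2}\| = \max_{\tau} \mathrm{Tr}(\rho_{1,2}\tau)$, with the maximum over density matrices $\tau$. Restricting to $\tau = \sigma_1 \otimes I_2/d_2$ and using $\mathrm{Tr}\big(\rho_{1,2}(\sigma_1\otimes I_2)\big) = \mathrm{Tr}\big(\mathrm{Tr}_2(\rho_{1,2})\,\sigma_1\big)$ gives the lower bound $\frac{1}{d_2}\max_{\sigma_1}\mathrm{Tr}\big(\mathrm{Tr}_2(\rho_{1,2})\,\sigma_1\big) = \frac{1}{d_2}\|\mathrm{Tr}_2(\rho_{1,2})\|$.

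This is essentially the alternative you sketch at the end, and it genuinely needs positivity. The variational formula for the operator norm fails for non-PSD operators, since for Hermitian operators it returns $\lambda_{\max}$ rather than $\max|\lambda|$. The final equality also needs $\mathrm{Tr}_2(\rho_{1,2}) \succeq 0$.

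Your argument instead writes $\mathrm{Tr}_2(X) = \sum_j V_j^\dagger X V_j$ with isometries $V_j$ and applies the triangle inequality. It never uses positivity and gives $\|\mathrm{Tr}_2(X)\| \leq d_2\|X\|$ for every operator $X$.

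That extra generality is mildly useful here. The paper applies this fact in Lemma~\ref{lem:opnorm-orthog-components} to $T^\dagger T$, which is PSD but not trace-one, so strictly speaking it is not a density matrix. The paper's proof covers that case only after an implicit rescaling, whereas yours applies directly. In exchange, the paper's route is a short duality argument tailored to the density-matrix setting of the statement. It shows the factor $1/d_2$ coming from testing against the maximally mixed state on $\mathcal H_2$.
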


\begin{proof}
By the variational characterization of the operator norm, we have
\begin{align*}
\|\rho_{1,2}\| 
&= \max_{\tau_{1,2} \in \mathrm{D}(\mathcal H_{1,2})} \mathrm{Tr}\big(\rho_{1,2} \tau_{1,2}\big) \\
&\ge \max_{\sigma_1 \in \mathrm{D}(\mathcal H_1)} \mathrm{Tr}\Big(\rho_{1,2} \big(\sigma_1 \otimes \frac{I_2}{d_2}\big)\Big) \\
&= \max_{\sigma_1 \in \mathrm{D}(\mathcal H_1)} \frac{1}{d_2} \mathrm{Tr}\big( \mathrm{Tr}_2(\rho_{1,2}) \, \sigma_1 \big) \\
&= \frac{1}{d_2} \, \max_{\sigma_1 \in \mathrm{D}(\mathcal H_1)} \mathrm{Tr}\big( \mathrm{Tr}_2(\rho_{1,2}) \, \sigma_1 \big) \\
&= \frac{1}{d_2} \, \|\mathrm{Tr}_2(\rho_{1,2})\|.
\end{align*}
In the third line, we used the standard identity $\mathrm{Tr}\big(\rho_{12} (\sigma_1 \otimes I_2)\big) = \mathrm{Tr}\big(\mathrm{Tr}_2(\rho_{12}) \, \sigma_1\big)$
to reduce the partial trace explicitly. This completes the proof.
\end{proof}

\paragraph{Pauli matrices.} The four $2\times 2$ \emph{Pauli matrices} are denoted as
    \begin{align}
        \sigma^{(0)} &= I = \begin{pmatrix} 1 & 0 \\0 & 1 \end{pmatrix}, \quad\quad\,\,\,\,\sigma^{(1)} = X = \begin{pmatrix} 0 & 1 \\ 1 & 0 \end{pmatrix}\\ \sigma^{(2)} &= Y = \begin{pmatrix} 0 & -i \\ i & 0 \end{pmatrix}, \quad\quad \sigma^{(3)} = Z = \begin{pmatrix} 1 & 0 \\ 0 & -1 \end{pmatrix}.
    \end{align}
Note that the Pauli matrices are all Hermitian, and therefore referred to as \emph{observables}.

\begin{fact}[Properties of the Pauli matrices]\label{fact:pauli-properties}
The Pauli matrices $\{\sigma^{(\alpha)}\}_{\alpha =0}^3$ satisfy
  \begin{align}\label{trace-of-product-of-paulis}
&\mathrm{Tr}[\sigma^{(0)}]=2, \quad\mathrm{Tr}\big[\sigma^{(\alpha)}\big]= 0, \,\, \forall\alpha =1,2,3, \quad \text{ and }\\
&\mathrm{Tr}\left[\sigma^{(\alpha)} \sigma^{(\beta)}\right] =  2 \delta_{\alpha,\beta}, \quad \forall \alpha,\beta =0,1,2,3\, ,
    \end{align}
and form an orthogonal basis of $\mathbb{C}^{2 \times 2}$ with respect to the Hilbert-Schmidt inner product; in other words, any linear operator $A \in \mathbb{C}^{2 \times 2}$ can be written as a complex linear combination of the form
\begin{align}
A = \sum_{\alpha=0}^3  \mathrm{Tr}\left[\sigma^{(\alpha)} A\right] \cdot \sigma^{(\alpha)}.
\end{align}
\end{fact}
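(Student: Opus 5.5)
The plan is a direct computation. First I will verify the trace identities from the explicit matrices. $\sigma^{(0)}=I$ has trace $2$, and $X$, $Y$, $Z$ each have zero diagonal or diagonal $(1,-1)$, so they are traceless. For the product identity I will use the multiplication table of the Pauli matrices. Each satisfies $(\sigma^{(\alpha)})^2=I$, which gives $\mathrm{Tr}[\sigma^{(\alpha)}\sigma^{(\alpha)}]=\mathrm{Tr}[I]=2$. For distinct $\alpha,\beta\in\{1,2,3\}$ one has $\sigma^{(\alpha)}\sigma^{(\beta)}=\pm i\,\sigma^{(\gamma)}$, for instance $XY=iZ$, $YZ=iX$, $ZX=iY$, where $\gamma$ is the third index. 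The mixed products with $\sigma^{(0)}$ are just $\sigma^{(\beta)}$. So every off-diagonal product is a scalar multiple of a traceless Pauli, which yields $\mathrm{Tr}[\sigma^{(\alpha)}\sigma^{(\beta)}]=2\delta_{\alpha,\beta}$. All four matrices are Hermitian, so $\mathrm{Tr}[\sigma^{(\alpha)\dagger}\sigma^{(\beta)}]=\mathrm{Tr}[\sigma^{(\alpha)}\sigma^{(\beta)}]$, and this is exactly orthogonality under the Hilbert--Schmidt inner product $\langle A,B\rangle=\mathrm{Tr}[A^\dagger B]$.

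Second, I will deduce the basis property. Four pairwise orthogonal nonzero vectors are linearly independent. Since $\dim_{\mathbb{C}}\mathbb{C}^{2\times 2}=4$, they form a basis. I can also check spanning concretely by noting that $\tfrac12(I+Z)$, $\tfrac12(I-Z)$, $\tfrac12(X+iY)$ and $\tfrac12(X-iY)$ are the four matrix units $E_{11}$, $E_{22}$, $E_{12}$, $E_{21}$.

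Third, I will compute the coefficients. Write $A=\sum_\beta c_\beta\sigma^{(\beta)}$, multiply on the left by $\sigma^{(\alpha)}$, and take the trace. The orthogonality relation gives $\mathrm{Tr}[\sigma^{(\alpha)}A]=2c_\alpha$, hence
\begin{align*}
A=\sum_{\alpha=0}^3 \tfrac{1}{2}\,\mathrm{Tr}\big[\sigma^{(\alpha)}A\big]\,\sigma^{(\alpha)}.
\end{align*}
This differs from the displayed formula in the statement by a factor $\tfrac12$. Taking $A=I$ shows the factor is needed: the displayed formula would give $\mathrm{Tr}[I]\cdot I=2I$. I will therefore prove the expansion with the $\tfrac12$ normalization. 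Equivalently, the statement holds verbatim for the rescaled orthonormal basis $\{\sigma^{(\alpha)}/\sqrt2\}$. The main result uses only the existence and uniqueness of such an expansion, so this normalization does not affect anything downstream.

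I expect no real obstacle. The only point needing care is the normalization constant in the last step.
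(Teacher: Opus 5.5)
Your proof is correct. The paper states this fact without proof, so your direct computation (explicit traces, the Pauli multiplication table, orthogonality plus a dimension count, then reading off coefficients) is the natural argument.

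You are also right that the displayed expansion is missing a factor of $\tfrac12$: taking $A=I$ gives $2I$. The paper itself uses your corrected normalization when it derives the components in \Cref{algebra-fact-pauli-decomposition}, namely $A_s^{(\alpha)}=\tfrac12\mathrm{Tr}_r\bigl[(I\otimes\sigma_r^{(\alpha)})h_{s,r}\bigr]$. What later results rely on is the trace-orthogonality relation and the basis property, both of which you establish, so the typo has no downstream effect.
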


\subsection{Local Hamiltonian Problem}

We begin by recalling the promise-problem formulation of the local Hamiltonian problem.  This language is convenient because local Hamiltonian problems naturally come with an energy gap promise: an instance is assumed to be either below a lower threshold or above an upper threshold, with no requirement on what happens in between.

\begin{definition}[Binary Promise Problem]
Let $\{0, 1\}^*$ be the set of all binary strings. $A = \left(A_{\mathrm{YES}}, A_{\mathrm{NO}}\right)$ is called a binary promise problem when $A_{\mathrm{YES}} \subseteq \{0, 1\}^*$, $A_{\mathrm{NO}} \subseteq \{0, 1\}^*$, and $A_{\mathrm{YES}} \cap A_{\mathrm{NO}} = \emptyset$.
\end{definition}

We now specialize this general notion to the $2$-local Hamiltonian problem.  Throughout the paper, we keep track of both the absolute promise gap and the relative promise gap, since the latter is the natural normalization when the Hamiltonian has $m$ local terms.

\begin{definition}[$2$-\textbf{L}ocal \textbf{H}amiltonian Problem \kLH]
\label{defn:k-LH}
The $2$-local Hamiltonian problem notated as \ \kLH\ is a binary promise problem where the input is a classical binary string $x = (H, a, b)$ such that:
\begin{itemize}
    \item $H$ is a $2$-local Hamiltonian $H = \sum\limits_{(i, j) \in E} h_{i, j}$ on a total of $n$ qubits where $E \subseteq [n] \times [n]$ and $|E| = m = \poly(n)$ and each $h_{i,j}$ is a Hermitian matrix with a bounded operator norm $\|h_{i,j}\| \leq 1$ and its entries are specified by $\poly(n)$ bits and $h_{i,j}$ is non-identity on at most $2$ qubits (namely $i$, $j$),
    \item $a$ and $b$ are two numbers represented with $\poly(n)$ bits such that $a < b$; the gap $\Gamma = b - a$ is called the \textbf{absolute promise gap} and $\gamma = \Gamma / m$ is called the \textbf{relative promise gap},
    \item for yes-instances, there exists an $n$-qubit quantum state $\ket{\psi}$ such that $\bra{\psi}H\ket{\psi} \leq a$ (i.e. energy of the state w.r.t. $H$ is at most $a$),
    \item for no-instances, for every $n$-qubit quantum state $\ket{\psi}$, it holds that $\bra{\psi}H\ket{\psi} \geq b$ (i.e. energy of the state w.r.t. $H$ is at least $b$), and
    \item it is promised that any instance will be either a yes or no instance.
\end{itemize}
\end{definition}

The main objects of this paper are restricted versions of the local Hamiltonian problem in which the local terms commute, or nearly commute, with one another.  The exactly commuting case serves as the classical benchmark against which we compare the almost commuting setting.

\begin{definition}[Commuting $2$-Local Hamiltonian Problem]
    The Commuting Local Hamiltonian Problem notated as $\gamma$-$2$-$\mathrm{CLH}$ is defined as the \kLH\ problem in \Cref{defn:k-LH} with an additional restriction on the input that any pair of terms commute; in other words,
    \begin{equation}
    \left[h_{i, j}, h_{k, \ell}\right] = 0, \quad\quad \forall (i, j) \in E, \,\, \forall (k, \ell) \in E.
    \end{equation}
\end{definition}

We also introduce the almost commuting analogue, where exact commutation is relaxed to a uniform operator-norm bound on all pairwise commutators.  This is the promise problem to which our rounding theorem will later be applied.

\begin{definition}[Almost Commuting Local Hamiltonian Problem]
The Almost Commuting $2$-Local Hamiltonian Problem notated as $(\gamma, \eps)$-$2$-$\mathrm{ACLH}$ is defined as the \kLH\ problem in \Cref{defn:k-LH} with an additional restriction that any pair of terms $\eps$-almost-commute:
\begin{align}
\,\,\,\, \|\big[h_{i, j}, h_{k, \ell}\big]\| \leq \eps \, , \quad\quad \forall (i, j) \in E, \,\, \forall (k, \ell) \in E.
\end{align}
\end{definition}

\subsection{Complexity Classes}

We next recall the complexity-theoretic notions used to state our containment and hardness consequences.  Since the local Hamiltonian problem is naturally a promise problem, we use the corresponding promise-problem formulations of $\NP$ and $\QMA$.

\begin{definition}[$\NP$]
   A binary promise problem~\footnote{We technically define a promise version of $\NP$. Historically, containment in $\NP$ was defined when any binary string was either yes or no. That promise version of $\NP$ has been recently widely adopted when talking about promise problems.} $A = \left(A_{\mathrm{YES}}, A_{\mathrm{NO}}\right)$ is contained in $\NP$ if there exist polynomials $p(n), q(n)$ and a deterministic verifier $V$ such that on instance $x \in A_{\mathrm{YES}} \cup A_{\mathrm{NO}}$ with $|x| = n$:
    \begin{enumerate}
        \item \textbf{Completeness:} $x \in A_{\mathrm{YES}} \: \Longrightarrow \: \exists w: |w| = q(n) \text{ and } V(x, w) \text{ outputs } 1 \text{ in } p(n) \text{ time; and}$
        \item \textbf{Soundness:} $ x \in A_{\mathrm{NO}} \: \Longrightarrow \: \forall w: V(x, w) \text{ outputs } 0 \text{ in } p(n) \text{ time.}$
    \end{enumerate}
\end{definition}

In contrast to $\NP$, the class $\QMA$ allows the verifier to receive a quantum witness and to perform a quantum verification procedure.  This is the natural quantum analogue of $\NP$ and is the complexity class captured by the general local Hamiltonian problem.

\begin{definition}[$\QMA$]
   A promise problem $A = \left(A_{\mathrm{YES}}, A_{\mathrm{NO}}\right)$ is contained in $\QMA$ if there exist polynomials $p(n), q(n), r(n)$ and a uniform family of quantum verifiers $\{V_n\}$ (than run in $p(n)$ time) such that on instance $x \in A_{\mathrm{YES}} \cup A_{\mathrm{NO}}$ with $|x| = n$:
    \begin{enumerate}
        \item \textbf{Completeness:} If $x$ is a yes-instance, i.e. $x \in A_{\mathrm{YES}}$, then there exists an $q(n)$-qubit quantum state $\ket{\psi}$ such that
        \begin{equation}
            \Pr[V_n\left(\ket{x} \otimes \ket{0}^{\otimes r(n)} \otimes \ket{\psi}\right) \text{ accepts } ] \geq 2/3, \text{and}
        \end{equation}
        \item \textbf{Soundness:} If $x$ is a no-instance, i.e. $x \in A_{\mathrm{NO}}$, then for any $q(n)$-qubit quantum state $\ket{\psi}$, 
        \begin{equation}
            \Pr[V_n\left(\ket{x} \otimes \ket{0}^{\otimes r(n)} \otimes \ket{\psi}\right) \text{ accepts }] \leq 1/3.
        \end{equation}
    \end{enumerate}
\end{definition}

We will also use the standard notion of completeness for promise-problem complexity classes.  In our setting, completeness is used to formalize the sense in which the local Hamiltonian problem captures the full difficulty of $\QMA$.

\begin{definition}[Completeness]
    We say that a promise problem $A = \left(A_{\mathrm{YES}}, A_{\mathrm{NO}}\right)$ is complete for the class $\mathcal{C}$, notated as $\mathcal{C}$-complete if:
    \begin{enumerate}
        \item $A \in \mathcal{C}$; and
        \item for any promise problem $B = \left(B_{\mathrm{YES}}, B_{\mathrm{NO}}\right) \in \mathcal{C}$, there exists a polynomial-time reduction $f$ such that:
        \begin{enumerate}
            \item $x \in B_{\mathrm{YES}} \: \Longrightarrow \: f(x) \in A_{\mathrm{YES}}$; and
            \item $x \in B_{\mathrm{NO}} \: \Longrightarrow \: f(x) \in A_{\mathrm{NO}}$.
        \end{enumerate}
    \end{enumerate}
\end{definition}

Finally, we record the standard hardness statement for the $2$-local Hamiltonian problem in the relative-gap notation used throughout this paper.  This result provides the baseline $\QMA$-hardness result against which our $\NP$ containment for almost commuting Hamiltonians should be compared.

\begin{theorem}[\cite{kkr}]
    The \kLH\ problem is $\QMA$-complete when $\gamma = O(m^{-c})$ for any $c > 0$.
\end{theorem}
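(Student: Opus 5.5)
The plan is to treat the theorem as a restatement of Kempe--Kitaev--Regev in the relative-gap normalization of \Cref{defn:k-LH}. The work is converting their absolute inverse-polynomial promise gap into a relative gap $\gamma = \Gamma/m$ and checking the norm bound $\|h_{i,j}\|\le 1$. I would prove the two directions separately.

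\textbf{Containment in $\QMA$.} I would use Kitaev's standard verifier. The verifier samples a uniformly random term $h_{i,j}$, shifted and rescaled so that $0\preceq (I+h_{i,j})/2\preceq I$. It then performs the two-outcome measurement $\{(I+h_{i,j})/2,\,(I-h_{i,j})/2\}$ on qubits $i,j$ of the witness, which is implementable to exponential precision since the term acts on two qubits. It rejects with probability $\tfrac12 + \tfrac{\langle\psi|H|\psi\rangle}{2m}$.
\begin{itemize}
\item The acceptance probabilities in the two cases then differ by $\Gamma/(2m)=\gamma/2$.
\item Since $\gamma = \Omega(m^{-c'})$ for the polynomially bounded gaps in question (the statement asserts hardness for some polynomially small $\gamma$), this separation is inverse-polynomial.
\item Standard parallel repetition with fresh witness copies (or Marriott--Watrous amplification) boosts it to the $2/3$ versus $1/3$ thresholds.
\end{itemize}

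\textbf{$\QMA$-hardness.} I would follow KKR.
\begin{enumerate}
\item Start from Kitaev's circuit-to-Hamiltonian construction with a unary clock. For a verification circuit with $T=\poly(n)$ gates this gives a $5$-local Hamiltonian $H_{\mathrm{in}}+H_{\mathrm{out}}+H_{\mathrm{clock}}+H_{\mathrm{prop}}$. Its ground energy is at most $\epsilon_{\mathrm{acc}}$-small in the yes case and at least $\Omega(1/T^3)$ in the no case; the no-case bound comes from the geometric lemma on the angle between the null spaces of $H_{\mathrm{prop}}$ and the other terms.
\item Reduce locality from $5$ to $3$. Following KKR, write each gate in a form where the $3$-local clock-interaction terms suffice, using gate decompositions so that each term couples one clock qubit and at most two computation qubits.
\item Apply the third-order perturbation-theory gadgets with mediator qubits to replace each $3$-local term by $2$-local ones. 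The supporting projection lemma shows that the low-energy part of $H_{\mathrm{gadget}}=H+\lambda V$ approximates the target effective Hamiltonian to within any desired inverse-polynomial error, with polynomially large $\lambda$.
\item Divide the whole Hamiltonian by its largest term norm, which is polynomial, so that $\|h_{i,j}\|\le 1$. Group terms by qubit pair, which keeps the number of terms $m=\poly(n)$. After rescaling, the absolute gap $\Gamma$ is still inverse-polynomial, so $\gamma=\Gamma/m = O(m^{-c})$ for a suitable constant $c$, and adding more padding or dummy terms increases $m$ to reach any prescribed polynomial exponent.
\end{enumerate}

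\textbf{Main obstacle.} The main obstacle is the gadget step, specifically proving that the perturbed spectrum below the cutoff tracks the effective Hamiltonian closely enough that the inverse-polynomial gap survives. The issue is that the third-order term must dominate the spurious first- and second-order contributions. I would try first to invoke KKR's projection lemma directly, via the self-energy expansion: choose $\lambda$ polynomially large relative to the norm of $V$ and the desired precision, and track how the gadget's penalty terms inflate the norms before the final rescaling.
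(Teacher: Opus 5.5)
\textbf{There is a genuine gap in the last clause of your step 4, and that clause carries the actual content of this statement.} KKR (and your re-derivation of it) only gives hardness for \emph{some} fixed exponent $c_0$. After rescaling, the absolute gap $\Gamma$ is inverse-polynomial, so $\gamma=\Gamma/m\le 1/m$, which forces $c_0\ge 1$.

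Padding with dummy terms increases $m$ while leaving $\Gamma$ unchanged. It therefore \emph{decreases} $\gamma=\Gamma/m$ and can only push the exponent up. That is the trivial direction: you can always weaken a promise by moving $a$ and $b$ closer together. No amount of such padding reaches, say, $\gamma=m^{-1/2}$ or $\gamma=m^{-0.03}$, yet the statement requires hardness for every $c>0$.

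\textbf{The missing idea is to grow the absolute gap together with $m$.} Take $k=m^{K}$ non-interacting copies of a hard instance on disjoint registers, $H_k=\sum_{\ell=1}^{k}H^{(\ell)}$ on $kn$ qubits, with thresholds $(ka,kb)$.
\begin{itemize}
\item Since $\lambda_{\min}(H_k)=k\,\lambda_{\min}(H)$, yes- and no-instances are preserved.
\item The new instance has $m'=km=m^{K+1}$ two-local terms, each of norm at most $1$.
\item The absolute gap becomes $k\Gamma=km\gamma$, so the relative gap is still $\gamma=m^{-c_0}=(m')^{-c_0/(K+1)}$.
\end{itemize}
Choosing the constant $K$ with $c_0/(K+1)\le c$ gives hardness for any $c>0$. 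This is precisely the paper's argument, with $c_0=3$ and $K=99$.

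With that step added, the rest of your plan is fine. Re-proving the circuit-to-Hamiltonian construction and the gadget analysis is unnecessary, since the paper simply cites KKR for a fixed $c_0$. Your $\mathsf{QMA}$-containment argument is standard.
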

\begin{proof} The result shown in \cite{kkr} is that the problem is $\QMA$-complete for some particular $c$, but a simple gap-amplification argument\footnote{We have heard it credited to Lijie Chen, but it may be folkore.} gives the claimed statement. We illustrate this argument with some concrete numbers. Suppose LH is $\QMA$-hard for a relative gap of $\gamma = m^{-3} $, meaning an absolute gap in energy of $m \gamma$. Then consider a $k$-fold amplified Hamiltonian acting on a system of $kn$ qubits divided into $k$ groups of $n$ qubits, given by $H_k = \sum_{i=1}^{k} (H)_i \otimes I_{\bar{i}}$ . (That is, $H_k$ is a sum of copies of $H$ acting on the $i$th group of $n$ qubits, for $i$ running from $1$ to $k$.) This Hamiltonian has $m' = km$ terms and an absolute energy gap of $km\gamma$, so the new relative gap is $\gamma'(m') = m^{-3} = (m'/k)^{-3}$. If we set $k = m^{99}$---which gives $m' = m^{100}$ and $k = m'^{99/100}$---then $m'/k = m'^{0.01}$, so $\gamma'(m') =(m')^{-0.03}$. This establishes that the LH problem for $\gamma = m^{-0.03}$ is $\QMA$-hard.
\end{proof}

\begin{theorem}[\cite{bravyi2004commutativeversionklocalhamiltonian}]
\label{thm:clh-in-np}
    The $\gamma$-2-CLH promise problem is contained in $\NP$ for any admissible\footnote{Our definition of $\gamma$-2-CLH requires $\gamma \geq 1/\poly(n)$.} $\gamma$.
\end{theorem}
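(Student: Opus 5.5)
Since \Cref{thm:clh-in-np} is the Bravyi--Vyalyi theorem, I would follow their approach via the \emph{Structure Lemma}. The $\NP$ witness will be a classical description of a common eigenstate of all terms, specifically a product of few-qubit states after suitable local changes of basis, and the verifier will compute its energy exactly. Because all $h_{i,j}$ commute, they share an eigenbasis. A ground state of $H$ can then be taken to be a simultaneous eigenvector, with eigenvalue $\lambda_{i,j}$ for each $h_{i,j}$, and the energy is $\sum \lambda_{i,j}$. It therefore suffices to certify the existence of a joint eigenvector with prescribed eigenvalues, which requires understanding how commuting $2$-local terms can share a qubit.

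\textbf{Step 1 (Structure Lemma for a single qubit).} Fix qubit $k$ and the terms $h_{k,j_1},\dots,h_{k,j_r}$ acting on it. Expand each term in the Pauli basis about the other qubit, as in the technical overview. By the exact version of \Cref{thm:comm-propagates-down}, the single-qubit operators on qubit $k$ coming from different terms pairwise commute. If some operator appearing on qubit $k$ is non-scalar, the transitivity lemma (\Cref{lem:transitivity}) for $2\times 2$ matrices shows that all operators from distinct terms on qubit $k$ are diagonal in one common basis $\{\Pi_k, I-\Pi_k\}$. The only exception occurs when a single term has non-scalar operators on $k$ and all other terms act as the identity on $k$.

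This yields a dichotomy. Either (a) all terms touching $k$ except possibly one act on $k$ as the identity, so $k$ is effectively owned by one term, or (b) every term is block-diagonal with respect to $\Pi_k$, so qubit $k$ behaves like a classical bit $z_k\in\{0,1\}$ conditioning the terms. In case (b), for each value of $z_k$ the terms restricted to that eigenspace again commute and have lower locality. Formalizing this case analysis, including operators within the same term on qubit $k$ that do not commute with each other, is where I expect the main obstacle. I would handle it by first decomposing about the non-shared qubit and then applying transitivity through a pivot coming from another term, exactly as in the overview.

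\textbf{Step 2 (witness and verification).} The witness consists of: for each qubit of type (b), the basis $\Pi_k$ and a classical bit $z_k$; and for every term $h_{i,j}$, after conditioning on the classical bits of its qubits, a state on its remaining "owned" quantum qubits. Each owned qubit belongs to at most one term, so these states can be chosen to be eigenvectors of the term. The witness state is a product, across terms, of at most $2$-qubit states tensored with the classical basis states, and each factor is specified to polynomial precision. The verifier checks that this state is a joint eigenvector, or simply computes the energy $\sum_{i,j}\langle h_{i,j}\rangle$, which is a sum of local quantities on $O(1)$ qubits each. It accepts if the energy is at most $a+\Gamma/2$. Completeness follows because some joint eigenvector of the commuting family, which is an actual ground state, has this product form by Step 1 applied to all qubits. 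Soundness follows because any state has energy at least $b$, and the gap $\Gamma=m\gamma\ge 1/\poly(n)$ absorbs the numerical rounding errors.
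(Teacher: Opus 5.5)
Your proposal is correct, but it takes a different route from the paper. The paper gives no self-contained argument. It chains two results of Bravyi--Vyalyi: their Lemma~5, which reduces $\gamma$-2-CLH to the common eigenspace problem for $2$-local projectors, and their Theorem~3, which places that problem in $\NP$ for arbitrary local dimension. You instead reprove the qubit case from the paper's own tools.

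The exact ($\eps=0$) case of \Cref{thm:comm-propagates-down} together with \Cref{lem:transitivity} gives your per-qubit dichotomy: either at most one term acts nontrivially on qubit $k$, or every term touching $k$ is block-diagonal in a single basis $\{\Pi_k, I-\Pi_k\}$. In the latter case, operators within one term are controlled by passing through a non-scalar operator from a second term. $H$ then commutes with every such $\Pi_k$, and the remaining qubits are owned by pairwise disjoint terms. Hence $\lambda_{\min}(H)=\min_z\lambda_{\min}(H|_z)$ is attained by a product of states on blocks of at most two qubits, and no further recursion is needed.

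What your route buys is an elementary, explicit witness, namely a classical description of such a product state, and a verifier that only evaluates local expectation values. Soundness is automatic because every state has energy at least $b$. The finite-precision description is absorbed by the slack $\Gamma/2=m\gamma/2$, so you avoid the exact eigenvalue bookkeeping of the common-eigenspace formulation. For the same reason, you should keep the energy test rather than a finite-precision joint-eigenvector check.

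Your construction is also exactly the $\eps=0$ shadow of the paper's rounding algorithm. Your classical qubits are those that receive a gapped pivot $R_i$, and your owned qubits are those with $R_i=I$.

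What the citation buys is generality. Bravyi--Vyalyi's argument covers qudits of any dimension, where your transitivity step fails. Your proof is tied to qubits, which is all the statement as used here requires.
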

\begin{proof}
    Lemma~5 of~\cite{bravyi2004commutativeversionklocalhamiltonian} shows that the problem $\gamma$-2-CLH for any value of $\gamma$ is contained in $\NP$ if the "commmon eigenspace problem" (CES) for 2-local projectors is contained in $\NP$. This latter containment is shown in Theorem~3 of~\cite{bravyi2004commutativeversionklocalhamiltonian}.
\end{proof}

\section{Commutativity in the World of Qubit Operators}

In this section, we collect the qubit-specific facts that make the
pinching step possible.  We first define the pinching and snapping
operations for single-qubit operators, and then prove the elementary norm
bounds that control the errors they introduce.  We also isolate the
transitivity phenomenon for commuting qubit operators, which is one of the
main algebraic features that fails in higher local dimension and is used
crucially in our rounding argument.

\subsection{Pinching by a Hermitian Qubit Operator}

\begin{definition}[Pinching operator]\label{def:pinching-operator}
Let $A \in \mathrm{Herm}(\mathbb{C}^{2})$ be Hermitian with spectral decomposition 
\begin{align}
A = \lambda_{\min}(A)\,\Pi + \lambda_{\max}(A)\,(I-\Pi),
\end{align}
where $\Pi$ and $(I-\Pi)$ are the eigenspace projectors. Then, the pinching operator $\mathcal{P}_A: \mathrm{L}(\mathbb{C}^{2})\rightarrow \mathrm{L}(\mathbb{C}^{2})$ associated with the eigenspace projectors of $A$ is defined as
\begin{align}
    \mathcal{P}_A(X) = \Pi \, X \, \Pi + (I-\Pi) X (I-\Pi)\, , \quad\quad \text{for } X \in \mathbb{C}^{2 \times 2}.
    \end{align}
    
\end{definition}

Equipped with \Cref{def:pinching-operator}, we now show the following result.

\begin{lemma}[Pinching by a Hermitian qubit operator]\label{thm:pinch}
Let $A \in \mathrm{Herm}(\mathbb{C}^{2})$ be a Hermitian operator with spectral decomposition given by
\begin{align}
A = \lambda_{\min}(A)\,\Pi + \lambda_{\max}(A)\,(I-\Pi)
\end{align}
with eigenvalues $\lambda_{\min},\lambda_{\max} \in \mathbb{R}$ and eigenspace projectors $\Pi$ and $(I-\Pi)$. Let $\mathcal{P}_A$ be the pinching operator in \Cref{def:pinching-operator} with respect to $A$.
Then, for any $B \in \mathbb{C}^{2 \times 2}$, the pinched matrix $\mathcal{P}_A(B)$ satisfies
\begin{align}
[\mathcal{P}_A(B), A] = 0
\qquad \text{and} \qquad
\|B - \mathcal{P}_A(B)\|
= \frac{\|[B,A]\|}{\Delta(A)} \,,
\end{align}
where $\Delta(A) = \lambda_{\max}(A) - \lambda_{\min}(A)$ is the spectral gap of $A$.
\end{lemma}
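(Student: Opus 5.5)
The plan is to work in the eigenbasis of $A$ and split $B$ into its block-diagonal and off-diagonal parts relative to the projectors $\Pi$ and $I-\Pi$. Write $\Pi^\perp = I-\Pi$ and decompose
\begin{align}
B = \underbrace{\Pi B \Pi + \Pi^\perp B \Pi^\perp}_{\mathcal{P}_A(B)} + \underbrace{\Pi B \Pi^\perp + \Pi^\perp B \Pi}_{=:B_{\mathrm{off}}} .
\end{align}

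First, we establish commutation. Since $A = \lambda_{\min}\Pi + \lambda_{\max}\Pi^\perp$, and $\Pi$ commutes with both $\Pi X \Pi$ and $\Pi^\perp X \Pi^\perp$ (using $\Pi\Pi^\perp = 0$), we get $\Pi\,\mathcal{P}_A(B) = \Pi B\Pi = \mathcal{P}_A(B)\,\Pi$. The same holds for $\Pi^\perp$. It follows that $[\mathcal{P}_A(B),A]=0$.

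Second, we compute the commutator. Because $\mathcal{P}_A(B)$ commutes with $A$, we have $[B,A] = [B_{\mathrm{off}},A]$. A direct calculation with $A\Pi = \lambda_{\min}\Pi$ and $A\Pi^\perp=\lambda_{\max}\Pi^\perp$ gives
\begin{align}
[B,A] = (\lambda_{\max}-\lambda_{\min})\,\Pi B \Pi^\perp - (\lambda_{\max}-\lambda_{\min})\,\Pi^\perp B \Pi = \Delta(A)\,\big(\Pi B\Pi^\perp - \Pi^\perp B \Pi\big).
\end{align}
Note that $B - \mathcal{P}_A(B) = \Pi B\Pi^\perp + \Pi^\perp B\Pi$. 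So the claim reduces to showing $\|\Pi B\Pi^\perp - \Pi^\perp B\Pi\| = \|\Pi B\Pi^\perp + \Pi^\perp B\Pi\|$ (assuming $\Delta(A)>0$; the degenerate case is excluded by the formula).

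Third, we prove the norm equality, which is the only step needing care. Conjugating by the unitary $U = \Pi - \Pi^\perp$ maps $\Pi B \Pi^\perp + \Pi^\perp B \Pi$ to $-(\Pi B \Pi^\perp + \Pi^\perp B \Pi)$, so that alone doesn't help. Instead we multiply on one side: $U(\Pi B\Pi^\perp + \Pi^\perp B\Pi) = \Pi B \Pi^\perp - \Pi^\perp B\Pi$. Since the operator norm is unitarily invariant under one-sided multiplication, the two norms agree. Combining this with the second step yields $\|B-\mathcal{P}_A(B)\| = \|[B,A]\|/\Delta(A)$.

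The qubit dimension is not essential here beyond having a two-projector spectral decomposition. The main obstacle is simply noticing the one-sided unitary trick in the third step; everything else is routine bookkeeping of $\Pi\Pi^\perp=0$.
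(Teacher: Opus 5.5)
Your proposal is correct and is the paper's argument written basis-free. The paper diagonalizes $A$, computes $[B,A]$ entrywise, and observes $B-\mathcal{P}_A(B)=\Delta(A)^{-1} Z\,[B,A]$, where $Z$ is exactly your unitary $U=\Pi-\Pi^\perp$ in the eigenbasis of $A$; your version also makes clear that the identity holds for any Hermitian $A$ with two distinct eigenvalues, in any dimension.
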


\begin{proof}
Both the operator norm and the commutator are invariant under unitary conjugation. Hence, without loss of generality, we may assume that $A$ is diagonal in its eigenbasis:
\begin{align}
A =
\begin{pmatrix}
\lambda_{\min}(A) & 0 \\
0 & \lambda_{\max}(A)
\end{pmatrix} \quad \text{ and } \quad
\quad
\Pi =
\begin{pmatrix}
1 & 0 \\
0 & 0
\end{pmatrix}.
\end{align}
Expanding $B$ in the eigenbasis of $A$, we get
\begin{align}
B =
\begin{pmatrix}
B_{1,1} & B_{1,2} \\
B_{2,1} & B_{2,2}
\end{pmatrix}.
\end{align}
A direct computation shows that
\begin{align}
[B,A]
&= BA - AB \\
&= \bigl(\lambda_{\max}(A) - \lambda_{\min}(A)\bigr)
\begin{pmatrix}
0 & B_{1,2} \\
-B_{2,1} & 0
\end{pmatrix}
= \Delta(A) \cdot
\begin{pmatrix}
0 & B_{1,2} \\
-B_{2,1} & 0
\end{pmatrix}.
\end{align}
By definition of the pinching operator, we have
\begin{align}
\mathcal{P}_A(B)
=
\begin{pmatrix}
B_{1,1} & 0 \\
0 & B_{2,2}
\end{pmatrix},
\end{align}
which is diagonal in the eigenbasis of $A$, and hence $[\mathcal{P}_A(B),A]=0$. Moreover,
\begin{align}
B - \mathcal{P}_A(B)
=
\begin{pmatrix}
0 & B_{1,2} \\
B_{2,1} & 0
\end{pmatrix}.
\end{align}
Next, consider the Pauli-$Z$ unitary
\begin{align}
Z=\begin{pmatrix}
1 & 0 \\
0 & -1
\end{pmatrix}.  
\end{align}
It is now easy to see that
\begin{align}
B - \mathcal{P}_A(B) = \Delta(A)^{-1}\cdot Z \cdot [B,A].
\end{align}
Using the unitary invariance of the operator norm with respect to $Z$,
we obtain the equality
\begin{align}
\|B - \mathcal{P}_A(B)\|
= \frac{\|Z\cdot [B,A]\|}{\Delta(A)} = \frac{\|[B,A]\|}{\Delta(A)}.
\end{align}
\end{proof}

\subsection{Snapping Hermitian qubit operators to multiples of the Identity}

In this section, we formally introduce the mathematical background for the ``snapping technique''.

\begin{definition}[Snapping operation]\label{def:snapping-operator}
The snapping operation is defined as the (nonlinear) map
\begin{align}
\mathcal{S} : \mathrm{Herm}(\mathbb{C}^{2}) \rightarrow \mathrm{Herm}(\mathbb{C}^{2}), \quad\quad X \mapsto \mathcal{S}(X) := \lambda_{\max}(X)\, I.
\end{align}
where $\lambda_{\max}(X)$ denotes the largest eigenvalue of $X$. 
\end{definition}
\begin{lemma}[Snapping a Hermitian qubit operator to the identity]\label{lemma:snapping-hermitian-qubit-hermitian-operator}
Let $B \in \mathrm{Herm}(\mathbb{C}^{2})$ be any matrix with spectral decomposition
\begin{align}
B = \lambda_{\min}(B)\,\Pi + \lambda_{\max}(B)\,(I-\Pi),
\end{align}
and let $\mathcal{S}$ denote the snapping operator defined in Definition~\ref{def:snapping-operator}. Then,
\begin{align}
\|B - \mathcal{S}(B)\|
= \lambda_{\max}(B) - \lambda_{\min}(B) = \Delta(B) \, ,
\end{align}
where $\Delta(B) = \lambda_{\max}(B) - \lambda_{\min}(B)$ is the spectral gap of $B$.
\end{lemma}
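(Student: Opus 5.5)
The plan is to compute $B - \mathcal{S}(B)$ directly in the eigenbasis of $B$, in the same way as the proof of \Cref{thm:pinch}. Write $\mathcal{S}(B) = \lambda_{\max}(B)\, I$. Using $I = \Pi + (I-\Pi)$, we can express the identity in the same spectral decomposition as $B$:
\begin{align}
B - \mathcal{S}(B) = \lambda_{\min}(B)\,\Pi + \lambda_{\max}(B)\,(I-\Pi) - \lambda_{\max}(B)\,\Pi - \lambda_{\max}(B)\,(I-\Pi) = -\bigl(\lambda_{\max}(B) - \lambda_{\min}(B)\bigr)\,\Pi .
\end{align}
So the difference is a scalar multiple of the rank-one (or zero) projector $\Pi$.

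Next, we evaluate the operator norm. If $\Pi \neq 0$, then $\|\Pi\| = 1$, because $\Pi$ is a nonzero orthogonal projector with eigenvalues in $\{0,1\}$. Together with absolute homogeneity of the norm and the ordering $\lambda_{\max}(B) \geq \lambda_{\min}(B)$, this gives
\begin{align}
\|B - \mathcal{S}(B)\| = \bigl|\lambda_{\max}(B) - \lambda_{\min}(B)\bigr| = \lambda_{\max}(B) - \lambda_{\min}(B) = \Delta(B).
\end{align}

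The only point that needs care is the degenerate case $B = \lambda I$. Here the decomposition can be taken with any $\Pi$, and both sides equal zero, since $\mathcal{S}(B) = B$ and $\Delta(B) = 0$. With that case handled, the identity holds in general.
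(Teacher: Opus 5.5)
Your proof is correct and is essentially the paper's argument: the paper passes to the eigenbasis of $B$ and computes $\mathcal{S}(B)-B=\mathrm{diag}(\lambda_{\max}(B)-\lambda_{\min}(B),0)$, which is your identity $B-\mathcal{S}(B)=-(\lambda_{\max}(B)-\lambda_{\min}(B))\,\Pi$ written in coordinates. Your explicit treatment of the degenerate case $B=\lambda I$ is a small extra piece of care that the paper leaves implicit.
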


\begin{proof}
The operator norm is invariant under unitary conjugation. Hence, without loss of generality, we may assume that $B$ is diagonal in its eigenbasis:
\begin{align}
B =
\begin{pmatrix}
\lambda_{\min}(B) & 0 \\
0 & \lambda_{\max}(B)
\end{pmatrix}.
\end{align}
By definition of the snapping operation,
\begin{align}
\mathcal{S}(B) = \lambda_{\max}(B)\, I =
\begin{pmatrix}
\lambda_{\max}(B) & 0 \\
0 & \lambda_{\max}(B)
\end{pmatrix}.
\end{align}
Therefore,
\begin{align}
\mathcal{S}(B) - B =
\begin{pmatrix}
\lambda_{\max}(B) - \lambda_{\min}(B) & 0 \\
0 & 0
\end{pmatrix}.
\end{align}
The operator norm of a Hermitian matrix equals the maximum absolute value of its eigenvalues. Consequently, we recover the spectral gap identity, as claimed:
\begin{align}
\|B - \mathcal{S}(B)\|
= \lambda_{\max}(B) - \lambda_{\min}(B) = \Delta(B).
\end{align}
\end{proof}

\subsection{Transitivity of (Almost) Commutativity of Qubit Operators}

\begin{lemma}[Gapped spectra imply transitivity of commutativity]\label{lem:transitivity}
Let $A,B,C \in \mathrm{Herm}(\mathbb{C}^{2})$ and suppose that $B$ has non-zero spectral gap $\Delta(B)=\lambda_{\max}(B) - \lambda_{\min}(B) > 0$. Then, 
$$
[A,B] = 0 \quad \text{ and } \quad [B,C] = 0 \quad \text{ implies } \quad [A,C]=0.
$$
\end{lemma}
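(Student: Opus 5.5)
The plan is to use the spectral decomposition of $B$ together with the fact that a gapped qubit operator has two one-dimensional eigenspaces. Since $\Delta(B) > 0$, write
\begin{align}
B = \lambda_{\min}(B)\,\Pi + \lambda_{\max}(B)\,(I-\Pi),
\end{align}
where $\lambda_{\min}(B) \neq \lambda_{\max}(B)$. Because we are in dimension $2$, both $\Pi$ and $I-\Pi$ are rank-one projectors, say $\Pi = \proj{u}$ and $I-\Pi = \proj{v}$ for an orthonormal basis $\{\ket{u},\ket{v}\}$ of $\mathbb{C}^2$. This uniqueness of the eigenbasis is exactly where the gap hypothesis enters.

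The key step is to show that any Hermitian $A$ with $[A,B]=0$ is diagonal in the basis $\{\ket{u},\ket{v}\}$. I would argue this directly, in the same style as the proof of \Cref{thm:pinch}. Work in the eigenbasis of $B$ and compute
\begin{align}
[A,B] = \Delta(B) \begin{pmatrix} 0 & -A_{1,2} \\ A_{2,1} & 0 \end{pmatrix}
\end{align}
(up to sign conventions). Since $\Delta(B) > 0$, the condition $[A,B]=0$ forces $A_{1,2} = A_{2,1} = 0$. Equivalently, \Cref{thm:pinch} gives $\|A - \mathcal{P}_B(A)\| = \|[A,B]\|/\Delta(B) = 0$, so $A = \mathcal{P}_B(A) = a_1 \Pi + a_2 (I-\Pi)$. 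The same argument applied to $C$, using $[B,C]=0$, gives $C = c_1 \Pi + c_2 (I-\Pi)$.

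Once both $A$ and $C$ are written as linear combinations of the orthogonal projectors $\Pi$ and $I-\Pi$, they commute. This follows from $\Pi(I-\Pi) = (I-\Pi)\Pi = 0$ and $\Pi^2 = \Pi$, since $AC = a_1c_1\Pi + a_2c_2(I-\Pi) = CA$.

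No step here is a real obstacle. The one point needing care is the use of $\Delta(B)>0$ to divide out, or equivalently to invoke the equality case of \Cref{thm:pinch}. The remark in the footnote, that degenerate eigenspaces break transitivity, shows this hypothesis cannot be dropped. It also shows that the argument uses dimension $2$ only through the fact that a non-scalar Hermitian operator has non-degenerate eigenspaces.
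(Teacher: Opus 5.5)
Your proof is correct and follows essentially the same route as the paper. Both express $A$ and $C$ as $a_1\Pi + a_2(I-\Pi)$ and $c_1\Pi + c_2(I-\Pi)$ using the rank-one eigenprojectors of $B$, then conclude $[A,C]=0$ from $\Pi(I-\Pi)=(I-\Pi)\Pi=0$; the only difference is that you justify the diagonalization step explicitly (vanishing off-diagonal entries of the commutator, or equality in the pinching lemma with $\Delta(B)>0$), whereas the paper appeals to simultaneous diagonalization plus uniqueness of $B$'s eigenbasis, so your version is more self-contained there.
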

\begin{proof}
Since $\Delta(B) > 0$, the matrix $B$ admits a \emph{unique} eigenbasis in which it is diagonal. Let $\{\Pi, \Pi^\perp\}$ be the orthogonal projectors onto the respective eigenspaces.
Because $A$ commutes with $B$, and $C$ commutes with $B$, we can use simultaneous diagonalization (see Fact~\ref{fact:sim-diag}) and the fact\footnote{Without that condition, we can only conclude that $A$ and $B$ can be simultaneously diagonalized and that $B$ and $C$ can be simultaneously diagonalized, but not necessarily all three can be simultaneously diagonalized. If $B$ is a multiple of the identity, then any two non-commuting matrices will be a counterexample.} $\Delta(B) > 0$ to diagonalize all the three matrices in the unique basis $\{\Pi, \Pi^\perp\}$ as follows:
\begin{equation}
    A = a_1 \Pi + a_2 \Pi^\bot
\end{equation}
\begin{equation}
    B = b_1 \Pi + b_2 \Pi^\bot
\end{equation}
\begin{equation}
    C = c_1 \Pi + c_2 \Pi^\bot
\end{equation}
Using the orthogonality of $\{\Pi, \Pi^\perp\}$, we then find that
\begin{align*}
    [A, C] &= \left[a_1 \Pi + a_2 \Pi^\bot,  c_1 \Pi + c_2 \Pi^\bot\right]\\
    &= a_1c_1 [\Pi, \Pi] + a_1 c_2 [\Pi, \Pi^\bot] + a_2 c_1 [\Pi^\bot, \Pi] + a_2 c_2 [\Pi^\bot, \Pi^\bot] = 0.
\end{align*}
\end{proof}

The exact transitivity statement above is useful only after the relevant
operators have been made exactly commuting.  In the rounding argument,
however, we typically know only that two local components almost commute
with a common gapped pivot.  The following lemma provides the corresponding
robust version: if $A$ and $C$ each almost commute with a sufficiently
gapped qubit operator $B$, then $A$ and $C$ must almost commute with each
other. 

\begin{lemma}[Gapped spectra imply transitivity of almost-commutativity]\label{lem:transitivity-almost}
    Let $A,B,C \in \mathrm{Herm}(\mathbb{C}^{2})$ and suppose that $B$ has non-zero gap $\Delta(B)=\lambda_{\max}(B) - \lambda_{\min}(B) > \xi$. Then, 
$$
\|[A,B]\| \leq \eps \quad \text{ and} \quad \|[B,C]\| \leq \eps \quad \text{ implies } \quad \|[A,C]\| \leq \frac{2\eps}{\xi}\Big(\|A\| + \|C\| + \frac{\eps}{\xi} \Big) .
$$
When $\|A\|\leq 1$, $\|C\|\leq 1$, and $\xi \geq  \eps >0$, the bound becomes \[\|[A, C]\| \leq \frac{4\eps}{\xi} + \frac{2\eps^2}{\xi^2} \leq \frac{6\eps}{\xi}.\]
\end{lemma}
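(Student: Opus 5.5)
The plan is to reduce the robust statement to the exact transitivity of \Cref{lem:transitivity}, using pinching by $B$ as the rounding device and \Cref{lem:transitivity-of-almost-commutatitivty} to transfer commutator bounds between nearby matrices. Concretely, let $\Pi,\Pi^\perp$ be the eigenprojectors of $B$ and set $\hat A = \mathcal{P}_B(A)$ and $\hat C = \mathcal{P}_B(C)$. By \Cref{thm:pinch}, both $\hat A$ and $\hat C$ commute with $B$, and
\[
\|A-\hat A\| = \frac{\|[A,B]\|}{\Delta(B)} \leq \frac{\eps}{\xi}, \qquad \|C-\hat C\| \leq \frac{\eps}{\xi}.
\]
Since $\Delta(B)>\xi\geq 0$, the matrix $B$ is nondegenerate, so \Cref{lem:transitivity} applies to $\hat A, B, \hat C$ and gives $[\hat A,\hat C]=0$. (One can also see this directly: both pinched matrices are diagonal in the basis $\{\Pi,\Pi^\perp\}$.)

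We then apply \Cref{lem:transitivity-of-almost-commutatitivty} twice. Replacing $A$ by $\hat A$ gives
\[
\|[A,C]\| \leq 2\|A-\hat A\|\,\|C\| + \|[\hat A,C]\|.
\]
Replacing $C$ by $\hat C$ in the second term, using antisymmetry of the commutator, gives
\[
\|[\hat A,C]\| \leq 2\|C-\hat C\|\,\|\hat A\| + \|[\hat A,\hat C]\| = 2\|C-\hat C\|\,\|\hat A\|.
\]
For the factor $\|\hat A\|$ we have two options. The crude bound is $\|\hat A\|\leq \|A\|+\|A-\hat A\| \leq \|A\|+\eps/\xi$. Alternatively, pinching is a contraction in operator norm, since it is a block-diagonal compression, so $\|\hat A\|\leq\|A\|$. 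The first option reproduces exactly the stated form. Combining everything,
\[
\|[A,C]\| \leq \frac{2\eps}{\xi}\|C\| + \frac{2\eps}{\xi}\Big(\|A\|+\frac{\eps}{\xi}\Big) = \frac{2\eps}{\xi}\Big(\|A\|+\|C\|+\frac{\eps}{\xi}\Big).
\]

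The specialization follows by substitution. With $\|A\|,\|C\|\leq 1$, the bound becomes $4\eps/\xi + 2\eps^2/\xi^2$. When $\xi\geq\eps$ we have $\eps^2/\xi^2\leq \eps/\xi$, so the total is at most $6\eps/\xi$.

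There is no real obstacle here. The only care needed is in ordering the two replacements so that exactly one factor $\|\hat A\|$, rather than $\|\hat C\|$, appears, which produces the asymmetric extra term $\eps/\xi$. We also need to note that $\Delta(B)>\xi$ gives $1/\Delta(B)<1/\xi$ in the pinching bounds.
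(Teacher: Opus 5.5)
Your proposal is correct and follows the paper's proof essentially step for step: pinch $A$ and $C$ by $B$, get $[\hat A,\hat C]=0$ from exact transitivity, then apply the commutator perturbation fact twice, bounding $\|\hat A\|\le\|A\|+\eps/\xi$. Your inequality there is the correct form of what the paper writes as an equality, and your remark that pinching is contractive would in fact remove the $\eps^2/\xi^2$ term entirely.
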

\begin{proof}
 According to \Cref{thm:pinch}, there exist (via pinching by $B$) $A'$ and $C'$ such that $[A', B] = [B, C'] = 0$ and $\|A - A'\| \leq \|C - C'\| \leq \frac{\eps}{\xi}$. By \Cref{lem:transitivity}, it holds that $[C', A'] = 0$. Thus, we can bound $\|[A, C]\|$ as follows:
    \begin{align*}
        \|[A,C]\| & \leq 2 \cdot \|A - A'\| \cdot \|C\| + \|[A', C]\| \qquad \qquad \text{(via \Cref{lem:transitivity-of-almost-commutatitivty})}\\
        &\leq \frac{2\epsilon}{\xi} \cdot \|C\| + \|[C, A']\| \qquad\qquad(\|A - A'\| \leq \frac{\eps}{\xi} \text{ and } [C,A'] = -[A',C])\\
        &\leq \frac{2\epsilon}{\xi} \cdot \|C\| + 2 \cdot \|C - C'\| \cdot \|A'\| + \|[C',A']\|  \qquad\qquad \text{(via \Cref{lem:transitivity-of-almost-commutatitivty})}\\
        & \leq \frac{2\epsilon}{\xi} \cdot \|C\| + \frac{2\eps}{\xi} \cdot \|A'\| \qquad \qquad (\|C - C'\| \leq \frac{\eps}{\xi} \text{ and } [C', A'] = 0)\\
        & \leq \frac{2\eps}{\xi} \left(\|C\| + \|A\| + \eps/\xi \right) \qquad  \qquad (\|A'\| = \|A\| + \eps/\xi).
    \end{align*}
\end{proof}

\section{Local-to-Global Framework for Operators on $2$ Qubits}

In this section, we develop the local-to-global tools used in the rounding
argument.  The main point is that approximate commutation of two-qubit
Hamiltonian terms can be detected at the level of the single-qubit
operators appearing in their local Pauli decompositions.  This allows us to
translate global commutator bounds into local algebraic constraints, which
will later be used to choose pivots, perform snapping, and preserve locality
throughout the rounding procedure.

\subsection{Local Pauli Decomposition}
The following simple fact gives a decomposition of a two-qubit operator acting on qubits $s$ and $r$ into a sum of tensor product operators. Similar decompositions play a deep role in the theory of commuting Hamiltonians, for instance in Equation~(1) of~\cite{aharonov2011complexity}.

\begin{lemma}[Local Pauli decomposition]\label{algebra-fact-pauli-decomposition}
    Let $h_{s,r} \in \mathrm{L}(\mathcal{H}_s \otimes \mathcal{H}_r)$ be a linear two-qubit operator acting on the qubits labeled by $s$ and $r$. Then, $h_{s,r}$ can be decomposed as a linear combination of the form
    \begin{equation}\label{eqn:local-pauli-decomposition}
        h_{s,r} = \sum\limits_{\alpha = 0}^{3} A_s^{(\alpha)} \otimes \sigma_r^{(\alpha)}
    \end{equation}
where $\{\sigma_r^{\alpha}\}_{\alpha=0}^3$ are the Pauli matrices on the $r$-th qubit, and where $\{A_s^{(\alpha)}\}_{\alpha=0}^3$ is a ensemble of $2 \times 2$ complex matrices. Furthermore, the components $\{A_s^{(\alpha)}\}_{\alpha=0}^3$ are Hermitian if and only if $h_{s, r}$ is Hermitian.
\end{lemma}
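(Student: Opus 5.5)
The plan is to use the fact that the products $\{E_{ij}\otimes \sigma^{(\alpha)}\}$, with $E_{ij}$ the matrix units, span $\mathrm{L}(\mathcal{H}_s\otimes\mathcal{H}_r)$. Equivalently, $\mathrm{L}(\mathcal{H}_s\otimes\mathcal{H}_r)\cong \mathrm{L}(\mathcal{H}_s)\otimes \mathrm{L}(\mathcal{H}_r)$, and by \Cref{fact:pauli-properties} the Paulis form a basis of $\mathrm{L}(\mathcal{H}_r)$. Thus every operator can be written as $\sum_\alpha A^{(\alpha)}_s\otimes \sigma^{(\alpha)}_r$. To make the components explicit, I will set
\begin{align}
A_s^{(\alpha)} := \tfrac{1}{2}\,\mathrm{Tr}_r\!\big[(I_s\otimes \sigma_r^{(\alpha)})\,h_{s,r}\big].
\end{align}
I then verify the decomposition by checking it on product operators $X\otimes Y$ and extending by linearity. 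For such an operator the formula gives $A^{(\alpha)}_s = \tfrac12 \mathrm{Tr}[\sigma^{(\alpha)}Y]\,X$. Summing over $\alpha$, $\sum_\alpha X\otimes \tfrac12\mathrm{Tr}[\sigma^{(\alpha)}Y]\sigma^{(\alpha)} = X\otimes Y$ by the orthogonality relation $\mathrm{Tr}[\sigma^{(\alpha)}\sigma^{(\beta)}]=2\delta_{\alpha\beta}$. The expansion coefficient must be normalized by $1/2$ here. Since the displayed expansion in \Cref{fact:pauli-properties} omits it, I will use the normalized form directly.

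This explicit formula also gives uniqueness. If $\sum_\alpha A^{(\alpha)}\otimes\sigma^{(\alpha)} = \sum_\alpha \tilde A^{(\alpha)}\otimes\sigma^{(\alpha)}$, I apply $\tfrac12\mathrm{Tr}_r[(I\otimes\sigma^{(\beta)})\,\cdot\,]$ to both sides. Using $\mathrm{Tr}_r[(I\otimes\sigma^{(\beta)})(A\otimes\sigma^{(\alpha)})] = 2\delta_{\alpha\beta}A$, this yields $A^{(\beta)}=\tilde A^{(\beta)}$.

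For the Hermiticity equivalence there are two directions. If all the $A^{(\alpha)}_s$ are Hermitian, then $h_{s,r}^\dagger = \sum_\alpha (A^{(\alpha)}_s)^\dagger\otimes (\sigma^{(\alpha)})^\dagger = h_{s,r}$, because the Paulis are Hermitian. Conversely, if $h_{s,r}$ is Hermitian, taking adjoints gives a second decomposition $h_{s,r} = \sum_\alpha (A^{(\alpha)}_s)^\dagger\otimes\sigma^{(\alpha)}_r$. Uniqueness then forces $(A^{(\alpha)}_s)^\dagger = A^{(\alpha)}_s$. Alternatively, I can check this directly from the formula, using $\mathrm{Tr}_r[M]^\dagger = \mathrm{Tr}_r[M^\dagger]$ and the cyclicity of the partial trace over operators acting only on $r$: $(I\otimes\sigma)h$ and $h(I\otimes\sigma)$ have the same partial trace over $r$.

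The only point needing care is this uniqueness step, equivalently the partial-trace cyclicity. Cyclicity holds for factors acting only on the traced system, so it can be verified on product operators. Everything else is routine linear algebra.
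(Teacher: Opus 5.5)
Your proof is correct and is essentially the paper's argument. Both extract the components via $A_s^{(\alpha)} = \tfrac12\mathrm{Tr}_r[(I\otimes\sigma_r^{(\alpha)})h_{s,r}]$ and obtain the converse from the adjoint of this partial-trace identity; your phrasing through uniqueness is the same computation, and your alternative using partial-trace cyclicity matches the paper's step. Your remark that the expansion in \Cref{fact:pauli-properties} is missing the normalization $\tfrac12$ is also right.
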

\begin{proof}
Here, we make use of \Cref{fact:pauli-properties}; namely, that the Pauli matrices form an orthogonal basis of $\mathrm{L}(\mathbb{C}^{2})$ Hilbert-Schmidt inner product. Observing that tensor products of Pauli matrices form a basis 
of the space of linear operators on the tensor system $\mathbb{C}^{2} \otimes \mathbb{C}^{2}$, we obtain the decomposition
    \[ h_{s,r} = \sum_{\alpha=0}^{3} \sum_{\beta=0}^{3} c_{\alpha, \beta} \, \sigma_s^{(\beta)} \otimes \sigma_r^{(\alpha)} \, .\]
Next, we let $A^{(\alpha)}_s = \sum_{\beta=0}^{3} c_{\alpha, \beta} \, \sigma_s^{(\beta)}$ and show the \emph{if and only if} statement. First, if $A^{(\alpha)}_s$ is Hermitian, then its tensor product with a Hermitian Pauli matrix must also be Hermitian, and the sum $h_{s,r}$ over these tensor products is going to be Hermitian. Second, to show that if $h_{s,r}$ is Hermitian, then each $A^{(\alpha)}_s$ is Hermitian, we proceed as follows; first, observe that
\begin{align}
\mathrm{Tr}_{r}\left[\left(I \otimes \sigma_r^{(\alpha)} \right) h_{s,r} \right] &= \mathrm{Tr}_{r} \left[\sum\limits_{\beta = 0}^{3} A_s^{(\beta)} \otimes \sigma_r^{(\alpha)} \sigma_r^{(\beta)}\right] && \text{(by \Cref{eqn:local-pauli-decomposition})}\\
    &= \sum\limits_{\beta = 0}^{3} A_s^{(\beta)} \cdot \mathrm{Tr}\left(\sigma_r^{(\alpha)} \sigma_r^{(\beta)}\right) && \text{(by linearity of the trace)}\\
    &= \sum\limits_{\beta = 0}^{3} 2\delta_{\alpha, \beta}\ A_s^{(\beta)}&& (\text{by \Cref{fact:pauli-properties}})\\
    &= 2 A_s^{(\alpha)}.
\end{align}

\noindent We can similarly compute $\mathrm{Tr}_{r}\left[ h_{s,r} \left(I \otimes \sigma_r^{(\alpha)} \right)\right] = 2 A_s^{(\alpha)}$. Combining the two, we get
\begin{equation}\label{eqn:A-S-Alpha}
    A_s^{(\alpha)} = \frac{1}{2} \mathrm{Tr}_{r}\left[\left(I \otimes \sigma_r^{(\alpha)} \right) h_{s,r} \right] = \frac{1}{2} \mathrm{Tr}_{r}\left[ h_{s,r} \left(I \otimes \sigma_r^{(\alpha)} \right)\right].
\end{equation}
Putting everything together, we can verify that $A_s^{(\alpha)}$ is Hermitian by noting the following.
\begin{align}
\left(A_s^{(\alpha)}\right)^\dagger &= \frac{1}{2} \mathrm{Tr}_{r}\left[\left(I \otimes \sigma_r^{(\alpha)} \right) h_{s,r} \right]^\dagger && \text{(by \Cref{eqn:A-S-Alpha})}\\
    &= \frac{1}{2} \mathrm{Tr}_{r}\left[ h_{s,r}^\dagger \left(I \otimes \sigma_r^{(\alpha)} \right)^\dagger \right] && \text{("Socks and Shoes" property)} \\
    &= \frac{1}{2} \mathrm{Tr}_{r}\left[ h_{s,r} \left(I \otimes \sigma_r^{(\alpha)} \right) \right] && \text{($h_{s,r}$ and $\sigma_r^{(\alpha)}$ are hermitian)}\\
    &= A_s^{(\alpha)} && \text{(by \Cref{eqn:A-S-Alpha})}
\end{align}
\end{proof}

When working with local Pauli decompositions of two-qubit operators, it will be convenient for us to use the following notion of a \emph{gapped} decomposition.
\begin{definition}[Gapped Pauli decomposition]\label{def:gapped-pauli-decomp}
  A two-qubit operator $h_{s,r} \in \mathrm{Herm}(\mathcal{H}_s \otimes \mathcal{H}_r)$ acting on the qubits labeled by $s$ and $r$ is said to be $\eta$-gapped on the $s$-subsystem (notated as $(s, \eta)$-gapped) if any of the local components $\{A_s^{(\alpha)}\}_{\alpha=0}^3$ which appear in the local Pauli decomposition of the form
      \begin{equation}
        h_{s,r} = \sum\limits_{\alpha = 0}^{3} A_s^{(\alpha)} \otimes \sigma_r^{(\alpha)}
    \end{equation}
  is $\eta$-gapped; in other words, there exists a matrix $A_s^{(\alpha^*)} \in \mathbb{C}^{2 \times 2}$ which is $\eta$-gapped such that $\Delta(A_s^{(\alpha^*)}) \geq \eta$.
    The property $\left(r, \eta\right)$-gapped is defined similarly. We say the operator is $\eta$-gapped on both subsystems if it is both $\left(r, \eta\right)$-gapped and $\left(s, \eta\right)$-gapped.
\end{definition}

\subsection{Global-to-Local Inheritance of Norms of Commutators}
Let $h_{s,r}$ and $h_{s,q}$ be $2$ Hamiltonian terms on a shared qubit with index $s$ such that they $\eps$-almost commute i.e. $\|[h_{s,r}, h_{s,q}]\| \leq \eps$. Let their local components on the shared qubit be $\{A^{(\alpha)}\}$ and $\{B^{(\beta)}\}$. An important question for our techniques is: what can we say about the commutator of any pair $\|[A^{(\alpha)}, B^{(\beta)}]\|$? We will prove in~\Cref{thm:comm-propagates-down} that any pair of local components must also $\eps$-almost commute. For that proof, we will use~\Cref{lem:opnorm-orthog-components}.

\begin{restatable}[Operator norm lower bound from orthogonal components]{lemma}{OpNormOrthogCompLemma}
\label{lem:opnorm-orthog-components}
Let $\{M_i\}_i \subseteq \mathbb{C}^{d_1 \times d_1}$, and let
$\{N_i\}_i \subseteq \mathbb{C}^{d_2 \times d_2}$ be Hermitian binary
observables satisfying
\begin{align}
    N_i^2 = I_{d_2}
    \qquad \text{and} \qquad
    \mathrm{Tr}[N_i N_j] = d_2\,\delta_{i,j}.
\end{align}
Then, any operator of the form
$T = \sum_i M_i \otimes N_i$ satisfies
\begin{align}
    \|T\| \geq \|M_i\|, \qquad \text{ for every $i$.}
\end{align}
\end{restatable}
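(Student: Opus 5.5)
The plan is to extract each $M_i$ from $T$ by a partial-trace "Fourier coefficient" identity, mirroring the proof of \Cref{algebra-fact-pauli-decomposition}, and then apply an operator-norm contraction bound for the partial trace.

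\textbf{Step 1: extract $M_i$.} Fix an index $i$. Since each $N_j$ is Hermitian and $\mathrm{Tr}[N_i N_j] = d_2\,\delta_{i,j}$, linearity of the partial trace gives
\begin{align}
\frac{1}{d_2}\,\mathrm{Tr}_2\big[(I_{d_1}\otimes N_i)\,T\big] = \frac{1}{d_2}\sum_j M_j\,\mathrm{Tr}[N_i N_j] = M_i .
\end{align}

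\textbf{Step 2: bound the right-hand side.} It then suffices to show $\|\mathrm{Tr}_2[X]\| \le d_2\,\|X\|$ for an arbitrary (not necessarily Hermitian or positive) operator $X$ on $\mathbb{C}^{d_1}\otimes\mathbb{C}^{d_2}$. \Cref{fact:op-norm-ptrace} is stated only for density matrices, so I will not invoke it directly. Instead I will argue variationally. Fix an orthonormal basis $\{\ket{k}\}_{k=1}^{d_2}$ of the second factor, so that $\mathrm{Tr}_2[X] = \sum_k (I\otimes\bra{k})\,X\,(I\otimes\ket{k})$. Each summand has norm at most $\|X\|$, because $I\otimes\ket{k}$ is an isometry. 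The triangle inequality then gives $\|\mathrm{Tr}_2[X]\| \le d_2\|X\|$.

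\textbf{Step 3: conclude.} Apply Step 2 with $X = (I\otimes N_i)T$. By submultiplicativity, $\|X\| \le \|I\otimes N_i\|\,\|T\|$. Since $N_i$ is Hermitian with $N_i^2 = I$, it is unitary, so $\|I\otimes N_i\| = 1$ and $\|X\| \le \|T\|$. Combining with Step 1,
\begin{align}
\|M_i\| = \frac{1}{d_2}\,\big\|\mathrm{Tr}_2[(I\otimes N_i)T]\big\| \le \|T\| .
\end{align}

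\textbf{Main obstacle.} The only point needing care is Step 2. \Cref{fact:op-norm-ptrace} covers only density matrices, while here $X$ can be non-Hermitian. The Kraus-style decomposition of the partial trace used in Step 2 handles general $X$.
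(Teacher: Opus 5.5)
Your proof is correct, but it takes a different route from the paper's.

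\textbf{The paper's route.} The paper argues quadratically. It forms $T^\dagger T = \sum_{i,j} M_i^\dagger M_j \otimes N_i N_j$ and takes the partial trace to get $d_2 \sum_j M_j^\dagger M_j$. It then applies \Cref{fact:op-norm-ptrace} to the positive semidefinite operator $T^\dagger T$, tacitly extending that fact from density matrices by rescaling. It finishes with $\sum_j M_j^\dagger M_j \succeq M_i^\dagger M_i$.

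\textbf{Your route.} Your argument is linear. You isolate each $M_i$ with its dual observable and use the contraction $\|\mathrm{Tr}_2 X\| \le d_2 \|X\|$, which holds for arbitrary $X$. So you never need positivity, and you handle the density-matrix restriction of \Cref{fact:op-norm-ptrace} explicitly rather than implicitly.

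\textbf{Trade-offs.} In exchange, you use $N_i^2 = I$ to get $\|I\otimes N_i\| = 1$. The paper's argument uses only Hermiticity and Hilbert--Schmidt orthonormality, and it yields the slightly stronger joint bound $\|T\|^2 \ge \|\sum_j M_j^\dagger M_j\|$. You could drop $N_i^2 = I$ as well: replace your triangle-inequality step by $\|\mathrm{Tr}_2[(I\otimes N)X]\| \le \|N\|_1\|X\|$, and note that $\|N_i\|_1 \le \sqrt{d_2}\,\|N_i\|_2 = d_2$. For the application in \Cref{thm:comm-propagates-down}, where the $N$'s are two-qubit Pauli strings, the two proofs serve equally well.
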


\begin{proof}
We compute
\begin{align}
    T^\dagger T
    = \sum_{i,j} M_i^\dagger M_j \otimes N_i N_j,
\end{align}
where we used that each $N_i$ is Hermitian.  Taking the partial trace over
the second subsystem and using the Hilbert-Schmidt orthogonality of the
$N_i$'s gives
\begin{align}
    \mathrm{Tr}_2[T^\dagger T]
    &= \sum_{i,j} M_i^\dagger M_j \, \mathrm{Tr}[N_i N_j] \\
    &= d_2 \sum_i M_i^\dagger M_i .
\end{align}
By Fact~\ref{fact:op-norm-ptrace}, we have
\begin{align}
    \|T\|^2
    = \|T^\dagger T\|
    \geq \frac{1}{d_2}\,\bigl\|\mathrm{Tr}_2[T^\dagger T]\bigr\|
    = \left\|\sum_i M_i^\dagger M_i\right\|.
\end{align}
Since each $M_j^\dagger M_j$ is positive semidefinite, we have
\begin{align}
    \sum_j M_j^\dagger M_j \succeq M_i^\dagger M_i
\end{align}
for every $i$, and hence
\begin{align}
    \left\|\sum_j M_j^\dagger M_j\right\|
    \geq \|M_i^\dagger M_i\|
    = \|M_i\|^2 .
\end{align}
Combining the two inequalities yields $\|T\|^2 \geq \|M_i\|^2$, and
therefore $\|T\| \geq \|M_i\|$.
\end{proof}
\begin{theorem}\label{thm:comm-propagates-down}
Let $h_{s,r}$ and $h_{s,q}$ be two $2$-qubit linear operators on a shared qubit with index $s$. According to~\Cref{algebra-fact-pauli-decomposition}, these can be written as
\begin{equation}
        h_{s,r} = \sum\limits_{\alpha = 0}^{3} A_s^{(\alpha)} \otimes \sigma_r^{(\alpha)} \quad\quad\quad h_{s,q} = \sum\limits_{\beta = 0}^{3} B_s^{(\beta)} \otimes \sigma_q^{(\beta)}
    \end{equation}
for ensembles of $2 \times 2$ complex matrices $\{A_s^{(\alpha)}\}_{\alpha=0}^3$ and $\{B_s^{(\beta)}\}_{\beta=0}^3$.
Then, if the terms satisfy $\|[H_{s, r}, H_{s, q}]\| \leq \eps$, for some $\eps \geq 0$, this implies that $\|[A_s^{(\alpha)}, B_s^{(\beta)}]\| \leq \eps$, for all $\alpha, \beta$.
\end{theorem}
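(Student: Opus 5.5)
The plan is to write the commutator $[h_{s,r},h_{s,q}]$ as an operator on three qubits $s,r,q$ and expand it in the tensor basis of Paulis on $r\otimes q$. Since $h_{s,r}$ acts trivially on $q$ and $h_{s,q}$ acts trivially on $r$, the Pauli factors on $r$ and $q$ sit on different tensor factors and therefore commute. Using the decompositions from \Cref{algebra-fact-pauli-decomposition}, we get
\begin{align}
[h_{s,r},h_{s,q}] = \sum_{\alpha,\beta=0}^{3} \bigl(A_s^{(\alpha)}B_s^{(\beta)} - B_s^{(\beta)}A_s^{(\alpha)}\bigr)\otimes \sigma_r^{(\alpha)}\otimes\sigma_q^{(\beta)} = \sum_{\alpha,\beta=0}^{3} [A_s^{(\alpha)},B_s^{(\beta)}]\otimes \bigl(\sigma_r^{(\alpha)}\otimes\sigma_q^{(\beta)}\bigr).
\end{align}
The key point is that $(I\otimes\sigma_q^{(\beta)})(\sigma_r^{(\alpha)}\otimes I)=\sigma_r^{(\alpha)}\otimes\sigma_q^{(\beta)}=(\sigma_r^{(\alpha)}\otimes I)(I\otimes\sigma_q^{(\beta)})$. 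This identity is purely algebraic and holds even if the components are not Hermitian, which matches the statement allowing arbitrary complex $2\times 2$ components.

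Next I apply \Cref{lem:opnorm-orthog-components} with $d_1=2$ (qubit $s$) and $d_2=4$ (qubits $r,q$). I index by pairs $i=(\alpha,\beta)$ and set $M_i=[A_s^{(\alpha)},B_s^{(\beta)}]$ and $N_i=\sigma^{(\alpha)}\otimes\sigma^{(\beta)}$. The hypotheses on $N_i$ need to be checked:
\begin{itemize}
\item each $N_i$ is Hermitian, since it is a tensor product of Hermitian Paulis;
\item $N_i^2=(\sigma^{(\alpha)})^2\otimes(\sigma^{(\beta)})^2=I_4$;
\item $\mathrm{Tr}[N_iN_j]=\mathrm{Tr}[\sigma^{(\alpha)}\sigma^{(\alpha')}]\,\mathrm{Tr}[\sigma^{(\beta)}\sigma^{(\beta')}]=4\,\delta_{\alpha\alpha'}\delta_{\beta\beta'}=d_2\,\delta_{i,j}$, using \Cref{fact:pauli-properties}.
\end{itemize}
The lemma then gives $\|[A_s^{(\alpha)},B_s^{(\beta)}]\|\le\|[h_{s,r},h_{s,q}]\|\le\eps$ for every pair $(\alpha,\beta)$, which is the claim.

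I do not expect a real obstacle. The only step needing care is the bookkeeping of tensor-factor ordering: I must embed both terms on the common space $\mathcal H_s\otimes\mathcal H_r\otimes\mathcal H_q$ and regroup $\mathcal H_r\otimes\mathcal H_q$ as the "second subsystem" of the lemma. The norm lemma itself only uses the partial trace bound in \Cref{fact:op-norm-ptrace}. That bound is stated there for density matrices, but it is applied to the PSD operator $T^\dagger T$, and the same argument works after rescaling. This is a minor point I would note but not belabor.
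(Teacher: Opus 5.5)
Your proposal is correct and follows the paper's proof essentially verbatim. Both arguments expand $[h_{s,r},h_{s,q}]$ as $\sum_{\alpha,\beta}[A_s^{(\alpha)},B_s^{(\beta)}]\otimes\sigma_r^{(\alpha)}\otimes\sigma_q^{(\beta)}$ on $\mathcal H_s\otimes\mathcal H_r\otimes\mathcal H_q$ and apply \Cref{lem:opnorm-orthog-components} with $d_2=4$ and $N_{(\alpha,\beta)}=\sigma^{(\alpha)}\otimes\sigma^{(\beta)}$; your explicit check of the lemma's hypotheses and your remark that \Cref{fact:op-norm-ptrace} extends to the PSD operator $T^\dagger T$ by rescaling are valid details the paper leaves implicit.
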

\begin{proof}
We can re-write the linear operators with an identity factor on the unaffected qubit,
    \begin{equation}
        h_{s,r} = \sum\limits_{\alpha = 0}^{3} A_s^{(\alpha)} \otimes \sigma_r^{(\alpha)} \otimes I_{q}\quad\quad\text{ and } \quad\quad h_{s,q} = \sum\limits_{\beta = 0}^{3} B_s^{(\beta)} \otimes I_r \otimes \sigma_q^{(\beta)}.
    \end{equation}
Then, we can compute the products $h_{s,r} h_{s,q}$ and $h_{s,q} h_{s,r}$ as follows.
\begin{equation}
    h_{s,r} h_{s,q} = \sum\limits_{\alpha, \beta} A_s^{(\alpha)} B_s^{(\beta)} \otimes \sigma_r^{(\alpha)} \otimes \sigma_q^{(\beta)}
\end{equation}
\begin{equation}
    h_{s,q} h_{s,r} = \sum\limits_{\alpha, \beta}  B_s^{(\beta)} A_s^{(\alpha)} \otimes \sigma_r^{(\alpha)} \otimes \sigma_q^{(\beta)}
\end{equation}
Then, we can plug these in the definition of the commutator.
\begin{align*}
    \left[h_{s,r}, h_{s,q}\right] &= h_{s,r} h_{s,q} - h_{s,q} h_{s,r}\\
    &= \sum\limits_{\alpha, \beta} \left(A_s^{(\alpha)} B_s^{(\beta)} - B_s^{(\beta)} A_s^{(\alpha)}\right) \otimes \sigma_r^{(\alpha)} \otimes \sigma_q^{(\beta)}\\
    &= \sum\limits_{\alpha, \beta} \left[A_s^{(\alpha)}, B_s^{(\beta)} \right] \otimes \left[\sigma_r^{(\alpha)} \otimes \sigma_q^{(\beta)}\right]
\end{align*}
Using \Cref{fact:pauli-properties}, we can conclude that $\{\sigma_r^{(\alpha)} \otimes \sigma_q^{(\beta)}\}_{\alpha,\beta}$ forms a set of pairwise orthogonal observables on the tensor product system $\mathcal{H}_r \otimes \mathcal{H}_q$. Thus, for each $\alpha, \beta$, Lemma~\ref{lem:opnorm-orthog-components} implies that 
\begin{align}
\|[A_s^{(\alpha)}, B_s^{(\beta)}]\| \leq \|[H_{s, r}, H_{s, q}]\| \leq \eps.
\end{align}
This proves the claim.
\end{proof}

\subsection{Snapping a $2$-local Hamiltonian term into $1$-local Hamiltonian term}

\begin{theorem}[Snapping $2$-local Hamiltonian terms into $1$-local terms]
\label{thm:snapping-2-local-hamiltonian-term}
   Let $h_{s,r} \in \mathrm{Herm}(\mathcal{H}_s \otimes \mathcal{H}_r)$ be a two-qubit operator acting qubits $s$ and $r$ which is not $(s,\eta)$-gapped according to \Cref{def:gapped-pauli-decomp}. Then, there exists a $1$-local Hamiltonian Term $\hat{h}_{r}$ such that:
    \begin{equation}
    \label{eqn:snapped-distance-between-hamiltonian-terms}
        \|h_{s,r} - \hat{h}_{r}\| \leq 4 \eta.
    \end{equation}
\end{theorem}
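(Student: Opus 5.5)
We take the local Pauli decomposition of \Cref{algebra-fact-pauli-decomposition} about qubit $s$,
\[
h_{s,r} = \sum_{\alpha=0}^{3} A_s^{(\alpha)} \otimes \sigma_r^{(\alpha)} ,
\]
where each $A_s^{(\alpha)}$ is Hermitian because $h_{s,r}$ is. Since $h_{s,r}$ is not $(s,\eta)$-gapped (\Cref{def:gapped-pauli-decomp}), every component satisfies $\Delta(A_s^{(\alpha)}) < \eta$. We then snap each component separately, replacing $A_s^{(\alpha)}$ by $\mathcal{S}(A_s^{(\alpha)}) = \lambda_{\max}(A_s^{(\alpha)})\, I$ (\Cref{def:snapping-operator}). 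This defines
\[
\hat{h}_r := \sum_{\alpha=0}^{3} \lambda_{\max}\big(A_s^{(\alpha)}\big)\, I_s \otimes \sigma_r^{(\alpha)} = I_s \otimes \Big(\sum_{\alpha} \lambda_{\max}\big(A_s^{(\alpha)}\big)\, \sigma_r^{(\alpha)}\Big).
\]
The coefficients are real and the Paulis are Hermitian, so $\hat{h}_r$ is a Hermitian operator acting nontrivially only on qubit $r$, i.e.\ a $1$-local term.

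For the distance bound, we write the difference as
\[
h_{s,r} - \hat h_r = \sum_{\alpha=0}^{3} \big(A_s^{(\alpha)} - \mathcal{S}(A_s^{(\alpha)})\big) \otimes \sigma_r^{(\alpha)}.
\]
The triangle inequality and multiplicativity of the operator norm under tensor products, together with $\|\sigma_r^{(\alpha)}\| = 1$, give
\[
\|h_{s,r} - \hat h_r\| \le \sum_{\alpha} \big\|A_s^{(\alpha)} - \mathcal{S}(A_s^{(\alpha)})\big\|.
\]
By \Cref{lemma:snapping-hermitian-qubit-hermitian-operator}, each summand equals $\Delta(A_s^{(\alpha)}) < \eta$. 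Summing the four terms yields $\|h_{s,r}-\hat h_r\| < 4\eta$, which is the claim.

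There is no substantive obstacle. The only points needing care are that the components are Hermitian, so that snapping is defined and the spectral-gap identity applies, and that $\|X\otimes \sigma\| = \|X\|\,\|\sigma\|$. A sharper constant could likely be obtained with an orthogonality argument, but the crude triangle-inequality bound already gives exactly the stated $4\eta$.
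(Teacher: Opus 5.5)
Your proposal is correct and follows the paper's proof essentially step for step. You use the Pauli decomposition about qubit $s$, snap each Hermitian component $A_s^{(\alpha)}$ to $\lambda_{\max}(A_s^{(\alpha)})\,I$, and bound the error by the triangle inequality with $\|\sigma_r^{(\alpha)}\|=1$ and $\|A_s^{(\alpha)}-\mathcal{S}(A_s^{(\alpha)})\|=\Delta(A_s^{(\alpha)})<\eta$, which gives $4\eta$.
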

\begin{proof}
Using~\Cref{algebra-fact-pauli-decomposition}, we can expand the two-qubit operator $h_{s,r}$ as
    \begin{align*}
        h_{s,r} = \sum\limits_{\alpha = 0}^{3} A_{s}^{\alpha} \otimes \sigma_{r}^{(\alpha)}
        = \sum\limits_{\alpha = 0}^{3} \Big(\lambda_{\max}^{\alpha} \Pi_{s}^{\alpha} + \lambda_{\min}^{\alpha} \left(I - \Pi_{s}^{\alpha}\right)\Big) \otimes \sigma_{r}^{(\alpha)}.
    \end{align*}
The hypothesis that $h_{s,r}$ is not $(s,\eta)$-gapped  means that for all $\alpha$, $\Delta(A_{s}^{\alpha}) < \eta$. Thus, we can define $\hat{h}_r$ as the result of \emph{snapping} all of the matrices $\{A_s^{\alpha}\}$ into multiples of identity $\{\hat{A}_s^{\alpha}\}$ as in~\Cref{lemma:snapping-hermitian-qubit-hermitian-operator}. Let $\mathcal{S}$ be the snapping operation from \Cref{def:snapping-operator}. Omitting certain identities, we can write
\begin{align}
    \hat{h}_r := \sum\limits_{\alpha = 0}^{3} \mathcal{S}\big(A_{s}^{\alpha}\big) \otimes \sigma_{r}^{(\alpha)}
    = \sum\limits_{\alpha = 0}^{3}  \big(\lambda_{\max}\big(A_{s}^{\alpha}\big)I_s\big) \otimes \sigma_{r}^{(\alpha)}\big) =\sum\limits_{\alpha = 0}^{3} \lambda_{\max}\big(A_{s}^{\alpha}\big)  \sigma_{r}^{(\alpha)}.
\end{align}
Define $\hat{A}_s^{\alpha} =\lambda_{\max}\big(A_{s}^{\alpha}\big) I_s$, for $\alpha=0,1,2,3$. Then, we can calculate the rounding error explicitly as follows:
\begin{align*}
    \|h_{s,r} - \hat{h}_{r}\| &= \left\|\sum\limits_{\alpha = 0}^{3} \left( A_s^{\alpha} - \hat{A}_s^{\alpha} \right)\otimes \sigma_{r}^{(\alpha)}\right\|\\
    &\leq \sum\limits_{\alpha = 0}^{3} \left\|\left( A_s^{\alpha} - \hat{A}_s^{\alpha} \right)\otimes \sigma_{r}^{(\alpha)} \right\| & \text{(triangle inequality)}\\
    &\leq \sum\limits_{\alpha = 0}^{3} \left\| A_s^{\alpha} - \hat{A}_s^{\alpha}  \right\| \cdot \underbrace{\| \sigma_r^{(\alpha)} \|}_{=1} & (\text{by the properties of } \|\cdot\|)\\
    &\leq \sum\limits_{\alpha = 0}^{3} \Delta(A_{s}^{\alpha}) \,\leq\, 4 \eta.
\end{align*}
\end{proof}

\section{Rounding Almost Commuting $2$-Local Qubit Hamiltonians}
In this section, we prove the main result of our paper: any almost commuting $2$-local Hamiltonian (on qubits) can be mapped to a nearby exactly commuting (qubit) Hamiltonian of the same locality. The proof of our main result uses a technical helper lemma (\Cref{lemma:double-extra-cheese}), which we formally state and prove later in this section.

\begin{theorem}[Algorithmic rounding for almost commuting $2$-local Hamiltonians on qubits]\label{thm:transforming-hamiltonian}\ \\
Let $H = \sum\limits_{I \in \mathcal{I}} h_{I}$ be a $2$-local 
Hamiltonian on $n$ qubits, where $\mathcal{I}$ is a collection of the sets of indices affected by each Hamiltonian term, where $m = |\mathcal{I}|$ is the total number of terms, and where
\begin{itemize}
    \item each term $h_{I}$, for $I \in \mathcal{I}$, is a Hermitian operator that acts as non-identity on at most two qubits (i.e. $|I| \leq 2$), of unit norm $\| h_I \| \leq 1$, and
    \item all terms pair-wise $\eps$-almost commute with $0 < \eps \leq 1$ such that
\[ \| [h_{I}, h_{J} ] \| \leq \eps, \quad\quad \forall I, J \in \mathcal{I}. \]
\end{itemize}
Given a description of $H$, it is possible to compute in classical deterministic time $O(m)$\footnote{And polynomial in the bit complexity of each Hamiltonian term.} the description of a nearby exactly commuting Hamiltonian $\hat{H} = \sum\limits_{I \in \mathcal{I}} \hat{h}_{I}$ such that 
\begin{itemize}
\item $\hat{H}$ is $2$-local, thereby preserving the locality of the Hamiltonian $H$,
\item $\hat{H}$ is a qubit Hamiltonian, thereby preserving the local dimension of $H$,
\item all of the terms in $\hat{H}$ pairwise commute such that
    \begin{align} 
[\hat{h}_{I}, \hat{h}_{J}] &= 0, \quad\quad \forall I, J \in \mathcal{I}, \label{eq:exact-commutation}
\end{align}
\item all of the terms in $\hat{H}$ are (point-wise) nearby in the sense that 
\begin{align}
\| \hat{h}_{I} - h_{I} \| \leq 216\ \eps^{1/6}, \quad\quad \forall I \in \mathcal{I}, \text{ and}\label{eq:overall-rounding-term-error}
\end{align}
\item $\hat{H}$ is overall close to $H$ in operator norm such that
\begin{align}
\|\hat{H} - H\| \leq 216\ m \ \eps^{1/6}.\label{eq:rounding-error}
\end{align}
\end{itemize}
\end{theorem}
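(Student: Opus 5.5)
Our plan follows the three-phase ``partition / snap / gap'' procedure from the technical overview. We use two thresholds, a coarse one $\eta_1$ and a fine one $\eta_2 = \eta_1^2$, and tune them at the end; we expect $\eta_1 = \eps^{1/6}$.

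\textbf{First round of snapping.} For every term $h_{s,r}$ and each of its two qubits, compute both local Pauli decompositions via \Cref{algebra-fact-pauli-decomposition}. If $h_{s,r}$ is not $(s,\eta_2)$-gapped, replace it by the $1$-local term $h'_r$ given by \Cref{thm:snapping-2-local-hamiltonian-term}. This costs at most $4\eta_2$, and we do the same for the other qubit.

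\textbf{Second round of snapping.} The resulting $1$-local terms $h'_r=\sum_\alpha c_\alpha \sigma^{(\alpha)}_r$ are single-qubit Hermitian operators. Each still $O(\eps+\eta_2)$-almost commutes with the other terms, by \Cref{lem:transitivity-of-almost-commutatitivty}. Any such $1$-local term with $\Delta(h'_r)<\eta_1$ is snapped to $\lambda_{\max}(h'_r) I$, which costs $\eta_1$ by \Cref{lemma:snapping-hermitian-qubit-hermitian-operator}.

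After both rounds, every surviving term acts nontrivially only on qubits where it carries a gapped component. Two-local survivors have a component with gap $\ge\eta_2$ on each qubit. One-local survivors are themselves $\eta_1$-gapped. Terms that became identity multiples commute with everything and are left untouched.

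\textbf{Pivot selection and pinching.} For each qubit $i$, choose a pivot $R_i$. If some surviving $1$-local term on $i$ exists, use it; otherwise use any $\eta_2$-gapped component of a surviving $2$-local term on $i$. If no term acts nontrivially on $i$, take $R_i = I$. Set $\hat h_{s,r} = (\cP_{R_s}\otimes\cP_{R_r})(h'_{s,r})$, and similarly for $1$-local terms. After this, every term is a linear combination of products of the eigenprojectors of the pivots, which all commute; this gives \eqref{eq:exact-commutation}. Locality and qubit dimension are preserved by construction. The running time is $O(m)$, since each term is processed a constant number of times and pivots are chosen with one pass over an adjacency list.

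\textbf{Error analysis.} The key step, and the main obstacle, is bounding $\|h'_{s,r}-\hat h_{s,r}\|$. We would bound it by $\|[\,\cdot\,,R_s]\|$-type quantities divided by $\Delta(R_s)$, applying \Cref{thm:pinch} to each Pauli component and paying a constant factor of $4$ or $16$ from the decompositions. The helper \Cref{lemma:double-extra-cheese}, assumed here, packages this two-sided pinching bound. We need every local component $A^{(\alpha)}_s$ of $h'_{s,r}$ to almost commute with $R_s$, and there are two cases.

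\emph{Pivot from a different term.} \Cref{thm:comm-propagates-down} gives a commutator bound of $O(\eps+\eta_2+\eta_1)$, inherited after accounting for the snapping perturbations through \Cref{lem:transitivity-of-almost-commutatitivty}.

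\emph{Pivot from the same term.} We route through another gapped term on qubit $s$ using \Cref{lem:transitivity-almost}, which gives $O(\kappa/\eta)$ with $\kappa$ the inherited commutator. If $h_{s,r}$ is the only nontrivial term on $s$, we instead pinch only on the other side, or note that no commutation constraint exists on $s$.

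Here the snapping perturbations are the delicate point. Snapped terms only approximately preserve commutators, with an additive error of order $\eta$, so the dividing gap must dominate that error. This is why a $1$-local pivot uses the coarse gap $\eta_1$ while $2$-local components tolerate $\eta_2$.

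\textbf{Balancing.} The total error then takes the form $O(\eta_1+\eta_2+(\eps+\eta_2)/\eta_1^{2}+\eps/\eta_2^2)$, roughly. Choosing $\eta_2=\eps^{1/3}$ and $\eta_1=\eps^{1/6}$, or a similar exponent pairing, balances the terms at $O(\eps^{1/6})$. Tracking the constants from the four-term Pauli sums should give the per-term bound \eqref{eq:overall-rounding-term-error}. Summing over the $m$ terms with the triangle inequality then gives \eqref{eq:rounding-error}.
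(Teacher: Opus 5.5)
\textbf{Verdict: same approach as the paper, but the error accounting as written does not give $O(\eps^{1/6})$.}

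Your construction is the paper's. It uses two rounds of snapping with $\eta_2=\eps^{1/3}$ and $\eta_1=\Theta(\eps^{1/6})$ (the paper takes $\eta_1=\sqrt{\eps+16\eta_2}$). Pivots come preferentially from surviving $1$-local terms, and otherwise from an $\eta_2$-gapped component of a surviving $2$-local term. Pivots from other terms are handled by \Cref{thm:comm-propagates-down}, pivots from the same term by \Cref{lem:transitivity-almost}, and \Cref{lemma:double-extra-cheese} finishes.

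The problem is in the error accounting, which is the real content of the per-term bound $216\,\eps^{1/6}$. Two of your bounds each give $\Theta(1)$ with your parameters:
\begin{itemize}
\item You set the inherited commutator to $O(\eps+\eta_2+\eta_1)$. Dividing this by the gap $\eta_1$ of a $1$-local pivot gives $\Theta(1)$.
\item Your balance contains $(\eps+\eta_2)/\eta_1^{2}$. At $\eta_2=\eps^{1/3}$, $\eta_1=\eps^{1/6}$ this equals $1+\eps^{2/3}$.
\end{itemize}
So the claimed balance at $O(\eps^{1/6})$ does not follow from your own bounds.

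Two observations, both used in the paper, remove these terms.
\begin{itemize}
\item \emph{$\eta_1$ never enters a commutator.} The second snapping round produces only scalar multiples of the identity, which commute with everything and are fixed by pinching. Surviving $1$-local terms carry only the first-round perturbation $4\eta_2$. At most two applications of \Cref{lem:transitivity-of-almost-commutatitivty} then show that a $1$-local pivot $\eps_1$-commutes with every surviving term, where $\eps_1=\eps+16\eta_2$. The pinching cost on that qubit is therefore only $O(\eps_1/\eta_1)$.
\item \emph{Division by a gap twice happens only for $2$-local pivots.} A $1$-local pivot $g_i$ pinches its own term $g_i\otimes I$ with zero error, so no transitivity is needed there. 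A $2$-local pivot is chosen only when every nontrivial term on qubit $i$ is an untouched original $2$-local term. The transitivity route then has commutator $\eps$ and gap $\eta_2$, giving $\kappa_i=24\eps/\eta_2$ and cost $O(\eps/\eta_2^{2})$.
\end{itemize}
The correct total is $O\big(\eta_1+\eta_2+(\eps+\eta_2)/\eta_1+\eps/\eta_2^{2}\big)$. With the paper's parameters this becomes $8\max\{\eps_1/\eta_1,\,24\eps/\eta_2^{2}\}+4\eta_2+4\eta_1\le 216\,\eps^{1/6}$, using $\eps_1/\eta_1=\eta_1\le 5\eps^{1/6}$.

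Finally, state your pivot rule as the paper does: $R_i=I$ whenever at most one surviving term acts nontrivially on $i$. Your ``pinch only on the other side'' patch amounts to this, and it is necessary. Pinching a lone $2$-local term by one of its own components can cost $\Theta(1)$; for example, pinching $Z/2$ by the eigenbasis of $X/2$ in $\tfrac12(X\otimes X+Z\otimes Z)$ gives error $1/2$. On such qubits, exact commutation holds because only one term is nontrivial there, not because of pivot eigenprojectors.
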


\begin{proof}
We remark that the statement of the theorem allows the Hamiltonian to have terms that have locality strictly less than $2$. For convenience, we assume that the Hamiltonian is of the form $H = \sum_{(i,j) \in E} h_{i,j}$ so that all terms in $H$ are formally two-local. This avoids additional bookkeeping and is purely for clarity of the exposition (it does not materially affect any of the arguments). The ordering of the two indices on the Hamiltonian term is irrelevant, so we will freely write $h_{i,j}$ or $h_{j,i}$ for the same term as convenient.

Our rounding procedure generates the Hamiltonian $\hat{H}$ as follows. We will have two stages of snapping with gap parameters $\eta_2$ in the first stage and $\eta_1$ in the second stage (note that the index is counting down). Similarly, the commutator norm will be bounded by $\eps_2$ in the first stage and $\eps_1$ in the second stage (with the index counting down as well).

\begin{enumerate}

    \item Define sets of terms $T_2, T_1, T_0$ that will track of 2-local, 1-local, and 0-local terms, respectively, through the snapping process. To start out, let $T_2$ contain all the terms in $H$, and let $T_1$ and $T_0$ be empty.
    
    Let $\eps_2 = \eps$.
    \item \textbf{Snapping $2$-local terms.} Let $\eta_2 := \eps_2^{1/3}$. For every term $h_{i, j}$ in $T_2$:
    \begin{enumerate}
        \item If $h_{i, j}$ is $\eta_2$-gapped on both subsystems according to~\Cref{def:gapped-pauli-decomp} (i.e. ($i, \eta_2$)-gapped and ($j, \eta_2$)-gapped), remove it from $T_2$ and place it in $H^\snp_2$.
        \item Else, pick the qubit $j$ (WLOG) such that $h_{i, j}$ is not $(j, \eta_2)$-gapped. Snap using~\Cref{thm:snapping-2-local-hamiltonian-term} into a $1$-local term on the subsystem $i$ to obtain $h'_{i,j}$ as below and place $h'_{i,j}$ in $T_1$.
        \begin{equation}
        h'_{i,j} = g_i \otimes I_j  \quad\quad \text{ where }\quad\quad     \| h'_{i,j} - h_{i,j} \| \leq 4\eta_2.
        \end{equation}
    \end{enumerate}
    \item \textbf{Snapping $1$-local terms.} Let $\eta_1 := \sqrt{\eps_2 + 16\eta_2}$. For each term $h'_{i,j}$ in $T_1$ ($1$-local terms), perform the following.
    \begin{enumerate}
        \item If $h'_{i,j}$ is $(i, \eta_1)$-gapped, place the term into $H^\snp_1$.
        \item Else, snap that term into a multiple of the identity $h"_{i,j} := c I$ and place that term into $T_0$.
        \begin{equation}
            \|h"_i - h'_i\| \leq 4 \eta_1.
        \end{equation}
    \end{enumerate}
    \item Move every term in $T_0$ to $H^\snp_0$.
    \item We now have:
    \begin{equation}
        H^\snp = H^\snp_2 + H^\snp_1 + H^\snp_0
    \end{equation}
    where:
    \begin{enumerate}
    \item \textbf{Spectral gap:}
    \begin{enumerate}
        \item Every term $h_{i,j}^\snp$ in $H^\snp_2$ is $\eta_2$-gapped on both subsystems $i$ and $j$ according to \Cref{def:gapped-pauli-decomp}.
        \item Every term $h_{i,j}^\snp$ in $H^\snp_1$ is $\eta_1$-gapped (according to \Cref{def:gapped-pauli-decomp}) on the subsystem acted upon nontrivially.
        \item Every term $h_{i,j}^\snp$ in $H^\snp_0$ is proportional to the identity (zero spectral gap).
    \end{enumerate}
    \item \textbf{Rounding error:}
    \begin{enumerate}
        \item Every term $h^\snp_{i,j}$ in $H^\snp_2$ is identical to the corresponding term $h_{i,j}$ in $H$.
        \item For every term $h^\snp_{i,j}$ in $H^\snp_1$, we have
        \begin{equation}\label{eqn:distance-snapped-once}
            \| h^\snp_{i,j} - h_{i,j} \| \leq 4\eta_2.
        \end{equation}
        This is because every term in $H^\snp_1$ was obtained by one application of snapping with gap parameter $\eta_2$.
        \item For every term $h^\snp_{i,j}$ in $H^\snp_0$, we have
        \[ \|h^\snp_{i,j} - h_{i,j}\| \leq 4\eta_2 + 4\eta_1. \]
        This is because every such term was obtained from snapping an original two-local term twice, first with gap parameter $\eta_2$ and then with gap parameter $\eta_1$.
    \end{enumerate}
        \item \textbf{Comutativity error:} Let $\eps_1 := \eps + 16\eta_2$.
    \begin{enumerate}
        \item $H^\snp_2$ contains $2$-local terms such that each pair of terms $\eps_2$-commute.
        \item $H^\snp_1$ contains $1$-local terms such that each pair of terms $\eps_1$-commute. Let $h^\snp_{i, j}$ and $h^\snp_{i, k}$ be two terms in $H^\snp_1$ that overlap on exactly one qubit. We will use \Cref{lem:transitivity-of-almost-commutatitivty} to bound the norm of their commutator.
\begin{align}
    \|[h^\snp_{i, j}, h^\snp_{i, k}]\| & \leq 2 \|h^\snp_{i, j} - h_{i, j}\| \cdot \|h^\snp_{i, k}\| + \|[h_{i, j}, h^\snp_{i, k}]\| \qquad \text{(\Cref{lem:transitivity-of-almost-commutatitivty})}\\
    & \leq 2 \cdot 4 \cdot  \eta_2 \cdot 1 + \|[h^\snp_{i, k}, h_{i, j}]\| \qquad \text{(Ineq.~\ref{eqn:distance-snapped-once})}\\
    & \leq 8 \eta_2  +  2 \|h^\snp_{i, k} - h_{i, k}\| \cdot \|h_{i, j}\| + \|[h_{i, k}, h_{i,j}]\|\qquad \text{(\Cref{lem:transitivity-of-almost-commutatitivty})}\\
    & \leq 8 \eta_2 + 2 \cdot 4 \cdot \eta_2 \cdot 1 + \eps \qquad \text{(Ineq.~\ref{eqn:distance-snapped-once})}\\
    & = \eps + 16 \eta_2 = \eps_1.
\end{align}
        \item Any term in $H^\snp_2$ $\eps_1$-commutes with any term in $H_1$. 
Recall that terms in $H^\snp_2$ are $2$-local and terms in $H_1$ are $1$-local. Their commutator is zero possibly unless they overlap on the qubit affected by the $1$-local term. Let $h^\snp_{i, j} \in H^\snp_2$ and $h^\snp_{i,k} \in H_1$. Note that $h^\snp_{i,j} = h_{i,j}$.
\begin{align}
    \|[h^\snp_{i,k}, h^\snp_{i, j}]\| &= \| [ h^\snp_{i,k}, h_{i,j} ] \| \\
    & \leq 2 \cdot \|h^\snp_{i,k} - h_{i, k}\| \cdot \|h_{i, j}\| + \|[h_{i, k}, h_{i, j}]\| \qquad \text{(\Cref{lem:transitivity-of-almost-commutatitivty})}\\
    & \leq 2 \cdot 4 \cdot \eta_2 \cdot 1 + \eps \qquad \text{(Ineq.~\ref{eqn:distance-snapped-once})}\\
    & = \eps + 8 \eta_2 \leq \eps_1.
\end{align}

    \end{enumerate}
    \end{enumerate}
    \item \textbf{Pivot selection.} For each qubit $i \in [n]$, assign it a pivot operator $R_i$ as follows.
    \begin{enumerate}
        \item If qubit $i$ is affected by at most one non-identity Hamiltonian term in $H^\snp$, let $R_i$ be the one-qubit identity operator.
        \item Otherwise, there are $2$ subcases:
        \begin{enumerate}
            \item At least one of the terms $h^\snp_{i,j}$ in $H_1^\snp$ acts nontrivially on $i$. This term is $1$-local: $h^\snp_{i,j} = g_i \otimes I_j$ for some $g_i$. In this case, set $R_i = g_i$.
            \item All of the terms in $H^\snp$ acting nontrivially on $i$ are $2$-local: in this case, pick an arbitrary one of these terms $h^\snp_{ij}$ (e.g. the one with the lexicographically first index pair $i,j$). By the assumption athat $h^\snp_{ij} \in H^\snp_2$, it must be $\eta_2$-gapped on both systems $i$ and $j$. So writing its Pauli decomposition as
            \[ h^\snp_{ij} = \sum_{\alpha = 0}^{3} A^\alpha_i \otimes \sigma^\alpha_j, \]
            we let $R_i$ be any of the operators $A^\alpha_i$ that has spectral gap at least $\eta_2$.
        \end{enumerate}

    \end{enumerate}
    We claim that for each qubit $i \in [n]$ there exists a choice of parameters $\kappa_i, \zeta_i$ such that the following hold.
    \begin{enumerate}
        \item[1.] For any Hamiltonian term $h^\snp_{i,j} \in H^\snp$,
        \[ \|[ h^\snp_{i,j}, R_i \otimes I_j] \| \leq \kappa_i, \]
        and likewise with $i$ and $j$ interchanged.
        \item[2.] For each $i$, either $R_i$ is identity or the spectral gap of $R_i$ is at least $\zeta_i$.
        \item[3.] For each $i$, $\kappa_i/\zeta_i \leq \max\{ \eps_1 /\eta_1, 24\eps_2/\eta_2^2\}$.
    \end{enumerate}
    
    \paragraph{Case 1: at most 1 term in $H^\snp$ acts nontrivially on $i$.} In this case, the algorithm above sets $R_i = I$. Let $\kappa_i = 0$ and $\zeta_i = 1$. The three properties hold.

\paragraph{Case 2: at least 2 distinct terms in $H^\snp$ act nontrivially on $i$.}

Within this case, there are two subcases.

\subparagraph{Case 2a: at least one of the terms in $H^\snp$ acting nontrivially on $i$ is 1-local.}

In this case, the procedure above sets $R_i$ to be equal to (the nontrivial tensor factor of) some such 1-local term $h^\snp_{i,j}$. Let $\kappa_i$ be equal to $\eps_1$ and $\zeta_i = \eta_1$. We know that the term $h^\snp_{i,j}$ has commutativity error $\eps_1$ with all other terms in $H^\snp$ and has spectral gap at least $\eta_1$, so the three properties hold.

\subparagraph{Case 2b: all of the terms in $H^\snp$ acting nontrvially on $i$ are 2-local.}

In this case, the procedure above will take $R_i$ to be some operator from the local decomposition of a term $h^\snp_{i,k}$. We will set $\kappa_i = 24\eps_2/\eta_2$ and $\zeta_i = \eta_2$. This setting automatically satisfies the third property.

To see that the properties hold with these parameters, let the decomposition of $h^\snp_{i,k}$ be
\begin{equation}
    h^\snp_{i,k} = \sum_{\beta=0}^3 B_i^\beta \otimes \sigma_k^\beta.
\end{equation}
Let $\beta^*$ be such that $R_i = B_i^{\beta^*}$. By construction, we know that $\beta^*$ was chosen so that the spectral gap of  $R_i = B_i^{\beta^*}$ is at least $\eta_2$, which establishes the second property. 

We now show the first property. Let $h^\snp_{i,j}$ be a Hamiltonian term acting on qubit $i$. Now, suppose $j \neq k$. Then decompose $h^\snp_{i,j}$ as
\begin{equation}
    h^\snp_{i,j}  = \sum_{\alpha=0}^3 A_i^\alpha \otimes \sigma_j^\alpha. 
\end{equation}
Since $\|[h^\snp_{i,j},h^\snp_{i,k}]\| \leq \eps_2$, we can conclude by applying \Cref{thm:comm-propagates-down} that for each $\alpha$:
\begin{equation}
 \| [A_i^\alpha, R_i] \| \leq \eps_2, \label{eq:pivot-commutes-different-term-2}
\end{equation}
and therefore
\begin{equation}
    \| [h^\snp_{i,j}, R_i \otimes I_j ] \| \leq 4 \eps_2 \leq 24\eps_2/\eta_2,
\end{equation}
under the assumption that $\eta_2 \leq 6$.

Now, suppose $j = k$. In this case, we are going to use transitivity (\Cref{lem:transitivity-almost}) to obtain our conclusion. Pick $j' \neq k$ such that $h^\snp_{i,j'}$ acts nontrivially on qubit $i$. (We are guaranteed that such a $j'$ exists by the assumption that we are in Case 2.) Such Hamiltonian term has the decomposition
\begin{equation}
    h^\snp_{i,j'} = \sum_{\alpha=0}^3 A_i^\alpha \otimes \sigma_{j'}^\alpha. 
\end{equation}
By \Cref{eq:pivot-commutes-different-term-2}, we know that for each $\alpha$, $\| [A_i^\alpha, R_i]\| \leq \eps_2.$ Now, let $B_i^\beta$ be an operator from the decomposition of $h^\snp_{i,k}$. Since $\| [h^\snp_{i,k}, h^\snp_{i,j'}] \| \leq \eps_2$, we can conclude by \Cref{thm:comm-propagates-down} that for any $\alpha$:
\begin{equation}
    \| [ B_i^\beta, A_i^\alpha] \| \leq \eps_2.
\end{equation}

Furthermore, by the fact that we chose $h_{ij'}$ to act nontrivially on qubit $i$, we know that it is $\eta_2$-gapped on qubit $i$, and therefore that there exists an $\alpha^*$ such that $\Delta(A_i^{\alpha^*}) \geq \eta_2$. 

Now, we will use transitivity: we know that $\|[B_i^\beta, A_i^{\alpha^*}]\| \leq \eps_2$ and $\|[A_i^{\alpha^*}, R_i]\| \leq \eps_2$, so by \Cref{lem:transitivity-almost}, we can conclude that 
\begin{equation}
    \| [B_i^\beta, R_i] \| \leq \frac{6\eps_2}{\Delta(A_i^{\alpha^*})} \leq \frac{6 \eps_2}{\eta_2},
\end{equation}
where we are assuming that $\eps_2 \leq \eta_2$ and moreover that the terms of the Hamiltonian have operator norm at most $1$. So, we have
\[ \| [h^\snp_{i,k}, R_i \otimes I_k ] \| \leq \frac{24\eps_2}{\eta_2}. \]
    \item \textbf{Pinch everything.} Define the global pinching map $\cP$ that pinches each qubit $i$ by $R_i$. Our final Hamiltonian will be
    \[ \widehat{H} = \cP(H^\snp).\]
    By \Cref{lemma:double-extra-cheese}, we conclude that the resulting Hamiltonian $\widehat{H}$ is commuting, and that for each term $\widehat{h}_{i,j}$ in $\widehat{H}$, its distance from the corresponding term $h^\snp_{i,j}$ in $H^\snp$ is bounded by 
    \[ \|\hat{h}_{i,j} - h^\snp_{i,j} \| \leq  4\max_{i,j} \left\{ \left(\frac{\kappa_i}{\zeta_i} + \frac{\kappa_j}{\zeta_j} \right) \right\} \leq 8 \max\left\{ \frac{\eps_1}{\eta_1}, \frac{24\eps_2}{\eta_2^2} \right\} \leq 192\ \eps^{1/6}, \]
where the final inequality follows from the parameter settings we have made:
    \begin{align}
        \eps_2 &= \eps,\\
        \eps_1 &= \eps + 16\ \eta_2 = \eps + 16\ \sqrt[3]{\eps},\\
        \eta_2 &= \sqrt[3]{\eps}, \text{and}\label{ineq:eta_2}\\
        \eta_1 &= \sqrt{\eps_1} = \sqrt{\eps + 16\ \sqrt[3]{\eps}} \leq 5 \eps^{1/6}.\label{ineq:eta_1}
    \end{align}
\end{enumerate}

\paragraph{Computing the total distance between the input and output terms for each term.}
For each input Hamiltonian term $h_{i,j}$, we compute an upper bound on its distance from the output term $\hat{h}_{i,j}$. By the triangle inequality, there are two distances to add: (1) the distance $\|h^\snp_{i,j} - \hat{h}_{i,j}\|$, which we showed above is bounded by $192\ \eps^{1/6}$, and (2) the distance $\|h_{i,j} - h^\snp_{i,j} \|$, which we bounded earlier by $4 (\eta_2 + \eta_1)$. Thus, the total distance is bounded above as follows.
\begin{align*}
    \|\hat{h}_{i,j} - h_{i,j}\| &\leq \|\hat{h}_{i,j} - h^\snp_{i,j}\| + \|h^\snp_{i,j} - h_{i,j}\|\\
    & \leq 192\ \eps^{1/6} + 4 \eta_2 + 4 \eta_1\\
    & \leq 192\ \eps^{1/6} + 4 \sqrt[3]{\eps} + 4 \cdot 5 \eps^{1/6} & \text{Ineq.~\ref{ineq:eta_2} and Ineq.~\ref{ineq:eta_1}}\\
    & \leq 216\ \eps^{1/6}.
\end{align*}
From this, the total operator norm bound
\[ \|\hat{H} - H \| \leq 216 m \eps^{1/6} \]
follows directly by applying the triangle inequality $m$ times.

\paragraph{Runtime:} The fact that the runtime is linear in $m$ follows by inspecting the algorithm.
\end{proof}

\begin{lemma}[The multi-qubit pinching lemma]\label{lemma:double-extra-cheese}
    Let $H = \sum_{(i,j) \in E} h_{ij}$ be a 2-local Hamiltonian and for each $i \in [n]$, let $\mathcal{P}_i$ be either the identity superoperator or the pinching superoperator for some one-local operator $R_i$ acting on qubit $i$. Suppose that for each $i$ where $\mathcal{P}_i$ is not identity, $\Delta(R_i) \geq \eta_i$ and for all $j$, 
    \[ \|[R_i \otimes I_j, h_{i,j} ] \| \leq \kappa_i,\]
    for some parameters $\eta_i, \kappa_i$.
    Then the 2-local Hamiltonian
    \begin{equation}\label{eqn:definition-of-H-prime}
        H' = \sum_{(i,j) \in E} (\cP_i \otimes \cP_j \otimes \mathrm{id}_{[n]\setminus \{i,j\}})( h_{i,j})
    \end{equation}
    satisfies the following properties:
    \begin{enumerate}
        \item For all $i$ such that $\cP_i$ is not identity, all the terms $h'_{i,j}$ touching qubit $i$ commute with each other, and with $R_i$.
        \item For all $(i,j) \in E$,
        \begin{equation} \| h'_{i,j} - h_{i,j} \| \leq 4 \left(\frac{\kappa_i}{\eta_i} +  \frac{\kappa_j}{\eta_j}\right). \label{eq:cheese-rounding-error} \end{equation}
    \end{enumerate}
\end{lemma}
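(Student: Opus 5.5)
The plan is to treat the two properties separately. For the first, we use the structure of the pinching superoperator. For the second, we reduce the two-qubit pinching error to two single-qubit pinching errors via \Cref{thm:pinch}, lifted to operators with an extra tensor factor.

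\textbf{Property 1.} Fix $i$ with $\cP_i$ not the identity, and let $\Pi_i, I-\Pi_i$ be the eigenprojectors of $R_i$. Every term $h'_{i,j}=(\cP_i\otimes\cP_j)(h_{i,j})$ has the form $\Pi_i X \Pi_i + (I-\Pi_i)Y(I-\Pi_i)$, with $X,Y$ acting on qubit $j$ (after absorbing the possible pinching on $j$, which acts on a different tensor factor and so commutes with $\cP_i$). Writing $h'_{i,j}=\Pi_i\otimes X_j+(I-\Pi_i)\otimes Y_j$ makes the commutation with $R_i\otimes I$ immediate. For two terms $h'_{i,j},h'_{i,k}$ with $j\neq k$, the cross terms vanish because $\Pi_i(I-\Pi_i)=0$. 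The remaining diagonal pieces are $\Pi_i\otimes[X_j,X'_k]$ and its complement, and these commutators vanish because $X_j$ and $X'_k$ act on disjoint qubits. This is essentially \Cref{lem:transitivity} in block form. If $j=k$, the two terms are the same term, so the claim is trivial.

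\textbf{Property 2.} Decompose the error as
\[
h_{i,j}-h'_{i,j}=\bigl(h_{i,j}-(\cP_i\otimes\mathrm{id})(h_{i,j})\bigr)+(\cP_i\otimes\mathrm{id})\bigl(h_{i,j}-(\mathrm{id}\otimes\cP_j)(h_{i,j})\bigr).
\]
Since $\cP_i$ is a pinching, it is contractive in operator norm: it is a convex combination of the identity map and conjugation by the unitary $2\Pi_i-I$. So it suffices to bound each single-sided pinching error. For this I will prove a version of \Cref{thm:pinch} in which $B$ is replaced by an operator $T$ on $\mathcal H_i\otimes\mathcal H_j$ and $A$ by $R_i\otimes I$. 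Working in the eigenbasis of $R_i$, $T$ is a $2\times2$ block matrix. The commutator $[T,R_i\otimes I]$ equals $\Delta(R_i)$ times the off-diagonal blocks with one sign flipped, and $T-\cP_i(T)$ is exactly the off-diagonal block part. Multiplying by the unitary $Z\otimes I$ gives $\|T-\cP_i(T)\|=\|[T,R_i\otimes I]\|/\Delta(R_i)\le\kappa_i/\eta_i$. Applied to $h_{i,j}$, this gives the first piece. For the second piece we need $\|[h_{i,j},I\otimes R_j]\|\le\kappa_j$, which is the hypothesis with roles swapped, yielding $\kappa_j/\eta_j$. If either $\cP$ is the identity, the corresponding term is zero. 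The result is a bound of $\kappa_i/\eta_i+\kappa_j/\eta_j$, which is better than the claimed factor $4$, so there is slack. That slack can absorb a cruder argument, for example if one instead pinches via the Pauli decomposition component-wise using \Cref{thm:comm-propagates-down} and the triangle inequality over four components.

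\textbf{Main obstacle.} The subtle step is ensuring that the second pinching does not destroy the commutator bound. Pinching in $j$ after pinching in $i$ could a priori change $\|[\cdot,I\otimes R_j]\|$. The ordering in the decomposition above sidesteps this: we pinch $j$ on the original $h_{i,j}$, where the hypothesis applies directly, and then use contractivity of $\cP_i$ together with the fact that $\cP_i$ and $\cP_j$ commute as superoperators on different factors. I expect the block-matrix extension of \Cref{thm:pinch} to be routine. The real care needed is in checking contractivity of the pinching and the commutation of the two superoperators.
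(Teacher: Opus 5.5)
Your proof is correct, and it differs from the paper's mainly in how the pinching error is bounded.

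\textbf{Error bound.} Both arguments split the error as $(h_{i,j}-\cP_i(h_{i,j}))+\cP_i\bigl(h_{i,j}-\cP_j(h_{i,j})\bigr)$, but the paper bounds each piece componentwise. It writes $h_{i,j}=\sum_\alpha A_i^\alpha\otimes\sigma_j^\alpha$ and pushes the term-level hypothesis down to $\|[A_i^\alpha,R_i]\|\le\kappa_i$ via \Cref{thm:comm-propagates-down}. It then applies the $2\times2$ identity of \Cref{thm:pinch} to each component and sums over the four Pauli components, which is the source of the factor $4$. For the second piece it uses contractivity of $\cP_i$ only on each $\sigma_i^\alpha$.

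You instead lift \Cref{thm:pinch} to the exact block identity $T-\cP_i(T)=\Delta(R_i)^{-1}(Z\otimes I)[T,R_i\otimes I]$ and apply it to the whole two-qubit term. Your contractivity comes from $\cP_i(T)=\tfrac12(T+UTU^\dagger)$ with $U=2\Pi_i-I$. This bypasses the Pauli decomposition and \Cref{thm:comm-propagates-down} entirely. It also yields the sharper bound $\kappa_i/\eta_i+\kappa_j/\eta_j$, which would improve the constants in \Cref{thm:transforming-hamiltonian}.

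\textbf{Commutativity.} The paper pinches the Pauli components on the shared qubit and invokes \Cref{lem:transitivity}. Your block form $h'_{i,j}=\Pi_i\otimes X_j+(I-\Pi_i)\otimes Y_j$ encodes the same qubit-specific fact. You should state explicitly that it uses $\Pi_i$ being rank one, which holds because qubit $i$ is two-dimensional and $\Delta(R_i)>0$. The identity $\Pi_i T\Pi_i=\Pi_i\otimes X$ fails for higher-rank projectors.

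\textbf{What each route buys.} The paper's componentwise route stays in the single-qubit language used for pivot selection and reuses the existing $2\times2$ lemmas. Yours is more self-contained and gives a tighter constant.
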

\begin{proof}

We show the two properties in turn.
\begin{enumerate}
    \item \textbf{Commutativity.} We only need to check commutativity for pairs of terms which overlap on exactly one qubit (there are no two terms that overlap on two qubits since our convention\footnote{This convention is without loss of generality, since one can absorb all terms that act on a pair $i,j$ into a single term at the start of the analysis.} is that $h_{i,j}$ is the sole term in the Hamiltonian acting on the pair $i,j$), and that were affected nontrivially by the pinching on the shared qubit. Let $\{i, j, k\}$ be three distinct indices. Consider the terms $h_{i,j}$ and $h_{j,k}$ which have the following decomposition according to \Cref{algebra-fact-pauli-decomposition}.

    \begin{equation}\label{eqn:extra-cheese-decomp}
        h_{i,j} = \sum\limits_{\alpha = 0}^3 \sigma^\alpha_i \otimes A_j^\alpha \quad\quad\quad\quad h_{j,k} = \sum\limits_{\beta = 0}^3 B_j^\beta \otimes \sigma^\beta_k
    \end{equation}
    
Now, consider the commutator $[h'_{i,j}, h'_{j,k}]$.
\begin{align*}
    [h'_{i,j}, h'_{j,k}] &= [ (\cP_i \otimes \cP_j)(h_{i,j}), (\cP_j \otimes \cP_k)(h_{j,k}) ]  \qquad \text{(by \Cref{eqn:definition-of-H-prime})}\\
    &= \left[ \sum_{\alpha} (\cP_i \otimes \cP_j)(\sigma^\alpha_i \otimes A_j^\alpha), \sum_{\beta} (\cP_j \otimes \cP_k)(B_j^\beta \otimes \sigma^\beta_k) \right ]  \qquad \text{(by \Cref{eqn:extra-cheese-decomp})}\\
    &= \sum_{\alpha,\beta} \cP_i(\sigma_i^\alpha) \otimes \underbrace{[\cP_j(A^\alpha_j), \cP_j(B^\beta_j)]}_{=0} \otimes \cP_k(\sigma_k^\beta) \qquad \text{(commutator properties)}\\
    &= 0.
\end{align*}
To see that the commutators in the sum are $0$, observe that by \Cref{thm:pinch}, each $\cP_j(A_j^\alpha)$ and $\cP_j(B_j^\beta)$ commutes with $R_j$, and thus by \Cref{lem:transitivity} and the fact that $R_j$ is gapped, we get that $\cP_j(A_j^\alpha)$ and $\cP_j(B_j^\beta)$ commute with each other. From this calculation it is also evident that each $h'_{i,j}$ commutes with $R_i$. 

\item \textbf{Rounding error.} To bound the rounding error, we will first show a commutativity property of the operators $R_i$ for every qubit where they are defined. Consider such a qubit $i$ and let $h_{i,j}$ be a Hamiltonian term acting on qubit $i$. Decompose $h_{i,j}$ as
\begin{equation}\tag{\ref{eqn:extra-cheese-decomp}}
    h_{i,j}  = \sum_{\alpha=0}^3 A_i^\alpha \otimes \sigma_j^\alpha. 
\end{equation}
Since, by the hypothesis, $\|[h_{i,j},(R_i \otimes I_j) ]\| \leq \kappa_i$, we can conclude by applying \Cref{thm:comm-propagates-down} that for each $\alpha$:
\begin{equation}
 \| [A_i^\alpha, R_i] \| \leq \kappa_i. \label{eq:R-commutator}
\end{equation}

Now, we can compute the rounding error of any Hamiltonian term. Let $h_{i,j}$ be any term of the Hamiltonian. If both $\cP_i$ and $\cP_j$ are equal to identity, then there is nothing to prove.

Now, suppose without loss of generality that qubit $i$ has $\cP_i$ not equal to identity, so $R_i$ is defined. Then we obtain
\begin{align}
    \| \cP_i(h_{i,j}) - h_{i,j} \| &= \| (\sum_{\alpha =0}^3 \cP_i(A_i^\alpha)  - A_i^\alpha) \otimes \sigma_j^\alpha \| \\
    &\leq \sum_{\alpha =0}^3 \| \cP_i(A_i^\alpha) - A_i^\alpha\| \\
    &\leq  \frac{1}{\Delta(R_i)} \sum_{\alpha =0}^3 \|[A_i^\alpha, R_i] \| \\
    &\leq \frac{4}{\Delta(R_i)} \cdot \kappa_i \leq \frac{4 \kappa_i}{\eta_i},
 \end{align}
 where in the second to last line we applied \Cref{thm:pinch} and \Cref{eq:R-commutator}. Now, if $\cP_j$ is equal to identity, then $h'_{i,j} = \cP_i(h_{i,j})$, so we have shown \Cref{eq:cheese-rounding-error} in this case.

The remaining case is if $\cP_j$ is also not equal to identity. In this case, write the decomposition
\[ h_{i,j} = \sum_{\alpha=0}^{3} \sigma_i^\alpha \otimes C_j^\alpha. \]
We can similarly derive the bound
\begin{align}
    \| (\cP_i \otimes \cP_j)(h_{i,j}) - \cP_i(h_{i,j}) \| &= \| \sum_{\alpha =0}^3 (\cP_j(C_j^\alpha) - C_j^\alpha) \otimes \cP_i(\sigma_i^\alpha) \| \\
    &\leq  \sum_{\alpha =0}^3 \| (\cP_j(C_j^\alpha) - C_j^\alpha) \| \\
    &\leq \frac{1}{\Delta(R_j)} \sum_{\alpha =0}^3 \| [C_j^\alpha, R_j] \| \\
    &\leq \frac{4\kappa_j}{\eta_j} \qquad \text{(by \Cref{eq:R-commutator} and } \Delta(R_j) \geq \eta),
\end{align}
where we have used the bound $\| \cP_i(\sigma_i^\alpha) \| \leq \| \sigma_i^\alpha \| = 1$, which follows from the fact that $\cP_i$ is a positive unital map and thus contractive in operator norm (see~\cite[Theorem~6.12]{Watrous_2018}). 
So overall, by the triangle inequality, we get
\begin{align}
    \| h_{i,j} - (\cP_i \otimes \cP_j)(h_{i,j})  \| &\leq \| h_{i,j} - \cP_i(h_{i,j}) \| + \| \cP_i(h_{i,j}) - (\cP_i \otimes \cP_j)(h_{i,j}) \| \\
    &\leq \frac{4\kappa_i}{\eta_i} + \frac{4\kappa_j}{\eta_j}.
\end{align}
This proves \Cref{eq:rounding-error}.
\end{enumerate}
\end{proof}

\section{Applications to Almost Commuting $2$-local Hamiltonians}
In this section, we use our locality-preserving rounding technique (formally, \Cref{thm:transforming-hamiltonian}) to show that the almost commuting $2$-local Hamiltonian problem lies in $\mathsf{NP}$.

\subsection{Containment in $\NP$}
\begin{theorem}\label{thm:aclh-to-clh-reduction}
There exists a linear-time computable reduction from the $(\gamma, \eps)$-$2$-$\mathrm{ACLH}$ problem to the $(\gamma')$-$2$-$\mathrm{CLH}$ where $\gamma' = \gamma - 432\ \eps^{1/6}$ for any $0 \leq \eps \leq 1$.
\end{theorem}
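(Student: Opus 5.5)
The reduction maps an instance $x=(H,a,b)$ of $(\gamma,\eps)$-$2$-$\mathrm{ACLH}$ to $x'=(\hat H,a',b')$. Here $\hat H=\sum_{I}\hat h_I$ is produced by the rounding algorithm of \Cref{thm:transforming-hamiltonian}, and the thresholds are shifted inward by the rounding error. Concretely, I set $\delta := 216\, m\,\eps^{1/6}$, $a' := a+\delta$ and $b' := b-\delta$. The new absolute gap is then $\Gamma' = b-a-2\delta = m\gamma - 432\, m\,\eps^{1/6}$, so the relative gap is $\gamma' = \gamma - 432\,\eps^{1/6}$, as claimed. When $\gamma'\le 0$ the statement is vacuous (or the instance is trivial), so I will assume $\gamma'>0$, ensuring $a'<b'$.

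\textbf{Step 1: validity of the output instance.} By \Cref{thm:transforming-hamiltonian}, $\hat H$ is $2$-local on qubits and its terms commute exactly, so it has the form required by $\gamma'$-$2$-CLH. The one issue is the norm bound $\|\hat h_I\|\le 1$ in \Cref{defn:k-LH}, since the rounding might slightly increase norms. The pinching steps are contractive (a positive unital map, as used in \Cref{lemma:double-extra-cheese}), but snapping to $\lambda_{\max}$ could increase the norm. I would address this by rescaling all terms by a factor $1/(1+216\eps^{1/6})$ together with $a',b'$. Alternatively, I would check that the snapping step preserves $\|\cdot\|\le 1$ up to this factor and absorb it into the constant. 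I expect this bookkeeping to be the main (minor) obstacle, along with the bit complexity of the entries: eigenprojectors of $2\times 2$ Hermitian matrices involve square roots, so they must be computed to polynomial precision, with the resulting error absorbed into $\delta$.

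\textbf{Step 2: correctness.} For any state $\ket\psi$, $|\bra\psi H\ket\psi-\bra\psi\hat H\ket\psi|\le\|H-\hat H\|\le\delta$ by \eqref{eq:rounding-error}. In a YES instance there is $\ket\psi$ with $\bra\psi H\ket\psi\le a$, so $\bra\psi\hat H\ket\psi\le a+\delta=a'$. In a NO instance, every $\ket\psi$ satisfies $\bra\psi\hat H\ket\psi\ge b-\delta=b'$. Hence YES instances map to YES instances and NO instances map to NO instances.

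\textbf{Step 3: running time.} The running time is $O(m)$, times a polynomial in the bit size of each term, directly from \Cref{thm:transforming-hamiltonian}. Computing $a'$ and $b'$ adds only constant-time arithmetic, up to a rational upper bound on $\eps^{1/6}$ computed to polynomial precision.
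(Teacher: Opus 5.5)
Your proposal is the paper's proof. It uses the same rounding via \Cref{thm:transforming-hamiltonian} and the same thresholds $a'=a+216m\eps^{1/6}$, $b'=b-216m\eps^{1/6}$, and your bound $|\bra{\psi}H\ket{\psi}-\bra{\psi}\hat H\ket{\psi}|\le\|H-\hat H\|$ does exactly the job of the paper's appeal to \Cref{fact:weyl-corollary} for $\lambda_{\min}$.

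The paper does not address the norm and precision issues you raise in Step~1, and the norm issue is genuine, since snapping to $\lambda_{\max}(A)\,I$ can push $\|\hat h_I\|$ slightly above $1$. However, your rescaling by $1/(1+216\eps^{1/6})$ only gives relative gap $\gamma'/(1+216\eps^{1/6})$, not the claimed $\gamma'$. Snapping to $\tfrac12\mathrm{Tr}(A)\,I$ instead makes every rounding step a positive unital, hence norm-nonincreasing, map with no larger error, and so keeps $\gamma'=\gamma-432\eps^{1/6}$ exactly.
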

\begin{proof}
Given $(H,a,b)$ an instance of $(\gamma,\eps)$-2-ACLH, let $H'$ be the commuting Hamiltonian obtained by applying \Cref{thm:transforming-hamiltonian} to $H$, and let $a' = a + 216m\eps^{1/6}$ and $b' = b - 216m\eps^{1/6}$. From the operator norm bound in the theorem, together with \Cref{fact:weyl-corollary}, we know that \[|\lambda_{\min}(H') - \lambda_{\min}(H)| \leq 216 m \eps^{1/6}. \] 

Hence, we have
\begin{align}
    \lambda_{\min}(H) \leq a &\Longrightarrow \lambda_{\min}(H') \leq a' \\
    \lambda_{\min}(H) \geq b &\Longrightarrow \lambda_{\min}(H') \geq b'.
\end{align}
Thus, since $\gamma' \leq (b'-a')/m$, it holds that $(H',a',b')$ is always an instance of $(\gamma')$-2-CLH satisfying the promise for $\gamma' = \gamma - 432\eps^{1/6}$, and $H'$ is a YES instance iff $H$ is a YES instance, so the map $H \mapsto H'$ is a reduction as claimed.
\end{proof}

\begin{corollary}
\label{corollary:NP-Containment}
    The $(\gamma, \eps)$-$2$-$\mathrm{ACLH}$ promise problem is in $\NP$ when $\gamma - 432\eps^{1/6} \geq 1/\poly(n)$.
\end{corollary}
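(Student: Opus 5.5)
The plan is to compose the linear-time reduction of \Cref{thm:aclh-to-clh-reduction} with the $\NP$ containment of \Cref{thm:clh-in-np}.

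Given an instance $(H,a,b)$ of $(\gamma,\eps)$-$2$-$\mathrm{ACLH}$, the verifier first computes the rounded commuting Hamiltonian $H'$ of \Cref{thm:transforming-hamiltonian}. It sets thresholds $a' = a + 216m\eps^{1/6}$ and $b' = b - 216m\eps^{1/6}$. By \Cref{thm:aclh-to-clh-reduction}, $(H',a',b')$ is then an instance of $\gamma'$-$2$-$\mathrm{CLH}$ with $\gamma' = \gamma - 432\eps^{1/6}$, and it preserves YES/NO status.

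The hypothesis $\gamma' \geq 1/\poly(n)$ is exactly the admissibility condition needed to invoke \Cref{thm:clh-in-np}. That theorem gives a polynomial-time verifier $V'$ and polynomial-length witnesses for $(H',a',b')$. The $\NP$ verifier for ACLH runs the reduction and then $V'$ on the same witness.

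Completeness and soundness transfer directly, because YES instances map to YES instances and NO instances map to NO instances.

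The main obstacle is bookkeeping: the output of the reduction must be a legal CLH instance under \Cref{defn:k-LH}. Three things need checking.

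\begin{itemize}
\item \emph{Bounded norms.} Each rounded term must satisfy $\|\hat h_I\|\le 1$. Pinching is contractive, and snapping replaces an operator by its top eigenvalue times $I$, so norms stay bounded. If a snapping step ever pushes a term's norm slightly above $1$, I would rescale $H'$ and the thresholds by a constant. This keeps the relative gap inverse-polynomial.
\item \emph{Bit complexity.} The entries of $H'$ and the thresholds $a',b'$ must have $\poly(n)$-bit descriptions. Computing eigenprojectors of $2\times 2$ matrices involves square roots. I would handle this by computing them to $\poly(n)$ bits of precision. The resulting perturbation of each term is exponentially small, so the promise survives after absorbing it into a slightly smaller $\gamma'$ (still $\ge 1/\poly(n)$).
\item \emph{Exact commutativity.} Rounding the eigenprojectors could break exact commutation. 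To avoid this, the verifier can represent each pivot basis exactly by a rational approximation of the unit eigenvector, normalized symbolically. Alternatively, it can pinch in a rational orthonormal basis close to the true one; pinching in any fixed basis per qubit still yields exactly commuting terms. This keeps the instance exactly commuting while changing it only negligibly.
\end{itemize}

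With these checks in place, the containment follows by composition.
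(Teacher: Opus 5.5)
Your proposal is correct and matches the paper's proof, which is exactly the composition of the reduction in \Cref{thm:aclh-to-clh-reduction} with the $\NP$ containment of \Cref{thm:clh-in-np}. Your extra bookkeeping is sound and addresses well-formedness details the paper leaves implicit: renormalizing terms whose norm slightly exceeds $1$ after snapping, working with finite-precision eigenvectors, and pinching in rational rank-one bases so that exact commutation survives.
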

\begin{proof}
This is a direct consequence of \Cref{thm:aclh-to-clh-reduction} and \Cref{thm:clh-in-np}.
\end{proof}

\subsection{Quantum Gibbs sampling}\label{sec:Gibbs}

As an application of our locality-preserving rounding technique for almost commuting quantum many-body Hamiltonians, we now consider the task of quantum Gibbs sampling.
For a local Hamiltonian $H$ and an inverse temperature $\beta > 0$, the Gibbs state 
\begin{align}\label{eq:Gibbs-state}
\rho_\beta(H) = \frac{e^{-\beta H}}{\mathrm{Tr}\left[e^{-\beta H}\right]}    
\end{align}
describes the thermal equilibrium properties of a quantum system at finite temperature $T=1/\beta$.
Given a precision parameter $\delta >0$, the $(H,\beta,\delta)$-Gibbs state preparation (or sampling) task with respect to the Hamiltonian $H$ and an inverse temperature $\beta > 0$ is to output $\rho$ such that
\begin{align}
\| \rho - \rho_\beta(H) \|_1 \leq \delta.    
\end{align}

\paragraph{Almost-commuting Hamiltonians.}

Our rounding technique allows to reduce quantum Gibbs sampling for 
$\eps$-almost-commuting $2$-local qubit Hamiltonians
to the task of Gibbs sampling for a nearby commuting $2$-local qubit Hamiltonians, provided that $\eps$ is sufficiently small.

We use the following result on the continuity of the Gibbs state (with respect to the Hamiltonian) which is implicit in~\cite[Proposition 15]{Capel_2025}.
\begin{proposition}[Continuity of the Gibbs state,~\cite{Capel_2025}]\label{prop:continuity}
Let $H,\hat{H}$ be Hermitian operators acting on a finite-dimensional complex vector space. Then, for any inverse temperature $\beta > 0$,
$$
\| \rho_\beta(H) - \rho_\beta(\hat{H})\|_1 \, \leq \, e^{2 \beta \| H - \hat{H} \|} -1.
$$
In particular, if $\|H-\hat{H}\| \leq \ln(1+\delta)/2\beta$, then it holds that $\|\rho_\beta(H) - \rho_\beta(\hat{H})\|_1 \leq \delta$.
\end{proposition}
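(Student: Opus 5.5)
The strategy is to write the difference of unnormalized Gibbs operators as an integral along a linear interpolation, and to control the normalized states through a multiplicative sandwich bound. Set $\Delta := \hat H - H$ with $d := \|\Delta\|$. The key intermediate claim is the operator inequality
\[
e^{-\beta d}\,\mathrm{Tr}[f(e^{-\beta H})] \;\le\; \mathrm{Tr}[f(e^{-\beta \hat H})] \;\le\; e^{\beta d}\,\mathrm{Tr}[f(e^{-\beta H})],
\]
applied in the trivial case $f=\mathrm{id}$. By Golden--Thompson, or directly by monotonicity from Weyl's inequality (\Cref{lem:spectral-stab}) together with $\hat H \le H + d I$, this gives $Z(\hat H)/Z(H) \in [e^{-\beta d}, e^{\beta d}]$. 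The trace-norm difference of the states is harder, because $e^{-\beta \hat H}$ and $e^{-\beta H}$ need not commute. Here I would use the Duhamel formula,
\[
e^{-\beta \hat H} - e^{-\beta H} \;=\; -\int_0^1 \beta\, e^{-s\beta \hat H}\,\Delta\, e^{-(1-s)\beta H}\,ds .
\]
Taking trace norms and applying Hölder's inequality with exponents $1/s$ and $1/(1-s)$ bounds each integrand by $\beta d\, Z(\hat H)^{s} Z(H)^{1-s}$. Integrating, and using the partition-function ratio bound, yields
\[
\|e^{-\beta\hat H}-e^{-\beta H}\|_1 \;\le\; \beta d\, Z(H)\, e^{\beta d}.
\]

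Next I pass to normalized states using
\[
\rho_\beta(\hat H)-\rho_\beta(H) \;=\; \frac{e^{-\beta\hat H}-e^{-\beta H}}{Z(\hat H)} \;+\; e^{-\beta H}\Big(\frac{1}{Z(\hat H)}-\frac{1}{Z(H)}\Big).
\]
The first term is at most $\beta d\, e^{\beta d}\,Z(H)/Z(\hat H) \le \beta d\, e^{2\beta d}$. The second term is at most $|Z(H)/Z(\hat H) - 1| \le e^{\beta d}-1$. Together these give a bound of order $\beta d\,e^{2\beta d} + e^{\beta d} - 1$, which is of the right order but not literally $e^{2\beta d}-1$.

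To obtain the exact form stated, I would instead run the interpolation at the level of states. Define $H_t = H + t\Delta$ and $\rho_t = \rho_\beta(H_t)$. Differentiating gives $\dot\rho_t = -\beta\int_0^1 \rho_t^{s}\,(\Delta - \mathrm{Tr}[\rho_t\Delta])\,\rho_t^{1-s}\,ds$. Hölder gives $\|\dot\rho_t\|_1 \le \beta\|\Delta - \langle\Delta\rangle_{\rho_t}\| \le 2\beta d$, so $\|\rho_1-\rho_0\|_1 \le 2\beta d \le e^{2\beta d}-1$. This is actually stronger than what is claimed, and it is the route I would commit to.

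The main obstacle is justifying the derivative formula for the normalized state and the Hölder step for fractional powers in the trace norm. Concretely, this requires $\|\rho^{s} X \rho^{1-s}\|_1 \le \|X\|\,\|\rho^s\|_{1/s}\|\rho^{1-s}\|_{1/(1-s)} = \|X\|$. The "in particular" clause then follows immediately: $d \le \ln(1+\delta)/2\beta$ gives $e^{2\beta d}-1 \le \delta$.
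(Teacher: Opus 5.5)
The paper gives no proof of this proposition; it only cites it as implicit in \cite{Capel_2025}, Proposition~15. Your committed argument is correct and self-contained, and it proves something stronger. The map $t\mapsto\rho_t$ is smooth because $Z_t>0$. Your derivative formula follows from the Duhamel identity you already wrote: apply it to $H_{t+h}$ and $H_t$, divide by $h$, and let $h\to 0$. Since $\rho_t$ is full rank, $\rho_t^{s}$ is unambiguous and $\|\rho_t^{s}\|_{1/s}=1$. Finally, $\mathrm{Tr}[\rho_t\Delta]\in[-d,d]$ gives $\|\Delta-\mathrm{Tr}[\rho_t\Delta]\,I\|\le 2d$. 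The resulting bound $\|\rho_\beta(H)-\rho_\beta(\hat H)\|_1\le 2\beta d$ is linear in $\beta d$, and the stated bound follows from $e^{x}\ge 1+x$. The step you flagged as the ``main obstacle'' is therefore routine.

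Your first route was not off target either; it lost the exact form only through one loose estimate. If you evaluate the integral exactly,
\[
\beta d\int_0^1 e^{s\beta d}\,ds \;=\; e^{\beta d}-1,
\]
then the first term is at most $\frac{Z(H)}{Z(\hat H)}(e^{\beta d}-1)\le e^{\beta d}(e^{\beta d}-1)$. The second term is at most $e^{\beta d}-1$. These sum to exactly $e^{2\beta d}-1$.

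So the stated form is precisely what the unnormalized Duhamel argument plus the partition-function sandwich produces. That route never differentiates a normalized state. Interpolating at the level of states, as you chose to do, buys the sharper linear dependence on $\beta\|H-\hat H\|$.

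A minor point: your displayed ``operator inequality'' with a function $f$ is really just the scalar bound $e^{-\beta d}Z(H)\le Z(\hat H)\le e^{\beta d}Z(H)$. This follows eigenvalue by eigenvalue from Weyl's inequality, so $f$ and the mention of Golden--Thompson are superfluous.
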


Equipped with the above technical fact, we can now show the following reduction.

\begin{theorem} 
Let $H = \sum_{i=1}^m h_i$ be an $\eps$-almost-commuting $2$-local qubit Hamiltonian with $\| [h_i,h_j]\| \leq \eps$, for all $i,j$. Then, for any inverse temperature $\beta >0$ and any precision $\delta >0$ such that
\begin{align}\label{eq:paramrters}
216 m \eps^{1/6} \leq  \ln\Big(1+\frac{\delta}{2} \Big)/2\beta
\end{align}
there exists a reduction from the task of $(H,\beta,\delta)$-Gibbs sampling to the task of $(\hat{H},\beta,\delta/2)$-Gibbs sampling for a $2$-local commuting qubit Hamiltonian $\hat{H}$. Moreover, if the Gibbs sampler for $\hat{H}$ runs in time $T$, then the resulting Gibbs sampler for $H$ runs in time $T + O(m)$.
\end{theorem}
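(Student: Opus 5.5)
The reduction is direct: first round $H$ to a commuting Hamiltonian with \Cref{thm:transforming-hamiltonian}, then run the commuting Gibbs sampler on the result. Concretely, given the description of $H$, we run the deterministic rounding algorithm in classical time $O(m)$. It outputs a $2$-local commuting qubit Hamiltonian $\hat{H}=\sum_{i=1}^m \hat h_i$ with $\|H-\hat H\|\leq 216\, m\,\eps^{1/6}$. We then call the assumed $(\hat H,\beta,\delta/2)$-Gibbs sampler and output its state $\rho$, which satisfies $\|\rho-\rho_\beta(\hat H)\|_1\leq \delta/2$. The running time is $T+O(m)$, since the only extra work is the rounding step.

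For correctness, we apply the triangle inequality for the trace norm:
\begin{align*}
\|\rho-\rho_\beta(H)\|_1 \leq \|\rho-\rho_\beta(\hat H)\|_1 + \|\rho_\beta(\hat H)-\rho_\beta(H)\|_1 .
\end{align*}
The first term is at most $\delta/2$ by the guarantee of the sampler. For the second term we use \Cref{prop:continuity}, which gives the bound $e^{2\beta\|H-\hat H\|}-1$. Combining the hypothesis \eqref{eq:paramrters} with the rounding bound yields $\|H-\hat H\|\leq \ln(1+\delta/2)/2\beta$. Hence $e^{2\beta\|H-\hat H\|}-1\leq \delta/2$, and the total error is at most $\delta$, as required.

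The only point needing care is that the proposition's "in particular" clause is stated for $\delta$, while here we need it at precision $\delta/2$. Since the exponential bound holds for every precision level, we simply apply it with $\delta/2$ in place of $\delta$, and the argument goes through.
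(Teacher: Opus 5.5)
Your proposal is correct and follows the paper's proof essentially step for step. Like the paper, you round $H$ in $O(m)$ time via \Cref{thm:transforming-hamiltonian}, split the error with the trace-norm triangle inequality, and bound the second term by \Cref{prop:continuity} together with the hypothesis \eqref{eq:paramrters}.
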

\begin{proof}
The reduction simply applies the rounding technique from \Cref{thm:transforming-hamiltonian} to $H$, resulting in a $2$-local commuting qubit Hamiltonian $\hat{H}$ such that $\| H - \hat{H} \| \leq 216 m \eps^{1/6}$. Then, the output of any successful $(\hat{H},\beta,\delta/2)$-Gibbs sampler results in a state $\rho$ such that
\begin{align*}
\| \rho - \rho_\beta(H) \|_1 
&\leq \| \rho - \rho_\beta(\hat{H}) \|_1  + \| \rho_\beta(\hat{H}) - \rho_\beta(H) \|_1    && (\text{triangle inequality})\\
&\leq \frac{\delta}{2} + \| \rho_\beta(\hat{H}) - \rho_\beta(H) \|_1 && (\text{by assumption})\\
&\leq  \frac{\delta}{2}  + e^{2 \beta \| H - \hat{H} \|} -1  && (\text{\Cref{prop:continuity}})\\
&\leq \frac{\delta}{2} + \frac{\delta}{2}  = \delta  && (\text{due to~\eqref{eq:paramrters}})
\end{align*}
Because the rounding step is deterministic and requires time $O(m)$, the claim follows.
\end{proof}
We remark that our Gibbs sampling reduction for almost-commuting Hamiltonians above can be further combined with the reduction of~\cite[Theorem 9]{hwang2025gibbsstatepreparationcommuting}, thereby reducing Gibbs sampling for (certain) almost-commuting $2$-local qubit Hamiltonians to Gibbs sampling of certain classical Hamiltonians.

\subsection{Faster Hamiltonian simulation}\label{sec:hamiltonian-simulation}
Our result gives a method to speed up Hamiltonian simulation for approximately commuting Hamiltonians. The basic idea is to use the fact that time evolution of commuting Hamiltonians can be simulated very efficiently, since the matrix exponential factors into a product of exponentials of the individual terms. For short times, thus, one could simply round an approximately commuting Hamiltonian $H$ to a commuting Hamiltonian $H'$ with our technique, and simulate $H'$ instead of $H$. For longer times, the error incurred by this rounding may be prohibitive: for this more general case, we use can use the "interaction picture" formalism of Low and Wiebe~\cite{low2018hamiltonian}, in essentially a black-box way, to add corrections from the rounding error. 

As a first step, we show the following helper lemma which is essentially folklore.
\begin{lemma}\label{lem:com-ham-fast}
    Suppose $H = \sum_{i=1}^{m} h_i$ is an $O(1)$-local commuting Hamiltonian on $n$ qubits, with $m = \Omega(n)$, and $\delta > 0$. Then there exists a unitary quantum circuit on $n$ qubits of gate complexity 
     \[ O\Big(m \, \poly\big(\log(m/\delta), \log t\big)\Big)\]
     that implements a unitary $U$ such that
     \[ \|U - e^{iHt} \|_\infty \leq \delta. \] 
     Moreover, this circuit can be computed form a classical description of $H$ and $t$ in classical time scaling as $O(m \poly(\log\frac{1}{m\delta}, \log t))$.
\end{lemma}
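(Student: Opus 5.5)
Since the terms $h_i$ pairwise commute, the exponential factors exactly: $e^{iHt} = \prod_{i=1}^m e^{i h_i t}$, in any order and with no Trotter error. The plan is therefore to implement each factor $e^{i h_i t}$ separately to precision $\delta/m$ and compose them. A telescoping argument then bounds the total error by $\delta$, because each factor is unitary: $\|\prod_i U_i - \prod_i V_i\| \le \sum_i \|U_i - V_i\|$ for unitaries $U_i, V_i$.

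For a single term $h_i$, which acts nontrivially on $k = O(1)$ qubits, $e^{i h_i t}$ is a $2^k \times 2^k$ unitary of constant dimension. Classically, we diagonalize $h_i = W_i D_i W_i^\dagger$ numerically and compute the phases $e^{i \lambda t}$. Here $\lambda t$ only needs to be known modulo $2\pi$ to precision about $\delta/m$. Since $\|h_i\| \le 1$, this means we compute $\lambda t \bmod 2\pi$ with $O(\log t + \log(m/\delta))$ bits of working precision, giving a cost of $\poly(\log(m/\delta), \log t)$ per term.

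We then compile the resulting $k$-qubit unitary into a universal gate set. Exact synthesis into $O(4^k) = O(1)$ two-qubit gates is possible, for example via a cosine–sine or Givens decomposition. Each of these gates is approximated to accuracy $\delta/(m\cdot O(1))$ using Solovay–Kitaev, or better, an optimal single-qubit synthesis. This costs $\poly\log(m/\delta)$ gates per term.

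Summing over the $m$ terms gives gate complexity $O(m \poly(\log(m/\delta), \log t))$, with the same scaling for the classical preprocessing. The hypothesis $m = \Omega(n)$ is needed only so that the bound is stated in terms of $m$ with no separate $n$ overhead; qubits touched by no term simply receive no gates.

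The main point requiring care is the numerical side. We must ensure that eigen-decomposition and phase evaluation error, combined with synthesis error, stay within $\delta/m$ per term. We also need the dependence on $t$ to enter only through the bit-length of $t$, which the reduction of $\lambda t$ modulo $2\pi$ handles. For the eigenvector perturbation we would avoid explicit diagonalization sensitivity, which arises with near-degenerate eigenvalues. Instead we would compute $e^{i h_i t}$ directly, via scaling-and-squaring of the matrix exponential in $O(\log t)$ squarings at precision $O(\log(mt/\delta))$ bits. This is stable for constant-size matrices and gives a unitary close to the target, which we then project to the nearest unitary by polar decomposition before synthesis.
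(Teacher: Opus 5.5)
Your proposal is correct and is essentially the paper's argument. You use commutativity to write $e^{iHt}=\prod_i e^{ih_it}$, compute each constant-size factor classically to accuracy $\delta/m$, synthesize it with Solovay--Kitaev, and add up the per-term errors. Your extra numerical care (reducing phases modulo $2\pi$, scaling-and-squaring, and polar-projecting the computed matrix back to a unitary before synthesis) fills in details the paper glosses over. The worry about near-degenerate eigenvalues is unnecessary, though: since $\|e^{iAt}-e^{iBt}\|\le t\|A-B\|$ for Hermitian $A,B$, a backward-stable diagonalization would already suffice.
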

\begin{proof}
   Using the commutativity of the terms of $H$, we may write the exponential of $H$ as a product of exponentials of individual terms as follows:
    \[ e^{iHt} = e^{i\sum_{i=1}^{m} h_i t} = \prod_{i=1}^{m} e^{i h_i t}. \]
    So to simulate the time evolution, it suffices to simulate $e^{i h_i t}$ for each term $h_i$.
    To do this, since $h_i$ acts locally on a constant number of qubits, we can explicitly compute $e^{ih_i t}$, up to some appropriate finite precision, and then apply the Solovay-Kitaev theorem, in the algorithmic form of~\cite{dawson2006solovay}, to simulate it a sequence of using elementary gates.
    
    We now estimate the time complexity of this procedure. In order to implement $e^{iHt}$ up to error $\delta$, we need to implement each $e^{i h_i t}$ up to error $\delta/m$. To do this, if each term $h_i$ is $k$-local, it suffices to compute each entry of the matrix $e^{i h_i t}$ up to error $\delta/100mk^2$ (so that the total computed matrix is at most $\delta/100m$ far in operartor norm from $e^{ih_i t}$), and then apply the Solovay-Kitaev theorem to simulate the resulting unitary up to error $\delta/100m$. Applying the results of~\cite{dawson2006solovay}, we get the claimed gate complexity and runtime.
\end{proof}

We will now sketch how to use our rounding to perform Hamiltonian simulation. The main technical obstacle to directly applying a standard algorithm is the following: Hamiltonian simulation algorithms are typically given in the oracle model, where access to $H$ is given through a unitary oracle $O_H$ that implements a "block-encoding" of $H$: $(\bra{0}\otimes I)O_H (\ket{0} \otimes I) \propto H$. However, in our setting, we need to work in a model where the Hamiltonian is given as \emph{explicit} input (i.e. as a classical string containing descriptions of all the local terms), in order to apply our rounding algorithm. Indeed, implicit in the oracle picture is the assumption that for any realistic Hamiltonian, such an oracle can be implemented efficiently. So we will assume that we have a method to construct such a block encoding and leave its size as a parameter.

    Our rounding says that
    \[ H = H' + \Delta,\]
    where $H'$ is a commuting Hamiltonian with $\|H'\| := \alpha_A \leq m$, and $\Delta$ is a local Hamiltonian with $\|\Delta\| := \alpha_B  \leq 216 m \eps^{1/6}$. Moreover, $H'$ and $\Delta$ are efficiently computable from $H$. To perform Hamiltonian simulation, suppose that once we have written down $\Delta$, we can find a block encoding of $\Delta/\alpha_B$ with gate complexity $T_{block}$. Then we can apply the interaction picture algorithm of Low Wiebe. To calculate the gate complexity of the resulting circuit, we apply Theorem 7 from~\cite{low2018hamiltonian} with $H'$ as the "fast" part of the Hamiltonian and $\Delta$ as the slow part. 
    This theorem tells us that the gate complexity is
    \[O(\alpha_B t(C_B + C_A)\polylog(t(\alpha_A + \alpha_B)/\delta)),\]
    where the quantities $C_A, C_B$ are as follows:
    \begin{itemize}
        \item $C_B$ is the gate complexity of a block encoding of $\Delta/\alpha_B$. By our assumption, this is $T_{block}$. 
        \item $C_A$ is the gate complexity of a simulation of $e^{-iH'/\alpha_B}$ up to error $\delta$. By \Cref{lem:com-ham-fast}, we have that $C_A = O(m \poly(\log(m/\delta), \log \alpha_B^{-1}))$. (In fact, it is also a requirement of the theorem that the gate complexity of a simulation of $e^{-iH't}$ to error $\eta$ should scale as $O(t \log^{\gamma}(t/\eta))$ for some constant $\gamma$, and this is also true for us by \Cref{lem:com-ham-fast}.) 
    \end{itemize}
    Putting this together and ignoring polylogarithmic factors we get a gate complexity of 
    \[ \tilde{O}(m(T_{block} + m) \eps^{1/6} t).\]
    The important feature of this expression is that it scales in the norm of $\Delta$ and the commutator error, rather than in the total norm of the Hamiltonian.
\printbibliography

@article{10.1007/BF01994876,
author = {Boppana, Ravi and Halld\'{o}rsson, Magn\'{u}s M.},
title = {Approximating maximum independent sets by excluding subgraphs},
year = {1992},
issue_date = {Jun 1992},
publisher = {BIT Computer Science and Numerical Mathematics},
address = {USA},
volume = {32},
number = {2},
issn = {0006-3835},
url = {https://doi.org/10.1007/BF01994876},
doi = {10.1007/BF01994876},
journal = {BIT},
month = jun,
pages = {180–196},
numpages = {17}
}

@article{Davidson1985,
  author = {Kenneth R. Davidson},
  title = {Almost commuting {Hermitian} matrices},
  journal = {Mathematica Scandinavica},
  volume = {56},
  pages = {222--240},
  year = {1985},
  publisher = {Department of Mathematics, Aarhus University},
  doi = {10.7146/math.scand.a-12098},
  url = {https://www.mscand.dk/article/view/12098}
}

@misc{herrera2022hastingsapproachlinstheorem,
      title={On Hastings' approach to Lin's Theorem for Almost Commuting Matrices}, 
      author={David Herrera},
      year={2022},
      eprint={2011.11800},
      archivePrefix={arXiv},
      primaryClass={math.FA},
      url={https://arxiv.org/abs/2011.11800}, 
}

@article{Lin1997,
  author = {Lin, Huaxin},
  title = {Almost commuting selfadjoint matrices and applications},
  journal = {Fields Institute Communications},
  volume = {13},
  pages = {193--233},
  year = {1997},
  publisher = {American Mathematical Society},
  address = {Providence, RI}
}

@book{Kato:1966:PTL,
  alias = {Kato 66},
  author = {Kato, Tosio},
  bibdate = {Fri Nov 24 15:18:30 1995},
  bibsource = {ftp://ftp.math.utah.edu/pub/bibnet/authors/m/matched-field-proc.bib},
  key = {eigenvalues, hilbert spaces},
  lccn = {QA320 .K33},
  pages = {xix + 592},
  sthbib = {M3 Kat 81 60},
  title = {Perturbation Theory for Linear Operators},
  year = 1966
}

@article{Hastings_2010,
   title={Almost commuting matrices, localized Wannier functions, and the quantum Hall effect},
   volume={51},
   ISSN={1089-7658},
   url={http://dx.doi.org/10.1063/1.3274817},
   DOI={10.1063/1.3274817},
   number={1},
   journal={Journal of Mathematical Physics},
   publisher={AIP Publishing},
   author={Hastings, Matthew B. and Loring, Terry A.},
   year={2010},
   month=jan }

@article{arad2011notepartialnogotheorem,
  title={A note about a partial no-go theorem for quantum {PCP}},
  author={Arad, Itai},
  journal={Quantum Information \& Computation},
  volume={11},
  number={11-12},
  pages={1019--1027},
  year={2011},
  publisher={Rinton Press, Incorporated Paramus, NJ},
  url={https://www.rintonpress.com/xxqic11/qic-11-1112/1019-1027.pdf}
}

@article{Mbeng_2024,
   title={The quantum Ising chain for beginners},
   ISSN={2590-1990},
   url={http://dx.doi.org/10.21468/SciPostPhysLectNotes.82},
   DOI={10.21468/scipostphyslectnotes.82},
   journal={SciPost Physics Lecture Notes},
   publisher={Stichting SciPost},
   author={Mbeng, Glen Bigan and Russomanno, Angelo and Santoro, Giuseppe E.},
   year={2024},
   month=jun }

@article{Vijay_2016,
   title={Fracton topological order, generalized lattice gauge theory, and duality},
   volume={94},
   ISSN={2469-9969},
   url={http://dx.doi.org/10.1103/PhysRevB.94.235157},
   DOI={10.1103/physrevb.94.235157},
   number={23},
   journal={Physical Review B},
   publisher={American Physical Society (APS)},
   author={Vijay, Sagar and Haah, Jeongwan and Fu, Liang},
   year={2016},
   month=dec }

@article{Levin_2005,
   title={String-net condensation: A physical mechanism for topological phases},
   volume={71},
   ISSN={1550-235X},
   url={http://dx.doi.org/10.1103/PhysRevB.71.045110},
   DOI={10.1103/physrevb.71.045110},
   number={4},
   journal={Physical Review B},
   publisher={American Physical Society (APS)},
   author={Levin, Michael A. and Wen, Xiao-Gang},
   year={2005},
   month=jan }

@misc{chen2023fastthermalizationeigenstatethermalization,
      title={Fast Thermalization from the Eigenstate Thermalization Hypothesis}, 
      author={Chi-Fang Chen and Fernando G. S. L. Brandão},
      year={2023},
      eprint={2112.07646},
      archivePrefix={arXiv},
      primaryClass={quant-ph},
      url={https://arxiv.org/abs/2112.07646}, 
}

@misc{kastoryano2016quantumgibbssamplerscommuting,
      title={Quantum Gibbs Samplers: the commuting case}, 
      author={Michael J. Kastoryano and Fernando G. S. L. Brandao},
      year={2016},
      eprint={1409.3435},
      archivePrefix={arXiv},
      primaryClass={quant-ph},
      url={https://arxiv.org/abs/1409.3435}, 
}

@inproceedings{Anshu_2023, series={STOC ’23},
   title={NLTS Hamiltonians from Good Quantum Codes},
   url={http://dx.doi.org/10.1145/3564246.3585114},
   DOI={10.1145/3564246.3585114},
   booktitle={Proceedings of the 55th Annual ACM Symposium on Theory of Computing},
   publisher={ACM},
   author={Anshu, Anurag and Breuckmann, Nikolas P. and Nirkhe, Chinmay},
   year={2023},
   month=jun, pages={1090–1096},
   collection={STOC ’23} }

@article{Hastings2012,
  title={Trivial low-energy states for commuting Hamiltonians, and the quantum PCP conjecture},
  author={Hastings, Matthew B.},
  journal={Quantum Information and Computation},
  volume={13},
  number={5--6},
  pages={393--429},
  year={2013},
  note={arXiv:1201.3387}
}

@article{Kitaev_2003,
   title={Fault-tolerant quantum computation by anyons},
   volume={303},
   ISSN={0003-4916},
   url={http://dx.doi.org/10.1016/S0003-4916(02)00018-0},
   DOI={10.1016/s0003-4916(02)00018-0},
   number={1},
   journal={Annals of Physics},
   publisher={Elsevier BV},
   author={Kitaev, A.Yu.},
   year={2003},
   month=jan, pages={2–30} }

@article{Gharibian_2015,
   title={Quantum Hamiltonian Complexity},
   volume={10},
   ISSN={1551-3068},
   url={http://dx.doi.org/10.1561/0400000066},
   DOI={10.1561/0400000066},
   number={3},
   journal={Foundations and Trends® in Theoretical Computer Science},
   publisher={Emerald},
   author={Gharibian, Sevag and Huang, Yichen and Landau, Zeph and Shin, Seung Woo},
   year={2015},
   pages={159–282} }

@misc{irani2023commutinglocalhamiltonianproblem,
      title={Commuting Local Hamiltonian Problem on 2D beyond qubits}, 
      author={Sandy Irani and Jiaqing Jiang},
      year={2023},
      eprint={2309.04910},
      archivePrefix={arXiv},
      primaryClass={quant-ph},
      url={https://arxiv.org/abs/2309.04910}, 
}

@book{PathriaBeale2011,
  title = {Statistical Mechanics},
  author = {Pathria, R. K. and Beale, Paul D.},
  edition = {3},
  year = {2011},
  publisher = {Elsevier}
}

@book{Chandler1987,
  title = {Introduction to Modern Statistical Mechanics},
  author = {Chandler, David},
  year = {1987},
  publisher = {Oxford University Press}
}

@online{tao2010eigenvalues,
  author       = {Terence Tao},
  title        = {254A, Notes 3a: Eigenvalues and sums of Hermitian matrices},
  year         = {2010},
  month        = {January 12},
  url          = {https://terrytao.wordpress.com/2010/01/12/254a-notes-3a-eigenvalues-and-sums-of-hermitian-matrices/},
  note         = {Accessed: 2026-01-23}
}

@book{Callen1985,
  title = {Thermodynamics and an Introduction to Thermostatistics},
  author = {Callen, Herbert B.},
  edition = {2},
  year = {1985},
  publisher = {Wiley}
}

@book{LandauLifshitz1980,
  title = {Statistical Physics, Part 1},
  author = {Landau, L. D. and Lifshitz, E. M.},
  edition = {3},
  year = {1980},
  series = {Course of Theoretical Physics, Vol. 5},
  publisher = {Pergamon Press}
}

@book{Sachdev2011,
  title = {Quantum Phase Transitions},
  author = {Sachdev, Subir},
  edition = {2},
  year = {2011},
  publisher = {Cambridge University Press}
}

@book{AltlandSimons2010,
  title = {Condensed Matter Field Theory},
  author = {Altland, Alexander and Simons, Ben D.},
  edition = {2},
  year = {2010},
  publisher = {Cambridge University Press}
}

@article{Mermin1965,
  title = {Thermal Properties of the Inhomogeneous Electron Gas},
  author = {Mermin, N. David},
  journal = {Physical Review},
  year = {1965},
  volume = {137},
  number = {5A},
  pages = {A1441--A1443},
  doi = {10.1103/PhysRev.137.A1441}
}

@misc{feng2022swendsenwangdynamicsferromagneticising,
      title={Swendsen-Wang dynamics for the ferromagnetic Ising model with external fields}, 
      author={Weiming Feng and Heng Guo and Jiaheng Wang},
      year={2022},
      eprint={2205.01985},
      archivePrefix={arXiv},
      primaryClass={cs.DS},
      url={https://arxiv.org/abs/2205.01985}, 
}

@article{PhysRevLett.58.86,
  title = {Nonuniversal critical dynamics in Monte Carlo simulations},
  author = {Swendsen, Robert H. and Wang, Jian-Sheng},
  journal = {Phys. Rev. Lett.},
  volume = {58},
  issue = {2},
  pages = {86--88},
  numpages = {0},
  year = {1987},
  month = {Jan},
  publisher = {American Physical Society},
  doi = {10.1103/PhysRevLett.58.86},
  url = {https://link.aps.org/doi/10.1103/PhysRevLett.58.86}
}

@misc{schmidhuber2025hamiltoniandecodedquantuminterferometry,
      title={Hamiltonian Decoded Quantum Interferometry}, 
      author={Alexander Schmidhuber and Jonathan Z. Lu and Noah Shutty and Stephen Jordan and Alexander Poremba and Yihui Quek},
      year={2025},
      eprint={2510.07913},
      archivePrefix={arXiv},
      primaryClass={quant-ph},
      url={https://arxiv.org/abs/2510.07913}, 
}

@misc{hwang2025gibbsstatepreparationcommuting,
      title={Gibbs state preparation for commuting Hamiltonian: Mapping to classical Gibbs sampling}, 
      author={Yeongwoo Hwang and Jiaqing Jiang},
      year={2025},
      eprint={2410.04909},
      archivePrefix={arXiv},
      primaryClass={quant-ph},
      url={https://arxiv.org/abs/2410.04909}, 
}

@article{rajakumar2024gibbs,
  title={Gibbs sampling gives quantum advantage at constant temperatures with {O(1)-local Hamiltonians}},
  author={Rajakumar, Joel and Watson, James D},
  journal={arXiv preprint arXiv:2408.01516},
  year={2024}
}

@inproceedings{BCL24,
  title={Quantum computational advantage with constant-temperature {G}ibbs sampling},
  author={Thiago Bergamaschi and Chi-Fang Chen and Yunchao Liu},
  booktitle={2024 IEEE 65th Annual Symposium on Foundations of Computer Science (FOCS)},
  pages={1063--1085},
  year={2024},
  organization={IEEE},
  note={arXiv:2404.14639}
}

@article{Temme_2011,
   title={Quantum {M}etropolis sampling},
   volume={471},
   ISSN={1476-4687},
   url={http://dx.doi.org/10.1038/nature09770},
   DOI={10.1038/nature09770},
   number={7336},
   journal={Nature},
   publisher={Springer Science and Business Media LLC},
   author={Temme, K. and Osborne, T. J. and Vollbrecht, K. G. and Poulin, D. and Verstraete, F.},
   year={2011},
   month=mar, pages={87–90} }

@article{chen2023efficient,
  title={An efficient and exact noncommutative quantum Gibbs sampler},
  author={Chen, Chi-Fang and Kastoryano, Michael J and Gily{\'e}n, Andr{\'a}s},
  journal={arXiv preprint arXiv:2311.09207},
  year={2023}
}

@article{GilyenThermal23a,
  title={Quantum Thermal State Preparation},
  author={Chi-Fang Chen and Michael J. Kastoryano and Fernando Brand{\~a}o, Andr{\'a}s Gily{\'e}n},
  journal={arXiv preprint},
  year={2023},
  url={https://arxiv.org/abs/2303.18224},
}

@book{Tuckerman2010,
  title = {Statistical Mechanics: Theory and Molecular Simulation},
  author = {Tuckerman, Mark E.},
  year = {2010},
  publisher = {Oxford University Press}
}

@book{LandauBinder2014,
  title = {A Guide to {M}onte {C}arlo Simulations in Statistical Physics},
  author = {Landau, David P. and Binder, Kurt},
  edition = {4},
  year = {2014},
  publisher = {Cambridge University Press}
}

@article{Capel_2025,
   title={From Decay of Correlations to Locality and Stability of the Gibbs State},
   volume={406},
   ISSN={1432-0916},
   url={http://dx.doi.org/10.1007/s00220-024-05198-x},
   DOI={10.1007/s00220-024-05198-x},
   number={2},
   journal={Communications in Mathematical Physics},
   publisher={Springer Science and Business Media LLC},
   author={Capel, Ángela and Moscolari, Massimo and Teufel, Stefan and Wessel, Tom},
   year={2025},
   month=jan }

@misc{bostanci2025commutinglocalhamiltonians2d,
      title={Commuting Local Hamiltonians Beyond 2D}, 
      author={John Bostanci and Yeongwoo Hwang},
      year={2025},
      eprint={2410.10495},
      archivePrefix={arXiv},
      primaryClass={quant-ph},
      url={https://arxiv.org/abs/2410.10495}, 
}

@misc{bravyi2004commutativeversionklocalhamiltonian,
      title={Commutative version of the k-local Hamiltonian problem and common eigenspace problem}, 
      author={S. Bravyi and M. Vyalyi},
      year={2004},
      eprint={quant-ph/0308021},
      archivePrefix={arXiv},
      primaryClass={quant-ph},
      url={https://arxiv.org/abs/quant-ph/0308021}, 
}

@misc{kkr,
      title={The Complexity of the Local Hamiltonian Problem}, 
      author={Julia Kempe and Alexei Kitaev and Oded Regev},
      year={2005},
      eprint={quant-ph/0406180},
      archivePrefix={arXiv},
      primaryClass={quant-ph},
      url={https://arxiv.org/abs/quant-ph/0406180}, 
}

@inproceedings{aharonov2011complexity,
  title={On the complexity of commuting local Hamiltonians, and tight conditions for topological order in such systems},
  author={Aharonov, Dorit and Eldar, Lior},
  booktitle={2011 IEEE 52nd Annual Symposium on Foundations of Computer Science},
  pages={334--343},
  year={2011},
  organization={IEEE},
archivePrefix={arXiv},
eprint={1102.0770},
doi={10.1109/FOCS.2011.58}
}

@inproceedings{overlapping,
  doi = {10.4230/LIPICS.ITCS.2017.48},
  
  url = {https://drops.dagstuhl.de/entities/document/10.4230/LIPIcs.ITCS.2017.48},
  
  author = {Chao, Rui and Reichardt, Ben W. and Sutherland, Chris and Vidick, Thomas},
  
  language = {en},
  
  title = {Overlapping Qubits},
  
  journal = {LIPIcs, Volume 67, ITCS 2017},
  
  volume = {67},
  
  pages = {48:1-48:21},
  
  publisher = {Schloss Dagstuhl – Leibniz-Zentrum für Informatik},
  
  year = {2017},
  
  copyright = {Creative Commons Attribution 3.0 Unported license}
}

@article{low2018hamiltonian,
  title={Hamiltonian simulation in the interaction picture},
  author={Low, Guang Hao and Wiebe, Nathan},
archivePrefix={arXiv},
  eprint={1805.00675},
  year={2018}
}

@article{Hastings_2009,
   title={Making Almost Commuting Matrices Commute},
   volume={291},
   ISSN={1432-0916},
   url={http://dx.doi.org/10.1007/s00220-009-0877-2},
   DOI={10.1007/s00220-009-0877-2},
   number={2},
   journal={Communications in Mathematical Physics},
   publisher={Springer Science and Business Media LLC},
   author={Hastings, M. B.},
   year={2009},
   month=jul, pages={321–345} }

@article{dawson2006solovay,
  title={The Solovay-Kitaev algorithm},
  author={Dawson, Christopher M and Nielsen, Michael A},
  journal={Quantum Information \& Computation},
  volume={6},
  number={1},
  pages={81--95},
  year={2006},
  publisher={Rinton Press, Incorporated Paramus, NJ},
    archivePrefix={arXiv},
eprint={https://arxiv.org/pdf/quant-ph/0505030}
}

@misc{schuch2011complexitycommutinghamiltonianssquare,
      title={Complexity of commuting Hamiltonians on a square lattice of qubits}, 
      author={Norbert Schuch},
      year={2011},
      eprint={1105.2843},
      archivePrefix={arXiv},
      primaryClass={quant-ph},
      url={https://arxiv.org/abs/1105.2843}, 
}

@inproceedings{AKV,
  doi = {10.4230/LIPICS.TQC.2018.2},
  
  url = {https://drops.dagstuhl.de/entities/document/10.4230/LIPIcs.TQC.2018.2},
  
  author = {Aharonov, Dorit and Kenneth, Oded and Vigdorovich, Itamar},
  
  language = {en},
  
  title = {On the Complexity of Two Dimensional Commuting Local Hamiltonians},
  
  journal = {LIPIcs, Volume 111, TQC 2018},
  
  volume = {111},
  
  pages = {2:1-2:21},
  
  publisher = {Schloss Dagstuhl – Leibniz-Zentrum für Informatik},
  
  year = {2018},
  
  copyright = {Creative Commons Attribution 3.0 Unported license}
}

@book{Watrous_2018, place={Cambridge}, title={The Theory of Quantum Information}, publisher={Cambridge University Press}, author={Watrous, John}, year={2018}}

@article{childs2021theory,
  title={Theory of trotter error with commutator scaling},
  author={Childs, Andrew M and Su, Yuan and Tran, Minh C and Wiebe, Nathan and Zhu, Shuchen},
  journal={Physical Review X},
  volume={11},
  number={1},
  pages={011020},
  year={2021},
  publisher={APS}
}

@article{halmos1976some,
  title={Some unsolved problems of unknown depth about operators on Hilbert space},
  author={Halmos, Paul R},
  journal={Proceedings of the Royal Society of Edinburgh Section A: Mathematics},
  volume={76},
  number={1},
  pages={67--76},
  year={1976},
  publisher={Royal Society of Edinburgh Scotland Foundation}
}

@article{pearcy1979almost,
  title={Almost commuting matrices},
  author={Pearcy, Carl and Shields, Allen},
  journal={Journal of Functional Analysis},
  volume={33},
  number={3},
  pages={332--338},
  year={1979},
  publisher={Elsevier}
}

\appendix

\end{document}